\documentclass[runningheads]{llncs}
\usepackage[margin=1in]{geometry}

\usepackage[numbers]{natbib}

\usepackage[T1]{fontenc}
\usepackage{amsmath,amsfonts,amssymb}
\usepackage{mathtools}
\usepackage{hyperref}
\usepackage{nameref}
\usepackage{cleveref}
\usepackage[noend,ruled]{algorithm2e}
\usepackage{dsfont}
\usepackage{longtable}
\usepackage{graphicx}
\usepackage{xcolor}
\usepackage{caption}
\usepackage{tikz}
\usepackage{todonotes}
\usepackage{wrapfig}
\usepackage{caption} 
\crefname{algocf}{alg.}{algs.}
\Crefname{algocf}{Algorithm}{Algorithms}
\crefname{fact}{fact}{facts}
\Crefname{fact}{Fact}{Facts}
\spnewtheorem{fact}{Fact}{\itshape}{\normalfont}

\makeatletter
\let\c@lemma\c@theorem
\let\c@proposition\c@theorem
\let\c@corollary\c@theorem
\let\c@definition\c@theorem
\let\c@remark\c@theorem
\let\c@example\c@theorem
\let\c@fact\c@theorem
\makeatother

\let\origproof\proof
\let\origendproof\endproof
\renewenvironment{proof}{\origproof}{\leavevmode\unskip\nobreak\hfill\squareforqed\origendproof}

\newcommand{\Oh}{\mathcal{O}}
\DeclarePairedDelimiter\ceil{\lceil}{\rceil}
\DeclarePairedDelimiter\floor{\lfloor}{\rfloor}
\newcommand{\ignore}[1]{}
\newcommand{\planar}{{\textsc{Planar}}}

\DeclareMathOperator{\E}{\mathbb{E}}
\newcommand{\mname}{ARRV}
\newcommand{\adj}[3]{#1^{#3}_{#2}}

\newcommand{\lefty}[2]{\adj{#1}{#2}{-}}
\newcommand{\W}{\widehat{W}}

\newcommand{\qsarg}{\phi}
\newcommand{{\qsl}}{\texttt{QuantSel}}

\usepackage{tikz}
\usetikzlibrary{backgrounds}
\usetikzlibrary{arrows}
\usetikzlibrary{shapes,shapes.geometric,shapes.misc}

\tikzstyle{tikzfig}=[baseline=-0.25em,scale=0.5]

\pgfkeys{/tikz/tikzit fill/.initial=0}
\pgfkeys{/tikz/tikzit draw/.initial=0}
\pgfkeys{/tikz/tikzit shape/.initial=0}
\pgfkeys{/tikz/tikzit category/.initial=0}

\pgfdeclarelayer{edgelayer}
\pgfdeclarelayer{nodelayer}
\pgfsetlayers{background,edgelayer,nodelayer,main}

\tikzstyle{none}=[inner sep=0mm]

\newcommand{\tikzfig}[1]{%
{\tikzstyle{every picture}=[tikzfig]
\IfFileExists{#1.tikz}
  {\input{#1.tikz}}
  {\tikz[baseline=-0.5em]{\node[draw=red,font=\color{red},fill=red!10!white] {\textit{#1}};}}%
}%
}

\newcommand{\ctikzfig}[1]{%
\begin{center}\rm
  \input{#1.tikz}
\end{center}}

\tikzstyle{every loop}=[]

\tikzstyle{circle}=[fill=black, draw=black, shape=circle]
\tikzstyle{empty circle}=[fill=white, draw=black, shape=circle]
\tikzstyle{diamond}=[fill={rgb,255: red,213; green,213; blue,213}, draw=black, shape=rectangle, aspect=4, scale=0.4]
\tikzstyle{rect}=[fill=white, draw=black, shape=rectangle]
\tikzstyle{selected}=[fill={rgb,255: red,81; green,181; blue,0}, draw=black, shape=rectangle]

\tikzstyle{normal}=[-, fill=none]
\tikzstyle{dotted}=[-, dashed]
\tikzstyle{pointer}=[->]
\tikzstyle{measure}=[<->, draw={rgb,255: red,0; green,46; blue,255}]
\tikzstyle{probspace}=[draw={rgb,255: red,2; green,82; blue,255}, <->]
\tikzstyle{Wcandidate}=[->, draw={rgb,255: red,28; green,113; blue,0}]
\tikzstyle{nonWcandidate}=[draw={rgb,255: red,211; green,0; blue,0}, ->]
\tikzstyle{filled}=[-, fill={rgb,255: red,162; green,162; blue,162}, draw=none, tikzit draw={rgb,255: red,6; green,6; blue,6}]
\tikzstyle{measuredotted}=[draw={rgb,255: red,0; green,46; blue,255}, dashed, <->]
\tikzstyle{measuresingle}=[draw={rgb,255: red,0; green,46; blue,255}, ->]
\tikzstyle{dotted measure}=[-, draw={rgb,255: red,0; green,53; blue,245}, dashed]
\tikzstyle{pointerdotted}=[->, dashed, draw=red]
\tikzstyle{outerbox}=[-, draw={rgb,255: red,0; green,46; blue,255}]
\tikzstyle{innerbox}=[-, draw={rgb,255: red,200; green,80; blue,0}]

\begin{document}

\title{Multiwinner Voting with Spatial Preferences under Incomplete Information}

\author{Drew Springham\inst{1} \and Edith Elkind\inst{2} \and Bart de Keijzer\inst{1} \and Maria Polukarov\inst{1}}
\authorrunning{D. Springham et al.}
\institute{King's College London, London, UK \and Northwestern University, Evanston, IL, USA}

\maketitle

\begin{abstract}
In multiwinner elections with many candidates, as in participatory budgeting or large-scale recommendation, voters cannot plausibly evaluate every candidate, yet standard proportional-fairness guarantees such as EJR+ are stated for fully specified approval ballots.
We ask whether strong proportional representation can still be guaranteed while eliciting only a little from each voter.
We study this in a spatial model, the \emph{Axis-aligned Random Rectangle Voter} (\mname) model, in which candidates occupy a $d$-dimensional issue space and each voter approves an axis-aligned hyper-rectangle: a tolerance interval on every issue.
Preferences are revealed only through \planar\ queries, each comparing a voter's tolerance to a candidate on a single issue.
We give an algorithm returning an EJR+ committee for \emph{any} distribution over rectangular preferences, using only $\Oh(d\log dk)$ \planar\ queries per voter in expectation given a sufficiently large electorate, independent of the number of candidates $m$, where $d$ is the number of issues and $k$ the committee size.
The algorithm rests on a dimension-agnostic \emph{verify-or-fallback} framework whose query cost is governed by two properties supplied by interchangeable modules.
We describe such modules, yielding end-to-end guarantees for known, unknown, and smooth distributions.
\keywords{Multiwinner voting \and Proportional representation \and Incomplete information \and Spatial preferences}
\end{abstract}

%%%%%%%%%%%%%%%%%%%%%%%%%%%%%%%%%%%%%%%%%%%%%%%%%%%%%
\section{Introduction}
Multiwinner elections arise in settings where the pool of candidates is so large that voters cannot reasonably assess every option, such as participatory budgeting, recommender systems, and large-scale committee selection. Real-world participatory-budgeting elections have featured more than 150 candidate projects \citep{Faliszewski2023ParticipatoryAnalysis}, while Pol.is (an online platform where users vote on participant-submitted comments to surface representative viewpoints \citep{small2021polis}) has hosted debates, such as a citizens' assembly on climate legislation in Austria, generating more than 1{,}000 comments \citep{TheComputationalDemocracyProject2022TheAustria}. At this scale it is unrealistic to expect voters to evaluate every candidate.
Yet the standard axiomatic guarantees in multiwinner voting, such as the proportional fairness notions PJR+ and EJR+~\cite{DBLP:conf/sigecom/Brill023}, are stated for fully specified approval ballots.
This raises a question: \emph{can we guarantee strong proportional fairness while eliciting a small amount of information from each voter?}

We study this question in a spatial model.
Candidates and voters live in an \emph{issue space} $[0,1]^d$: each of the $d$ dimensions can be thought of as an issue, each candidate has a fixed position on every issue, and a voter approves a candidate exactly when it is acceptable to her on every issue. 
Concretely, each voter's acceptable region is an axis-aligned hyper-rectangle: on each issue she has a lower and an upper tolerance, and she approves the candidates that fall within her range on \emph{every} issue.
As an illustration, in participatory budgeting each project sits in a space of policy issues (say, environmental--economic and centralised--community-led), and a resident supports the projects that are not too extreme for her on any issue, i.e.\ those lying within her acceptable band on every axis.
We call the resulting random model the \emph{Axis-aligned Random Rectangle Voter} (\mname) model. 
To elicit preferences we use \planar\ queries: a \planar\ query fixes one issue and asks whether the voter's entire acceptable range lies to one side of a given threshold on that issue, e.g.\ \emph{``would you reject every project more economically focused than candidate $c$?''}

\subsubsection{Our results and techniques.}
Our main contribution is an affirmative answer to the question above.
In the \mname\ model, we guarantee an EJR+ committee for \emph{any} distribution over rectangular preferences while eliciting, on average, only $\Oh(d\log dk)$ \planar\ queries per voter, in expectation \emph{independent of the number of candidates $m$}, where $d$ is the number of issues and $k$ the committee size.
The key component is a dimension-agnostic \emph{verify-or-fallback} framework, which we pair with interchangeable modules to obtain end-to-end guarantees across known, unknown, and smooth distributions.
Concretely:
\begin{itemize}
    \item \textbf{A verify-or-fallback framework (\Cref{sec:ddimframework}).}
    Given a guessed committee and per-issue estimates of the preference distribution, the framework elicits every voter on a small, fixed set of candidates and checks whether the guess is certifiably EJR+.
    If the check passes it returns that committee at only $\Oh(d\log dk)$ \planar\ queries per voter; otherwise it elicits the voters in full and runs a full-information EJR+ rule.
    Correctness is therefore \emph{unconditional} (the outcome is always EJR+) while the cheap cost is achieved exactly when two properties hold: the guessed committee is in fact likely to satisfy EJR+ (property~(PW)), and the per-issue distribution estimates are accurate (property~(PF)).
    \item \textbf{Modules supplying these two properties (\Cref{sec:wselection,sec:festimation}).}
    We give interchangeable modules that establish what the framework relies on: W-selection modules that produce a committee likely to satisfy EJR+ ((PW)), and $\widehat{F}$-estimation modules that estimate the per-issue distributions ((PF)), for both known and unknown distributions.
    Combined with the framework (\Cref{sec:synthesis}, \Cref{tab:results}), they yield EJR+ committees at an amortized $\Oh(d\log dk)$ queries per voter once the electorate is large enough; when the distribution is unknown, a trade-off appears between the size of electorate and how tightly we cap each voter's queries: capping \emph{every} voter's load to $\Oh(d\log dk)$ with high probability can require an electorate as large as $m\log m$, linear in the candidate count.
    \item \textbf{A smooth special case (\Cref{sec:lipschitz}).}
    Under a reasonable smoothness (Lipschitz) assumption on the unknown distribution, we do better: each sampled voter is elicited on a candidate set sized by the smoothness rather than by $m$, so the electorate depends on $m$ only as $\log\log m$, against the $m\log m$ the general unknown case needs for a comparable guarantee.
\end{itemize}

\subsubsection{Related work.}
Approval-based multiwinner voting has been studied extensively (see \citep{DBLP:series/sbis/LacknerS23} for an overview), and a range of proportionality axioms have been proposed: JR \citep{DBLP:journals/scw/AzizBCEFW17}, PJR \citep{DBLP:journals/ai/SanchezFernandezELGFBS26}, EJR \citep{DBLP:journals/scw/AzizBCEFW17}, FJR \citep{DBLP:conf/sigecom/PetersS20}, Core \citep{DBLP:journals/scw/AzizBCEFW17}, and the strengthenings PJR+ and EJR+ \citep{DBLP:conf/sigecom/Brill023}, most under the Hare quota, though the Droop quota has also been considered \citep{DBLP:conf/wine/CaseyE25}.
We target EJR+ \citep{DBLP:conf/sigecom/Brill023}, guaranteed by rules such as GJCR \citep{DBLP:conf/sigecom/Brill023}, which we adapt for committee selection and use as the fallback in our framework, and the Method of Equal Shares \citep{DBLP:conf/sigecom/PetersS20}.
Whether a Core outcome always exists remains open \citep{DBLP:journals/scw/AzizBCEFW17,DBLP:series/sbis/LacknerS23}.

We assume voters have structured, spatial preferences: each approves an axis-aligned rectangle in $d$-dimensional issue space \citep{enelow1984spatial,repec:elg:eechap:15584_10}.
In one dimension this recovers Candidate-Interval preferences \citep{DBLP:conf/ijcai/ElkindL15}; while such structure can make some rules tractable, many remain NP-hard already at $d=2$ \citep{DBLP:conf/aaai/GodziszewskiB0F21}.

We connect preference elicitation to proportional representation.
Existing elicitation work in multiwinner voting instead optimises other objectives (minimax regret \citep{DBLP:conf/ijcai/LuB13}, diversity \citep{DBLP:journals/corr/abs-2506-10843}, or social welfare \citep{DBLP:conf/sigecom/MandalSW20}), and elicitation has been studied for single-winner \citep{DBLP:conf/aaai/ConitzerS02a,DBLP:conf/sigecom/MeirLR14,DBLP:conf/atal/DeyB15,DBLP:conf/uai/ZhaoLWKMSX18} and ordinal rules, where information-theoretic limits constrain which rules can be computed from few per-voter queries \citep{DBLP:conf/sigecom/0002HT24}.
A complementary line takes partial approval ballots as given, without querying, asking which candidates are \emph{possibly} or \emph{necessarily} in a proportional committee \citep{DBLP:journals/ai/ImberIBK25}, including when voters and candidates lie in $d$-dimensional Euclidean space \citep{DBLP:conf/aaai/ImberIBSK24}; we instead actively \emph{query} voters and take a distributional rather than worst-case view.

The closest work to ours is \cite{DBLP:journals/corr/abs-2510-11625}, which studies the one-dimensional case (CI preferences under the \emph{Random Interval Voter} model) and achieves proportional representation at $\mathcal{O}(\log k)$ interval queries per voter in expectation.
Our work improves on it in three respects: it targets mainly the structured RIV model rather than a general joint distribution over endpoints; it attains only PJR+ and $2$-EJR+ on the RIV family and $2$-PJR+ for general one-dimensional distributions (where $2$-EJR+ and $2$-PJR+ weaken the group-size threshold from $n\ell/k$ to $2n\ell/k$) rather than exact EJR+; and it assumes the distribution is known exactly.
Our general W-selection modules address all three limitations, and \Cref{sec:onedim} gives a particularly clean one-dimensional rule that generalises its quantile construction to exact EJR+.
Also targeting justified representation through voter queries, \citep{Halpern2026} consider unrestricted approval preferences and give adaptive algorithms achieving $(1+\varepsilon)$-approximate EJR together with lower bounds on the number of voters sampled.
Exploiting spatial structure and a richer query model, we instead return an \emph{exact} EJR+ committee whose per-voter query complexity is, in expectation and with high probability, independent of $m$, whereas there the query count grows linearly in $m$.
Finally, the bounded-density (Lipschitz) assumption we use in \Cref{sec:lipschitz} parallels fair division, where smoothness of the density likewise controls computational efficiency: under a monotone-likelihood-ratio (hence Lipschitz) condition, envy-free divisions can be computed to arbitrary precision and Nash social welfare admits an FPTAS \citep{DBLP:journals/mor/BarmanR22}, with the smoothness constant entering only logarithmically, as in our net and grid construction.
\subsubsection{Outline}
In \Cref{sec:prelims}, we introduce notation and definitions that will be used throughout. In \Cref{sec:ddimframework}, we formally introduce the ARRV random voter model, the \qsl\ algorithm, the framework, the conditions (PW), (PF), and prove the framework's correctness and query complexity under these conditions. In \Cref{sec:wselection}, we discuss methods for creating a committee that satisfies (PW), both when the underlying distribution is known and unknown. In \Cref{sec:festimation}, we discuss how to estimate the marginal cumulative distribution functions in each dimension,  satisfying (PF). We then discuss how to achieve (PW),(PF) in the special case when the underlying distribution is Lipschitz in \Cref{sec:lipschitz}. We then demonstrate in \Cref{sec:synthesis} how these methods can be combined. We conclude in \Cref{sec:conclusion}.
%%%%%%%%%%%%%%%%%%%%%%%%%%%%%%%%%%%%%%%%%%%%%%%%%%%%%
\section{Preliminaries}
\label{sec:prelims}

All our indexing of ordered sets (or lists) starts at 0.
We write $X[i:j]\vcentcolon=\{X[t]:i\leq t <j\}$.
We use the notation $[t]\vcentcolon=\{i\in \mathbb{N}:1\leq i \leq t\}$ and $\mathds{1}[\Phi]$ as the indicator function that event $\Phi$ has occurred.
We say an event occurs \emph{with high probability (w.h.p.)} when its failure probability is $o(1)$ as $m\to\infty$; every such guarantee in this paper achieves $O(1/\log m)$.
\begin{definition}[Approval-based MWV election]
An \emph{MWV election} is a tuple $E=(V,C,k,A)$, where $V$ is a set of $n$ voters, $C$ is a set of $m\geq 2$ candidates, $k\in \mathbb{N}$ is a total number of candidates to elect, and $A:V\rightarrow 2^C$ is a function that maps each voter to the set of candidates she approves.
For each $H\subseteq V$, we write $A(H)=\bigcap_{v\in H}A(v)$.
The primary task associated with an MWV election is to select a set, or committee, of {\em winners} $W\subseteq C$ with $|W|=k$.
\end{definition}

\begin{definition}[EJR+]
A committee $W$ satisfies \emph{EJR+} \cite{DBLP:conf/sigecom/Brill023} if $|W|\leq k$ and for every $\ell \in [k]$ and every group $H\subseteq V$ that satisfies $|H|\geq n\ell /k$ and $A(H)\setminus W\neq\varnothing$, some voter $v\in H$ has $|W\cap A(v)|\geq \ell$.
\end{definition}
The weaker notion PJR+~\cite{DBLP:conf/sigecom/Brill023}, which for every such group requires only $|W\cap \bigcup_{v\in H}A(v)|\geq \ell$, is implied by EJR+; in this work we study EJR+.
We assess a committee through its \emph{cohesive groups} $N_{c,\ell}\coloneqq\{v : c\in A(v)\wedge|A(v)\cap W|<\ell\}$ for $c\in C\setminus W$ and $\ell\in[k]$: so $W$ provides EJR+ if and only if $|N_{c,\ell}|<n\ell/k$.
Our fallback elicits every voter in full and then runs the Greedy Justified Candidate Rule (GJCR, \Cref{alg:GJCR} in \Cref{app:algorithms}, \citep{DBLP:conf/sigecom/Brill023}): iterating $\ell$ from $k$ down to $1$, it adds any $c\in C\setminus W$ with $|N_{c,\ell}|\geq n\ell/k$ to $W$; we defer its role in the framework to \Cref{sec:ddimframework}.
\begin{fact}[\cite{DBLP:conf/sigecom/Brill023}]
\label{fact:GJCR}
Given an MWV election specified by full approval ballots, GJCR returns a committee of size $k$ satisfying EJR+.
\end{fact}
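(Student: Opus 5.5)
The plan is to analyse GJCR as a greedy process, establishing two invariants: the committee never exceeds $k$ candidates, and when the rule stops no violating group $N_{c,\ell}$ survives. Completeness of the approval ballots matters only because it lets us evaluate every $|N_{c,\ell}|$ exactly. GJCR runs $\ell=k,k-1,\dots,1$. In round $\ell$ it repeatedly adds some $c\in C\setminus W$ with $|N_{c,\ell}|\geq n\ell/k$, where $N_{c,\ell}$ is computed with respect to the current $W$. The round ends when no such $c$ remains.

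\textbf{Step 1 (termination and EJR+ of the output).} Each addition strictly enlarges $W\subseteq C$, so the process terminates. Fix $\ell$ and a candidate $c\notin W_{\mathrm{final}}$. At the end of round $\ell$ we have $|N_{c,\ell}|<n\ell/k$ for the committee $W_\ell$ held at that moment. Later rounds only enlarge $W$, and enlarging $W$ can only increase each $|A(v)\cap W|$, so $N_{c,\ell}$ can only shrink. Hence $|N_{c,\ell}|<n\ell/k$ also holds for $W_{\mathrm{final}}$. By the cohesive-group characterisation stated just before the fact, this is exactly EJR+, provided we also have $|W_{\mathrm{final}}|\le k$.

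\textbf{Step 2 (size bound, the main obstacle).} I would use a charging argument. When $c$ is added in round $\ell$, assign each voter of $N_{c,\ell}$ a load of $1/|N_{c,\ell}|\le k/(n\ell)$. The key claim is that every voter receives total load at most $k/n$. Consider a voter $v$ who is charged $t$ times overall. Each charge adds a candidate $v$ approves, so $|A(v)\cap W|$ increases by one with every charge. Let the $j$-th charge, for $j=1,\dots,t$, occur in round $\ell_j$. Just before that charge $v\in N_{\cdot,\ell_j}$, which means $|A(v)\cap W|<\ell_j$. But $v$ has already been charged $j-1$ times, so $|A(v)\cap W|\ge j-1$. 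Together these give $\ell_j\ge j$. The total load on $v$ is therefore at most $\sum_{j=1}^t k/(n\ell_j)$. Since rounds proceed in decreasing order of $\ell$, we have $\ell_1\ge\ell_2\ge\dots\ge\ell_t\ge t$, so every term satisfies $k/(n\ell_j)\le k/(nt)$. Hence the total load is at most $t\cdot k/(nt)=k/n$. Each addition distributes exactly one unit of load, so the number of additions is at most $n\cdot k/n=k$, i.e. $|W|\le k$.

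\textbf{Step 3 (exactly $k$).} If GJCR stops with $|W|<k$, pad $W$ with arbitrary unelected candidates; this is possible since $m\ge k$ is implicitly required. Padding only enlarges $W$, so by the monotonicity used in Step 1 the EJR+ condition is preserved, and the result is a committee of size exactly $k$.
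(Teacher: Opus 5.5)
Your proof is correct and takes essentially the same route as the argument behind this fact. The paper cites Brill and Peters rather than proving it, but its proof of Theorem~\ref{thm:pgjcr} runs the same budget/charging argument: EJR+ follows from the stopping condition together with monotonicity of $N_{c,\ell}$ in $W$, and $|W|\le k$ follows from splitting a unit cost proportionally over the contributing voters, bounding each voter's total load by $k/n$ via the decreasing order of levels. Your Step~3 is a welcome extra, since the GJCR listing (\Cref{alg:GJCR}) by itself only guarantees $|W|\le k$, so the ``size $k$'' in the statement does need the padding you supply, which preserves EJR+ by the same monotonicity.
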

Our query-complexity bounds rely on two standard tail inequalities.
The first controls the deviation of a sum of bounded independent variables; the second inequality controls the \emph{uniform} error of an empirical CDF.
\begin{fact}[Hoeffding's inequality \citep{hoeffding1963probability}]
\label{fact:hoeffding}
Let $X_1,\dots,X_\nu$ be independent random variables with $X_i\in[0,1]$.
Then for every $t>0$,
$
    \Pr\left[\left|\sum_{i=1}^\nu X_i-\E\sum_{i=1}^\nu X_i\right|\geq \nu t\right]\leq 2e^{-2\nu t^2}
    .
$
\end{fact}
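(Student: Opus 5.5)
The plan is the standard Chernoff--Cramér argument: exponentiate, apply Markov's inequality, and bound the moment generating function of each summand using only boundedness. Write $S=\sum_{i=1}^\nu X_i$ and $\mu=\E S$. We handle the upper tail $\Pr[S-\mu\geq \nu t]$ in full. The lower tail follows by applying the same argument to the variables $1-X_i$, which are again independent and $[0,1]$-valued. A union bound over the two tails then gives the factor $2$.

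\textbf{Exponentiation and Markov.} For any $\lambda>0$, Markov's inequality applied to the nonnegative variable $e^{\lambda(S-\mu)}$ gives
\[
\Pr[S-\mu\geq \nu t]\leq e^{-\lambda\nu t}\,\E\bigl[e^{\lambda(S-\mu)}\bigr].
\]
By independence, the expectation factorises as $\prod_{i=1}^\nu \E\bigl[e^{\lambda(X_i-\E X_i)}\bigr]$. So it suffices to bound a single factor.

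\textbf{Hoeffding's lemma (the main step).} For a random variable $Y$ with $\E Y=0$ and $Y\in[a,b]$, we aim to show $\E e^{\lambda Y}\leq e^{\lambda^2(b-a)^2/8}$.
\begin{itemize}
\item By convexity of $y\mapsto e^{\lambda y}$ on $[a,b]$, we have $e^{\lambda y}\leq \frac{b-y}{b-a}e^{\lambda a}+\frac{y-a}{b-a}e^{\lambda b}$.
\item Taking expectations and using $\E Y=0$ gives $\E e^{\lambda Y}\leq e^{L(h)}$, where $h=\lambda(b-a)$, $p=-a/(b-a)$, and $L(h)=-hp+\log(1-p+pe^{h})$.
\item We have $L(0)=L'(0)=0$. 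Moreover, $L''(h)=q(1-q)\leq 1/4$, where $q=pe^h/(1-p+pe^h)$.
\item Taylor's theorem with remainder then yields $L(h)\leq h^2/8$.
\end{itemize}
This calculus step is the only nontrivial part. The point to check is that $L''$ is exactly a Bernoulli variance, which is what gives the bound $1/4$. Applying the lemma with $Y=X_i-\E X_i$, whose range has length at most $1$, bounds each factor by $e^{\lambda^2/8}$.

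\textbf{Optimising $\lambda$.} Combining the previous two steps,
\[
\Pr[S-\mu\geq \nu t]\leq \exp\bigl(-\lambda\nu t+\nu\lambda^2/8\bigr).
\]
The exponent is minimised at $\lambda=4t$, giving $e^{-2\nu t^2}$. The symmetric bound for the lower tail, together with the union bound, gives the stated $2e^{-2\nu t^2}$.
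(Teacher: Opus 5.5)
Your proof is correct and is essentially the standard argument: the paper gives no proof of its own and simply cites Hoeffding (1963), whose proof is the same Markov/Chernoff bound combined with Hoeffding's lemma (convexity bound, $L''(h)=q(1-q)\leq 1/4$, $\lambda=4t$, and $X_i\mapsto 1-X_i$ for the lower tail). The one case you leave unaddressed is $a=b$ in the lemma, where $p$ is undefined, but there $Y\equiv 0$ and the bound is trivial.
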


\begin{fact}[Dvoretzky--Kiefer--Wolfowitz (DKW) \citep{Massart1990}]
\label{fact:dkw}
Let $\widehat{F}_\nu$ be the empirical CDF of $\nu$ i.i.d. samples drawn from a distribution with CDF $F$.
Then for every $\varepsilon>0$,
$
    \Pr\left[\sup_{x\in\mathbb{R}}\,|\widehat{F}_\nu(x)-F(x)|\geq\varepsilon\right]\leq 2e^{-2\nu\varepsilon^2}.
$
\end{fact}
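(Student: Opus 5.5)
The statement is the classical Dvoretzky--Kiefer--Wolfowitz inequality with Massart's tight constant, which the paper imports from \citep{Massart1990}. I would not re-derive it inside the paper. If a self-contained argument were wanted, I would follow Massart's route as outlined below; the hardest step is only sketched, so this is a plan rather than a complete proof.

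\textbf{Step 1: reduce to the uniform case.} Let $U_1,\dots,U_\nu$ be i.i.d.\ uniform on $[0,1]$ and let $F^{-1}$ be the generalised inverse of $F$. Then the samples can be realised as $X_i=F^{-1}(U_i)$, and $X_i\le x$ holds if and only if $U_i\le F(x)$. Hence $\widehat F_\nu(x)=\widehat G_\nu(F(x))$, where $\widehat G_\nu$ is the empirical CDF of the $U_i$. It follows that $\sup_x|\widehat F_\nu(x)-F(x)|\le\sup_{t\in[0,1]}|\widehat G_\nu(t)-t|$, with equality when $F$ is continuous. So it suffices to treat the uniform case.

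\textbf{Step 2: split into one-sided bounds.} If $2e^{-2\nu\varepsilon^2}\ge 1$ the claim is trivial, so assume $e^{-2\nu\varepsilon^2}<1/2$. Write $D^+=\sup_t(\widehat G_\nu(t)-t)$ and $D^-=\sup_t(t-\widehat G_\nu(t))$. The map $U_i\mapsto 1-U_i$ shows that $D^+$ and $D^-$ have essentially the same distribution. A union bound therefore reduces the claim to the one-sided inequality
\[
\Pr[D^+\ge\varepsilon]\le e^{-2\nu\varepsilon^2}\quad\text{whenever } e^{-2\nu\varepsilon^2}\le 1/2 .
\]

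\textbf{Step 3: prove the one-sided bound (the main obstacle).} A crude route would use Hoeffding's inequality (\Cref{fact:hoeffding}) on a finite grid of $t$ values together with a union bound. That only gives $C\nu\, e^{-2\nu\varepsilon^2}$ or a worse exponent, not the constant $1$. For the sharp constant I would start from the exact Smirnov/Birnbaum--Tingey formula:
\[
\Pr[D^+\ge\varepsilon]=\varepsilon\sum_{j=0}^{\lfloor \nu(1-\varepsilon)\rfloor}\binom{\nu}{j}\Bigl(1-\varepsilon-\tfrac{j}{\nu}\Bigr)^{\nu-j}\Bigl(\varepsilon+\tfrac{j}{\nu}\Bigr)^{j-1}.
\]
The limiting tail of the Brownian bridge supremum is exactly $e^{-2\lambda^2}$ with $\lambda=\sqrt\nu\,\varepsilon$. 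The goal is to show the finite-$\nu$ sum is dominated by this limit. I would do this by comparing each summand to a Gaussian-type integrand, using log-concavity of the terms and a careful Stirling-type estimate, and then summing.

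The restriction $e^{-2\nu\varepsilon^2}\le1/2$ is exactly where Massart's monotonicity-in-$\nu$ comparison goes through. Establishing that domination cleanly is the technical heart of \citep{Massart1990}, and it is the step I expect to be genuinely hard.
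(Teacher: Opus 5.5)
Your proposal is correct and matches the paper: the paper gives no proof of this Fact and simply imports it from \citep{Massart1990}, exactly as you propose. Your optional Steps 1--2 (the quantile-transform reduction to the uniform case, and the symmetric union bound reducing to the one-sided bound when $e^{-2\nu\varepsilon^2}\le 1/2$) are sound; Step 3 is, as you acknowledge, only a pointer to Massart's argument rather than a proof, which is fine because the citation carries that step, just as it does in the paper.
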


%%%%%%%%%%%%%%%%%%%%%%%%%%%%%%%%%%%%%%%%%%%%%%%%%%%%%
\section{The Model and the Verification Framework}
\label{sec:ddimframework}
We now present our spatial model and the verify-or-fallback framework underlying our results.
Both are stated for general dimension $d$; the one-dimensional case (\Cref{sec:onedim}) is the special case $d=1$.
We consider $d$-dimensional spatial preferences, where voters face multiple issues and have preferences over each.
We focus on voters with {\em Axis-Aligned Hyper-Rectangular (AAHR) preferences}:
\begin{definition}
We say that a multiwinner voting election is $d$-dimensional if $C\subset[0,1]^d$ and if for each voter $v\in V$, there exists some region $A^*(v)\subseteq [0,1]^d$ such that $A(v)=C\cap A^*(v)$.
In $d$-dimensional space, for any $\underline{a}\leq\underline{b}\in[0,1]^d$, $\prod_{i\in[d]}[a_i,b_i]=\{\underline{x}\in[0,1]^d:\forall i\in[d],a_i\leq x_i\leq b_i\}$ is called an Axis Aligned Hyper-Rectangle.
For a $d$-dimensional election, we say that voters have {\em Axis Aligned Hyper-Rectangular (AAHR) preferences} if for all $v\in V$, $A^*(v)$ is an axis aligned hyper-rectangle.
\end{definition}
The interpretation for AAHR preferences is that we have $d$ issues, and on each issue, a voter has a minimum and maximum range on that issue that they find tolerable.
Candidates have a fixed position on each of these issues, and thus a voter approves of the candidates who are tolerable to that voter on each issue.
We model voter preferences using the {\em Axis-aligned Random Rectangle Voter model} (\mname):
\begin{definition}[\mname]
\label{def:generald}
A finite set $C\subset [0,1]^d$, a probability distribution $\mathcal{D}$ over $d$-dimensional axis-aligned hyper-rectangles,
and $n\in\mathbb{N}$ define an Axis-aligned Random Rectangle Voter model
(\mname) as follows: we draw $n$ samples $A^*(1), \dots, A^*(n)\sim \mathcal{D}$,
let $V=[n]$, and define the approval set of voter $v\in V$
as $A(v)=A^*(v)\cap C$.
\end{definition}
For a committee $W$, we write $p_{c,\ell}\coloneqq\Pr_{A^*(v)\sim\mathcal{D}}[v\in N_{c,\ell}]$ for the corresponding single-voter probability; as voters are i.i.d., $\E|N_{c,\ell}|=n\,p_{c,\ell}$. Note that $p_{c,\ell}$ depends on $W$ and is non-increasing as $W$ grows (adding candidates shrinks $N_{c,\ell}$); in module proofs, $p_{c,\ell}$ refers to the final returned committee unless a superscript $p^W_{c,\ell}$ specifies otherwise.
Each voter's approval set is determined by per-dimension endpoints $a^i_v\leq b^i_v$ with $A^*(v)=\prod_{i\in[d]}[a^i_v,b^i_v]$.
Write
$
    F^i_a(x)\coloneqq\Pr[a^i_v\leq x], F^i_b(x)\coloneqq\Pr[b^i_v\leq x]
$
for the per-dimension CDFs of the lower and upper endpoints.

We elicit preferences through \planar\ queries, which tests a voter's approval rectangle against an axis-aligned threshold on one issue.
\begin{definition}[\planar\ query]
\label{def:planar}
Given a candidate $c\in C$, a voter $v\in V$, an axis $i\in[d]$, and a direction $\lhd\in\{< ,> \}$, the query \planar$(c, v,i,\lhd)$ returns $\mathds{1}[A^*(v)\subseteq \{\underline{x}\in [0,1]^d:x_i\lhd c_i\}]$.
\end{definition}
\planar$(c, v,i,\lhd)$ effectively asks voter $v$ ``Is your entire acceptable range on issue $i$ located to the left/right ($\lhd$) of $c$'s position on issue $i$?''
A query may probe only the coordinate of an \emph{actual candidate}, never an arbitrary threshold; this keeps the elicitation model strictly weaker and more realistic than one permitting free thresholds.
We now introduce the sub-routine that determines which candidate coordinates to query; it is reused in both the verification framework and the W-selection modules.
\subsubsection{The {\qsl} function}
\label{sec:quantsel}
Throughout the paper we rely on the function {\qsl}, which places a small set of candidates at evenly-spaced quantile levels of a CDF.
Given a candidate set $C$, a spacing $\qsarg>0$, and a CDF $G$, ${\qsl}(C,\qsarg,G)$ (\Cref{alg:quantile-select}) scans the quantile levels $\qsarg,2\qsarg,\ldots$ from left to right and, at each level, selects the still-available candidate with the smallest $G$-value at or above that level, breaking ties by an arbitrary fixed rule (\Cref{rem:strictmono}).
We shall reuse {\qsl} in several places later; we state here a bound that the verification framework (\Cref{sec:ddimverify}) and Lipschitz estimator (\Cref{sec:lipshitz}) rely on.

\begin{lemma}[Coverage]
\label{lem:lefty}
Let $S={\qsl}(C,\qsarg,G)$ and $c\in C\setminus S$.
Either there exists $u\in S$ with $G(u)\leq G(c)<G(u)+\qsarg$ or $G(c)<\qsarg$.
\end{lemma}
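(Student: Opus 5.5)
We argue directly from the greedy scan that defines \qsl. Fix $c\in C\setminus S$ with $G(c)\ge\qsarg$; we must produce $u\in S$ with $G(u)\le G(c)<G(u)+\qsarg$. Let $j\ge 1$ be the largest integer with $j\qsarg\le G(c)$, so $j\qsarg\le G(c)<(j+1)\qsarg$. Since $G$ is a CDF, $G(c)\le 1$, so level $j\qsarg$ is among the levels the scan visits. If the scan instead stops at the last level $\le 1$, any edge case $G(c)=1$ is handled the same way.

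\textbf{Key step: at level $j\qsarg$ a candidate is selected whose $G$-value lies in $[j\qsarg,G(c)]$.} When the scan reaches level $j\qsarg$, $c$ is still available, because $c\notin S$ means it is never removed. Moreover $c$ qualifies at that level, since $G(c)\ge j\qsarg$. Hence the set of available candidates with $G$-value at least $j\qsarg$ is nonempty, and the scan selects the one, $u_j$, minimising $G$ among them. Minimality gives $j\qsarg\le G(u_j)\le G(c)$. Then
\[
G(c)<(j+1)\qsarg=j\qsarg+\qsarg\le G(u_j)+\qsarg,
\]
so $u=u_j\in S$ satisfies $G(u)\le G(c)<G(u)+\qsarg$.

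The only delicate point is confirming the semantics of \Cref{alg:quantile-select}. In particular, we need that the scan does not terminate before level $j\qsarg$, for example by running out of candidates or levels. Running out of candidates cannot happen, since $c$ itself remains available. Running out of levels cannot happen either, since $j\qsarg\le G(c)\le 1$. We also need that ties broken by the fixed rule of \Cref{rem:strictmono} still yield some $u_j$ of minimal $G$-value; the argument uses only this minimality, so any tie-breaking suffices. If the algorithm instead skipped a level whenever the minimiser had already been chosen, the same conclusion would hold using that earlier-selected candidate, whose $G$-value is likewise in $[j\qsarg,G(c)]$.
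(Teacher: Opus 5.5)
Your core argument is exactly the paper's. You take $j=\lfloor G(c)/\qsarg\rfloor$, note that $c$ is still available and lies in $Q_j$, and use minimality of the selected $u$ to get $G(u)\le G(c)<(j+1)\qsarg\le G(u)+\qsarg$.

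The step you singled out as delicate is not correct as you argue it. The loop in \Cref{alg:quantile-select} runs only over $r=1,\dots,\lceil 1/\qsarg\rceil-1$, and $\lceil 1/\qsarg\rceil-1<1/\qsarg$, so every scanned level satisfies $r\qsarg<1$. Your inference ``$j\qsarg\le G(c)\le 1$, hence level $j\qsarg$ is visited'' therefore fails when $1/\qsarg\in\mathbb{N}$ and $G(c)=1$, because then $j=1/\qsarg$ is never reached. Your hedge about a scan stopping at ``the last level $\le 1$'' describes a different loop from the one in the algorithm. Nor can this case be ``handled the same way'': the lemma is false there. Take $d=1$, $G(x)=x$, $\qsarg=1/2$ and $C=\{1/2,1\}$. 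The single step $r=1$ selects $u=1/2$, and $c=1\notin S$ satisfies neither $G(c)<\qsarg$ nor $G(c)<G(u)+\qsarg$.

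The paper's proof passes over the same point (it simply asserts that step $r^*$ occurs), so this is a defect of the statement that you share with the paper, not a flaw in your approach. Adding the hypothesis $G(c)<\lceil 1/\qsarg\rceil\qsarg$ gives $j\le\lceil 1/\qsarg\rceil-1$, and then your proof is complete; the hypothesis is automatic when $G(c)<1$ or $1/\qsarg\notin\mathbb{N}$. Extending the loop to $r=\lfloor 1/\qsarg\rfloor$ would also fix it. However, it would let the one-dimensional rule of \Cref{thm:1dpry}, where $1/\qsarg=k+1$, select $k+1$ candidates.
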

In words, $S$ leaves no gap wider than $\qsarg$ in $G$-value: every unselected candidate either lies in the bottom band below the first level $\qsarg$, or is covered from below by a selected point within $\qsarg$ of its own $G$-value.
\begin{proof}
Let $r^*=\lfloor G(c)/\qsarg\rfloor$.
If $r^*=0$ then $G(c)<\qsarg$ and we are done.
Otherwise, at step $r^*$, candidate $c$ is still available ($c\notin S$) and satisfies
$G(c)\geq r^*\qsarg$, so $c\in Q_{r^*}$.
The candidate $u$ selected at step $r^*$ minimises $G$ over $Q_{r^*}$, so
$G(u)\leq G(c)$.
Since $G(c)<(r^*+1)\qsarg$, we get $G(c)-G(u)\leq G(c)-r^*\qsarg<\qsarg$.
\end{proof}

\subsection{Verification framework}
\label{sec:ddimverify}
We now state the verify-or-fallback framework in full generality.
Before the formal statement, we trace the procedure end-to-end.
The framework (\Cref{alg:multidim}) receives a guessed committee $\W$, per-dimension CDF estimates $\widehat{F}^i_a$, $\widehat{F}^i_b$ (the true CDFs when the distribution is known, or the output of an estimation module otherwise), and a quantile spacing $\Delta$ (fixed in \Cref{thm:multidim}), and proceeds in four steps.
\begin{enumerate}
    \item \textbf{Build the query set $P$.} For each dimension $i$, two {\qsl} calls place candidates at the evenly-spaced quantile levels of $1-\widehat{F}^i_a$ and of $\widehat{F}^i_b$; $P$ is the union of these per-dimension points together with $\W$, so $P\subseteq C$.
The two calls per dimension are complementary, controlling the spacing of $P$ relative to the left- and right-endpoint CDFs respectively (\Cref{lem:quantile-gaps}).
Each dimension contributes at most $2/\Delta$ points, so $|P|\leq k+2d/\Delta=\Oh(d^2k^2)$ under \Cref{thm:multidim}'s spacing $\Delta=\Omega(1/(dk^2))$.
    \item \textbf{Query every voter on $P$.} \texttt{resolve} (\Cref{alg:querying}) runs one binary search per dimension using \planar\ queries, returning an outer bracket $\hat{A}(v)\supseteq A(v)$.
This outer bracket agrees with the true approval set on $P$: $A(v)\cap P=\hat{A}(v)\cap P$ (\Cref{prop:approximateapprovals}, illustrated in \Cref{fig:bracketing}).
    \item \textbf{Count possible witnesses.} For each $c\notin\W$ and $\ell\in[k]$, let $s_{\ell,c}$ be the number of voters with $c\in\hat{A}(v)$ and $|\W\cap\hat{A}(v)|<\ell$.
These are the voters that could witness an EJR+ violation against $\W$ at $(c,\ell)$.
    \item \textbf{Verify or fall back.} If $|\W|\leq k$ and every $s_{\ell,c}<n\ell/k$, output $\W$; otherwise discard $\W$, elicit all voters fully, and return an EJR+ committee via GJCR.
\end{enumerate}
\begin{figure}[t]
\centering
\begin{minipage}[t]{0.49\textwidth}
\begin{algorithm}[H]
\SetAlgoSkip{}
\SetKwProg{Fn}{Function}{:}{}
\SetKwFunction{Fqsl}{\qsl}
\Fn{\Fqsl{$C,\phi,G$}}{
    $S\gets\varnothing$\;
    \For{$r=1,2,\ldots,\lceil 1/\qsarg\rceil-1$}{
        $Q_r\gets\{c\in C\setminus S:G(c)\geq r\qsarg\}$\;
        \If{$Q_r\ne\varnothing$}{
            $u\gets\arg\min_{c\in Q_r}G(c)$, break ties by an arbitrary fixed rule\;
            $S\gets S\cup\{u\}$\;
        }
    }
    \Return{$S$}\;
}
\caption{Quantile selection.}
\label{alg:quantile-select}
\end{algorithm}
\begin{algorithm}[H]
\SetNoFillComment
\SetKwFunction{FMulti}{framework}
\SetKwFunction{FVerify}{verify}
\SetKwFunction{FQuery}{resolve}
\SetKwProg{Fn}{Function}{:}{}
\Fn{\FMulti{$\W,\widehat{F},\Delta,d,k,V,C$}}{
    \lIf{\FVerify{$\W,\widehat{F},\Delta,C$}}{\KwRet{$\W$}}
    \lFor{$v\in V$}{$A(v)\gets$\FQuery{$v,C$}}
    \KwRet{GJCR on full info (\Cref{alg:GJCR})}
}
\Fn{\FVerify{$\W,\widehat{F},\Delta,C$}}{
    $P\gets \W$\;
    \For{$i\in[d]$}{
        $P\gets P\cup{\qsl}(C,\Delta,1-\widehat{F}^i_a)\cup{\qsl}(C,\Delta,\widehat{F}^i_b)$\;
    }
    \lFor{$v\in V$}{$\hat{A}(v)\gets$\FQuery{$v,P$}}
    \For{$c\notin\W,\ell\in[k]$}{$s_{\ell,c}\gets|\{v:c\in\hat{A}(v),|\W\cap\hat{A}(v)|<\ell\}|$}
    \KwRet{Whether $|\W|\leq k$ and all $s_{\ell,c}< n\ell/k$}
}
\caption{The verification framework.}
\label{alg:multidim}
\end{algorithm}
\end{minipage}
\hfill
\begin{minipage}[t]{0.49\textwidth}
\begin{algorithm}[H]
\SetNoFillComment
\SetKwFunction{FAxisQuery}{outerbound}
\SetKwFunction{FQuery}{resolve}
\SetKwProg{Fn}{Function}{:}{}
\Fn{\FAxisQuery{$v,P,i,\lhd$}}{
    \If{$\lhd=<$}{
        Sort $P$ by coord. $i$ in increasing order\;
    }\Else{
       Sort $P$ by coord. $i$ in decreasing order\;
    }
    Add un-queried sentinels $P[-1],P[|P|]$ at $\mp\infty$, where \planar\ is \texttt{false}/\texttt{true} by $A^*(v)\subseteq[0,1]^d$\;
    $x\gets 0$\;
    $y\gets |P|$\;
    \While{$x<y$}{
        $\mu\gets \floor{(x+y)/2}$\;
        \If{\planar$(P[\mu],v,i,\lhd)$}{$y\gets\mu$}\Else{$x\gets\mu+1$}
    }
    \KwRet{$P[x-1],P[x]$}
}
\Fn{\FQuery{$v,P$}}{
    \For{$i\in [d]$}{
        $\check{a}^i_v,\hat{a}^i_v\gets$ \FAxisQuery{$v,P,i,>$}\;
        $\check{b}^i_v,\hat{b}^i_v\gets$ \FAxisQuery{$v,P,i,<$}\;
    }
    $\hat{A}(v)=C\cap\prod_{i\in[d]}(\hat{a}^i_v,\hat{b}^i_v)$\;
    $\check{A}(v)=C\cap\prod_{i\in[d]}[\check{a}^i_v,\check{b}^i_v]$\tcc*{$\check{A}$ only used in proofs.}
    \KwRet{$\hat{A}(v)$}
}
\caption{$d$-dimensional querying.}
\label{alg:querying}
\end{algorithm}
\end{minipage}
\end{figure}
The guarantee our framework provides has two parts. 
\emph{Correctness is unconditional:} since $\W\subseteq P$ and $\hat{A}$ is exact on $P$, every genuine member of a cohesive group $N_{c,\ell}$ is counted in $s_{\ell,c}$, so passing the test certifies EJR+ no matter how poor $\W$ or the estimates are, and the fallback committee is EJR+ by construction. 
\emph{Query-efficiency is conditional:} We later introduce conditions to keep the false-positive rate of being added to $s_{l,c}$ low, so the test passes, and the expensive fallback is avoided with high probability.
This is precisely what the modules of \Cref{sec:wselection,sec:festimation} are built to supply.

\subsubsection{Correctness of framework}
\begin{figure}[t]
\centering
\begin{minipage}[t]{0.6\textwidth}
\centering
\ctikzfig{bucketing_ddim_paper}
\end{minipage}\hfill
\begin{minipage}[t]{0.35\textwidth}
\captionsetup{type=figure}
\caption{The bracketing produced by \Cref{alg:querying} in
$d=2$ dimensions. The voter's true approval rectangle $A^*(v)$ (grey) is
between inner bracket $\check{A}(v)$ and outer bracket $\hat{A}(v)$. Candidates
in $P$ (gridlines mark projections onto each axis) are classified
by querying: those known to be approved by $v$ (green) lie inside
$\check{A}(v)$, while those known to be rejected (diamonds) lie on the boundary
of $\hat{A}(v)$.}
\label{fig:bracketing}
\end{minipage}
\end{figure}
We prove correctness now. First, \texttt{resolve} produces valid brackets, and that the outer bracket $\hat{A}(v)$ is tight on $P$.
\newcommand{\approxapprovalsstatement}{$\check{A}(v)\subseteq A(v)\subseteq \hat{A}(v)$, and $A(v)\cap P=\hat{A}(v)\cap P$.}
\begin{proposition}
\label{prop:approximateapprovals}
\approxapprovalsstatement
\end{proposition}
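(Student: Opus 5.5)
The plan is to reduce everything to a single axis and a single direction, since $A^*(v)=\prod_i[a^i_v,b^i_v]$ and both brackets are products over $i$. Fix $i$ and consider \texttt{outerbound}$(v,P,i,<)$, which targets the upper endpoint $b^i_v$. With $P$ sorted increasingly on coordinate $i$, the predicate \planar$(P[\mu],v,i,<)=\mathds{1}[b^i_v<P[\mu]_i]$ is monotone in $\mu$: false on a prefix and true on the suffix. The sentinels $P[-1]$ (false) and $P[|P|]$ (true) guarantee that a transition exists. I will state the loop invariant of the binary search: at every iteration the predicate is false at $P[x-1]$ and true at $P[y]$. The search terminates with $x=y$, so the returned pair satisfies
\[
\check b^i_v{}_i \le b^i_v < \hat b^i_v{}_i ,
\]
where the left inequality reads ``not $b^i_v<\check b_i$''. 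Symmetrically, for direction $>$ with decreasing order, the predicate is $\mathds{1}[a^i_v>P[\mu]_i]$, and I get $\hat a_i < a^i_v \le \check a_i$.

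From these inequalities the two inclusions follow. If $c\in\check A(v)$, then $a^i_v\le\check a_i\le c_i\le\check b_i\le b^i_v$ for every $i$, so $c\in A^*(v)\cap C=A(v)$. If $c\in A(v)$, then $\hat a_i<a^i_v\le c_i\le b^i_v<\hat b_i$, so $c$ lies in the open box and hence in $\hat A(v)$.

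For tightness on $P$, it suffices to show $\hat A(v)\cap P\subseteq A(v)$. Take $c\in P$ with $c\in\hat A(v)$ and fix an axis $i$. The key observation is that $\hat b$ is the first element of the sorted list where the predicate is true and $\check b$ is the element immediately before it. Since $c_i<\hat b_i$ and $c\in P$, $c$ appears at or before position $x-1$ in sorted order. Monotonicity of the predicate then gives that it is false at $c$, i.e.\ $c_i\le b^i_v$. Likewise, from $c_i>\hat a_i$ I get $c_i\ge a^i_v$. Hence $c\in A^*(v)$.

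The main obstacle is ties in coordinate $i$ among points of $P$. Sorting only by coordinate $i$ does not order equal coordinates consistently, but because the predicate depends only on $P[\mu]_i$, equal coordinates receive equal answers. Consequently, any $c\in P$ with $c_i<\hat b_i$ precedes the transition index, and the argument goes through. I will also check the sentinel edge cases: if $\hat b$ is the $+\infty$ sentinel the upper constraint is vacuous, and if $\check b$ is the $-\infty$ sentinel then no $c$ in $P$ lies below it, which is consistent with all of the above.
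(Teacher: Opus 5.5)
Your proof is correct and follows essentially the same route as the paper. You derive the per-axis bracket inequalities $\hat a^i_v<a^i_v\le\check a^i_v$ and $\check b^i_v\le b^i_v<\hat b^i_v$ (the paper's \Cref{lem:boundsonapp}), obtain the two inclusions coordinatewise, and get tightness on $P$ because the outer endpoint is where the \planar\ answers flip; your explicit loop invariant, monotonicity of the predicate, and treatment of ties make rigorous the step the paper compresses into ``$\hat a^i_v$ would not have been selected by \texttt{outerbound}''.
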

For the full proof of \Cref{prop:approximateapprovals}, see \Cref{sec:ddimquery}.
See \Cref{fig:bracketing} for a diagram.
This holds regardless of how $P$ is actually constructed.

\begin{proposition}[Query cost of \texttt{resolve}]
\label{prop:queryingqueries}
For any voter $v$ and query set $P$, a call to \texttt{resolve}$(v,P)$ makes at most $2d\ceil{\log_2(|P|+1)}=\Oh(d\log|P|)$ \planar\ queries.
\end{proposition}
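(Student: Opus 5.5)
The plan is to count queries call by call. \texttt{resolve}$(v,P)$ makes exactly $2d$ calls to \texttt{outerbound}: one with $\lhd={>}$ and one with $\lhd={<}$ for each $i\in[d]$. I will show that each call issues at most $\ceil{\log_2(|P|+1)}$ \planar\ queries; multiplying by $2d$ then gives the bound. The sorting step and the two sentinels $P[-1]$, $P[|P|]$ cost no queries, since the sentinels are never queried. This holds because $\mu=\floor{(x+y)/2}$ always lies in $\{0,\dots,|P|-1\}$ whenever $x<y\le|P|$.

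For a single call, the invariant is that $0\le x\le y\le|P|$ at all times. The search interval $\{x,\dots,y\}$ of candidate return indices initially has size $|P|+1$, and the loop runs while it contains more than one index. Let $L=y-x$. In each iteration one query is made, and the update yields either $L'=\mu-x=\floor{L/2}$ or $L'=y-\mu-1=\ceil{L/2}-1\le\floor{L/2}$. So in both branches $L'+1\le\floor{(L+1)/2}\le (L+1)/2$, i.e. the number of remaining positions at least halves.

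Starting from $L+1=|P|+1$, after $t$ iterations we have $L+1\le (|P|+1)/2^t$. The loop stops once $L=0$, i.e. once $L+1<2$. This happens by the time $2^t\ge |P|+1$, so after at most $\ceil{\log_2(|P|+1)}$ iterations. Since each iteration makes exactly one \planar\ query, the per-call bound follows. The asymptotic form $\Oh(d\log|P|)$ is immediate for $|P|\ge1$ (note $P\supseteq\W$, and $P$ is nonempty whenever $C$ is).

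The only step needing care is the halving inequality in the ``else'' branch, which I expect to verify by checking the parities of $L$ separately. Monotonicity of the queried predicate along the sorted order is needed for the correctness of the search (\Cref{prop:approximateapprovals}), but it is irrelevant to the query count, which holds for arbitrary answers.
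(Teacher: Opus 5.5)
Your overall structure matches the paper's proof: $2d$ calls to \texttt{outerbound}, one \planar\ query per loop iteration, and sentinels never queried since $x\leq\mu<y$. Your final count is also correct. However, the halving inequality you rely on is false, and it fails in the branch you did not flag.

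In the $y\gets\mu$ branch you have $L'=\floor{L/2}$. So for even $L$ you get $L'+1=L/2+1>L/2=\floor{(L+1)/2}$. The \texttt{else} branch is actually fine, since there $L'+1=\ceil{L/2}=\floor{(L+1)/2}$. Concretely, with $|P|=2$ the three return indices $\{0,1,2\}$ shrink to $\{0,1\}$ after a ``true'' answer. So your claim $L+1\leq(|P|+1)/2^t$ already fails at $t=1$. The number of candidate return indices $N=L+1$ does not halve; it goes to at most $\ceil{N/2}$, with equality in the $y\gets\mu$ branch.

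Either of two repairs closes this.
\begin{itemize}
\item Keep your potential but use ceilings. You have $N'\leq\ceil{N/2}$ in both branches, and $\ceil{\ceil{a/2^s}/2}=\ceil{a/2^{s+1}}$ for integers $a\geq1$. Hence after $t$ iterations $N_t\leq\ceil{(|P|+1)/2^t}$, which equals $1$ (so the loop has stopped) once $2^t\geq|P|+1$.
\item Track $L=y-x$ itself. Both branches give $L'\leq\floor{L/2}$, so $L_t\leq|P|/2^t$. Integrality then forces $L_t=0$ once $2^t>|P|$, i.e.\ after at most $\floor{\log_2|P|}+1=\ceil{\log_2(|P|+1)}$ iterations.
\end{itemize}
With either fix, your argument becomes a correct and more careful version of the paper's one-line ``$|P|$ halves every iteration''.

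A minor point: the $\Oh(d\log|P|)$ form needs $|P|\geq2$, not $|P|\geq1$. At $|P|=1$ the exact bound is $2d$ while $\log|P|=0$.
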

\begin{proof}
Each call to \texttt{outerbound} is a binary search on $P$ sorted along one axis: $|P|$ halves every iteration, so the loop runs at most $\ceil{\log_2(|P|+1)}$ times and issues exactly one \planar\ query per iteration; note the two $\pm\infty$ sentinels are never queried.
\texttt{resolve} invokes \texttt{outerbound} twice per dimension (once in each direction) i.e. $2d$ times, for a total of at most $2d\ceil{\log_2(|P|+1)}$ queries.
\end{proof}
\begin{theorem}[Correctness]
\label{thm:correctness}
\Cref{alg:multidim} provides EJR+.
In the worst case it issues $\Oh(d\log m)$ queries per voter, for any spacing $\Delta$.
\end{theorem}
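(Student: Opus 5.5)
The proof splits into the two claims of \Cref{thm:correctness}: the output always satisfies EJR+, and the worst-case per-voter query count is $\Oh(d\log m)$. For EJR+ there are two cases, according to whether \texttt{verify} returns true. If it returns false, \texttt{resolve}$(v,C)$ is run on every voter. Taking $P=C$ in \Cref{prop:approximateapprovals} gives $A(v)=A(v)\cap C=\hat{A}(v)\cap C=\hat{A}(v)$, so GJCR receives exact full approval ballots, and \Cref{fact:GJCR} then yields an EJR+ committee of size $k$.

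If \texttt{verify} returns true, then $|\W|\leq k$ and $s_{\ell,c}<n\ell/k$ for all $c\notin\W$ and $\ell\in[k]$. The plan is to show that $N_{c,\ell}\subseteq\{v:c\in\hat{A}(v),\ |\W\cap\hat{A}(v)|<\ell\}$, which gives $|N_{c,\ell}|\leq s_{\ell,c}<n\ell/k$; by the cohesive-group characterisation of EJR+ stated after its definition, this is enough. Take $v\in N_{c,\ell}$. Then $c\in A(v)\subseteq\hat{A}(v)$ by the first part of \Cref{prop:approximateapprovals}. Moreover, since $\W\subseteq P$, we have $\W\cap\hat{A}(v)=\W\cap(\hat{A}(v)\cap P)=\W\cap(A(v)\cap P)=\W\cap A(v)$, which has size $<\ell$. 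Note that $c$ need not lie in $P$; only the outer-bracket inclusion is used for $c$, and the exactness on $P$ is used only for $\W$. So EJR+ holds in both branches.

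For the query bound, \texttt{verify} calls \texttt{resolve}$(v,P)$ once per voter with $P\subseteq C$. By \Cref{prop:queryingqueries} this costs at most $2d\lceil\log_2(|P|+1)\rceil\leq 2d\lceil\log_2(m+1)\rceil$ queries, whatever $\Delta$ is, because $|P|\leq|C|=m$ regardless of how many points the {\qsl} calls produce. The fallback adds one call to \texttt{resolve}$(v,C)$ per voter, costing at most another $2d\lceil\log_2(m+1)\rceil$. The total is therefore at most $4d\lceil\log_2(m+1)\rceil=\Oh(d\log m)$ per voter, since $m\geq 2$.

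The only point needing care is the second case: one must use the outer-bracket containment $A(v)\subseteq\hat{A}(v)$ for the candidate $c$, which may lie outside $P$, and the exactness on $P$ only for the committee $\W$. Everything else follows directly from the cited propositions.
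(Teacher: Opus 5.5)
Your proof is correct and follows the paper's argument essentially step for step. The fallback branch is EJR+ by Fact~\ref{fact:GJCR}. In the verified branch, $c\in A(v)\subseteq\hat A(v)$ together with exactness of $\hat A(v)$ on $P\supseteq\W$ gives $|N_{c,\ell}|\leq s_{\ell,c}<n\ell/k$. The query bound comes from at most two \texttt{resolve} passes over sets of size at most $m$. Your one addition, applying Proposition~\ref{prop:approximateapprovals} with $P=C$ to confirm that the fallback really gives GJCR the exact ballots, is a step the paper leaves implicit; it is correct because $\hat A(v)\subseteq C$ by construction.
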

\begin{proof}
If GJCR is used, the outcome is EJR+ by \Cref{fact:GJCR}, so suppose $\W$ is returned, which means $|\W|\leq k$ and $s_{\ell,c}<n\ell/k$ for all $\ell,c$.
For any voter $v$ with $c\in A(v)$ and $|\W\cap A(v)|<\ell$, so $v\in N_{c,\ell}$, \Cref{prop:approximateapprovals} together with $\W\subseteq P$ gives $c\in\hat{A}(v)$ and $\W\cap\hat{A}(v)=\W\cap A(v)$.
Therefore $v$ is counted in $s_{\ell,c}$ and $|N_{c,\ell}|\leq s_{\ell,c}<n\ell/k$ for every $c,\ell$, so $\W$ provides EJR+.
The verification pass uses candidate set $P\subseteq C$, costing $\Oh(d\log|P|)=\Oh(d\log m)$ per voter; if \texttt{verify} returns false, the fallback re-elicits on $C$, a further $\Oh(d\log m)$.
Either way at most $\Oh(d\log m)$ queries per voter, regardless of $\Delta$.
\end{proof}
\subsubsection{Query-efficiency of framework}
The framework is query-efficient when its two inputs $\W,\widehat{F}$ are ``good''; we now make this precise.
The first property asks that $\W$ is a good guess at an EJR+ committee (the probability that a voter falls in a given cohesive group is below $\ell/k$ by a margin $\delta>0$), and the second that the per-dimension CDF estimates are accurate on $C$.
\begin{definition}[PW]
\label{def:PW-1d}
A committee $\W$ satisfies \emph{(PW) with margin $\delta$} if for every $c\in C\setminus\W$ and $\ell\in[k]$,
$
    p_{c,\ell}\leq \frac{\ell}{k} - \delta.
$
\end{definition}
\begin{definition}[PF]
\label{def:PF}
Per-dimension estimates $\widehat{F}^i_a$, $\widehat{F}^i_b$ satisfy \emph{(PF) with error $\varepsilon$} if, for every $i\in[d]$, and every $c\in C$,
$
    |\widehat{F}^i_a(c)-F^i_a(c)|\leq\varepsilon,$ $ |\widehat{F}^i_b(c)-F^i_b(c)|\leq\varepsilon.
$
When the distribution is known, (PF) holds with $\varepsilon=0$.
\end{definition}
We first show that when (PF) holds, we can bound the true CDF gaps at the $P$-brackets of any $c\notin P$.

\newcommand{\quantilegapsstatement}{Let $P_i\coloneqq{\qsl}(C,\Delta,1-\widehat{F}^i_a)\cup{\qsl}(C,\Delta,\widehat{F}^i_b)$ be the candidates added to $P$ on dimension $i$ inside \texttt{verify}, where $\widehat{F}^i_a$, $\widehat{F}^i_b$ satisfy (PF) with error $\varepsilon$.
For any candidate $c\notin P$ and dimension $i\in[d]$, let $c^+_i\coloneqq\min\{p_i:p\in P_i,\,p_i\geq c_i\}$ and $c^-_i\coloneqq\max\{p_i:p\in P_i,\,p_i\leq c_i\}$ be the closest $P_i$-coordinates on either side of $c_i$, with the convention $c^+_i=+\infty$ and $F^i_a(+\infty)=F^i_b(+\infty)=1$ if no $P_i$-candidate has $p_i\geq c_i$, and symmetrically $c^-_i=-\infty$ with $F^i_a(-\infty)=F^i_b(-\infty)=0$ if none has $p_i\leq c_i$.
Then
$
    F^i_a(c^+_i)-F^i_a(c_i)<\Delta+2\varepsilon$, and $ F^i_b(c_i)-F^i_b(c^-_i)<\Delta+2\varepsilon.
$}
\begin{lemma}
\label{lem:quantile-gaps}
\quantilegapsstatement
\end{lemma}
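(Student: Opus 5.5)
The plan is to reduce each inequality to the Coverage Lemma (\Cref{lem:lefty}), applied to the corresponding {\qsl} call, and then to pass from estimated to true CDFs using (PF) and the monotonicity of the true CDFs. I treat the $F^i_a$ bound in detail; the $F^i_b$ bound is the mirror image, with $G=\widehat{F}^i_b$ in place of $1-\widehat{F}^i_a$ and $c^-_i$ in place of $c^+_i$.

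\textbf{Setup.} Fix $c\notin P$ and a dimension $i$. Let $S={\qsl}(C,\Delta,G)$ with $G=1-\widehat{F}^i_a$. Then $S\subseteq P_i\subseteq P$, so $c\in C\setminus S$, and \Cref{lem:lefty} gives two cases.

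\textbf{Case 1: $G(c)<\Delta$.} Then $\widehat{F}^i_a(c)>1-\Delta$, so by (PF) $F^i_a(c)>1-\Delta-\varepsilon$. Since $F^i_a(c^+_i)\leq 1$ always, including the convention $c^+_i=+\infty$, we get $F^i_a(c^+_i)-F^i_a(c_i)<\Delta+\varepsilon\leq\Delta+2\varepsilon$.

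\textbf{Case 2: some $u\in S$ has $G(u)\leq G(c)<G(u)+\Delta$.} Translating back, this reads $\widehat{F}^i_a(c)\leq\widehat{F}^i_a(u)<\widehat{F}^i_a(c)+\Delta$. If $u_i\geq c_i$, then $c^+_i\leq u_i$ because $u\in P_i$. Monotonicity of $F^i_a$ and (PF) applied twice then give
\[
F^i_a(c^+_i)\leq F^i_a(u_i)\leq\widehat{F}^i_a(u)+\varepsilon<\widehat{F}^i_a(c)+\Delta+\varepsilon\leq F^i_a(c_i)+\Delta+2\varepsilon .
\]
If $u_i=c_i$, the gap is simply $0$.

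\textbf{Main obstacle.} The step I expect to be delicate is ruling out $u_i<c_i$ in Case~2. The estimate $\widehat{F}^i_a$ need not be monotone in the coordinate, so $\widehat{F}^i_a(u)\geq\widehat{F}^i_a(c)$ alone does not force $u_i\geq c_i$. I would first use the convention of \Cref{rem:strictmono}. I expect it ensures that the CDF estimates are treated as non-decreasing in $c_i$, with ties in $G$ broken by coordinate. Under that convention, $G(u)\leq G(c)$ implies $u_i\geq c_i$, and Case~2 goes through as written.

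If that convention is not available, the fallback is to monotonise the estimate, replacing $\widehat{F}^i_a(x)$ by $\sup_{y\leq x}\widehat{F}^i_a(y)$ restricted to $C$. Because $F^i_a$ is monotone, this replacement preserves the $\varepsilon$ accuracy required by (PF), so the argument above still applies.

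\textbf{The $F^i_b$ bound.} With $G=\widehat{F}^i_b$, Case~1 gives $F^i_b(c_i)<\Delta+\varepsilon$, while $F^i_b(c^-_i)\geq 0$. Case~2 gives a $u$ with $u_i\leq c_i$ and $\widehat{F}^i_b(c)-\widehat{F}^i_b(u)<\Delta$. Hence $c^-_i\geq u_i$ and $F^i_b(c_i)-F^i_b(c^-_i)\leq F^i_b(c_i)-F^i_b(u_i)<\Delta+2\varepsilon$.
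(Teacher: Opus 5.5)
Your proposal is correct and follows the paper's proof essentially step for step. It applies Coverage (\Cref{lem:lefty}) to the call with $G=1-\widehat{F}^i_a$, splits into the same two cases with (PF) used twice, rules out $u_i<c_i$ via \Cref{rem:strictmono}, and treats the $F^i_b$ bound symmetrically. (The remark perturbs the estimate to $\widehat{F}^i_a(x)+\iota x$, making it strictly increasing, so $\widehat{F}^i_a(u_i)\geq\widehat{F}^i_a(c_i)$ forces $u_i\geq c_i$.) Your Case~1 is in fact slightly cleaner than the paper's: you only use $F^i_a(c^+_i)\leq 1$ rather than taking $c^+_i=+\infty$.

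Drop the fallback, though. Merely monotonising the estimate is not enough: with a non-strict $\widehat{F}^i_a$, a tie $G(u)=G(c)$ with $u_i<c_i$ can make $F^i_a(c^+_i)-F^i_a(c_i)$ exceed $\Delta+2\varepsilon$, even when $\varepsilon=0$. So strictness, or a tie-break toward the larger coordinate, is genuinely needed, and that is exactly what the remark supplies.
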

Intuitively, the bracket $(c_i,c^+_i)$ lies within a single {\qsl} gap of $\widehat{F}^i_a$, whose estimated width is below $\Delta$ by coverage (\Cref{lem:lefty}); passing from $\widehat{F}^i_a$ to the true $F^i_a$ via (PF) costs a further $2\varepsilon$, and the left bracket is symmetric. The full proof is deferred to \Cref{app:multidim}.

The second ingredient for the main theorem is a union-bound compression.
The outer bracket $\hat{A}(v)$ returned by \texttt{outerbound} has its endpoints in $P^{(i)}\cup\{\pm\infty\}$, so whether $c_i$ falls inside the bracket depends only on which \emph{cell} of the partition of $\mathbb{R}$ induced by $P^{(i)}$ contains $c_i$.
Two candidates lying in the same cell on every axis therefore share the same witness count $s_{\ell,c}$, reducing the number of distinct counts from $m$ to the number of occupied cells $\Lambda$.
\begin{lemma}[$P$-cells]
\label{lem:cells}
For each axis $i$, let $P^{(i)}=\{p_i:p\in P\}$ be the set of $i$-th coordinates of the query candidates ($|P^{(i)}|\leq|P|$).
Call two candidates $c,c'\in C\setminus\W$ \emph{$P$-equivalent} if, on every axis $i$, $c_i$ and $c'_i$ lie in the same interval of the partition of $\mathbb{R}$ induced by $P^{(i)}$ (the $|P^{(i)}|$ coordinate points together with the open intervals between consecutive ones).
Then:
\begin{enumerate}
    \item[(i)] If $c$ and $c'$ are $P$-equivalent then $c\in\hat{A}(v)\iff c'\in\hat{A}(v)$ for every voter, hence $s_{\ell,c}=s_{\ell,c'}$ for all $\ell$.
    \item[(ii)] The number of non empty $P$-equivalence classes is at most $\Lambda\coloneqq\min\left\{m,(2|P|+1)^d\right\}$.
\end{enumerate}
\end{lemma}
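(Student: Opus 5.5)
For part (i), I would look at how membership in $\hat{A}(v)$ is decided. By construction in \texttt{resolve}, $\hat{A}(v)=C\cap\prod_{i\in[d]}(\hat{a}^i_v,\hat{b}^i_v)$. Each endpoint $\hat{a}^i_v,\hat{b}^i_v$ is returned by \texttt{outerbound} run on $P$ sorted along axis $i$, so it is the $i$-th coordinate of some element of $P$ or one of the sentinels $\pm\infty$. In either case it lies in $P^{(i)}\cup\{\pm\infty\}$. So for each axis the test ``$c_i\in(\hat{a}^i_v,\hat{b}^i_v)$'' compares $c_i$ against two thresholds drawn from $P^{(i)}\cup\{\pm\infty\}$.

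The key observation is that, for any threshold $t\in P^{(i)}\cup\{\pm\infty\}$, the truth values of $c_i>t$ and $c_i<t$ depend only on which partition piece contains $c_i$. If $c_i$ and $c'_i$ lie in the same open interval between consecutive points of $P^{(i)}$, no threshold separates them. If they lie at the same coordinate point, they are equal. Hence $c_i\in(\hat{a}^i_v,\hat{b}^i_v)\iff c'_i\in(\hat{a}^i_v,\hat{b}^i_v)$ for every $i$. Taking the conjunction over axes gives $c\in\hat{A}(v)\iff c'\in\hat{A}(v)$.

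This holds for every voter. Moreover, the condition $|\W\cap\hat{A}(v)|<\ell$ does not involve $c$ at all. Therefore the sets of voters counted in $s_{\ell,c}$ and $s_{\ell,c'}$ coincide, so $s_{\ell,c}=s_{\ell,c'}$ for all $\ell$.

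For part (ii), I would count partition pieces. On axis $i$, the $|P^{(i)}|$ points split $\mathbb{R}$ into at most $|P^{(i)}|$ singletons and $|P^{(i)}|+1$ open intervals. That is at most $2|P^{(i)}|+1\le 2|P|+1$ pieces.

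A $P$-equivalence class is determined by the tuple of pieces, one per axis, so there are at most $(2|P|+1)^d$ possible classes. Also, each nonempty class contains at least one candidate of $C\setminus\W$, so there are at most $m$ nonempty classes. Combining the two bounds gives $\Lambda=\min\{m,(2|P|+1)^d\}$.

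I expect no real obstacle. The only point needing care is the sentinels and the open-interval convention at endpoints, and treating coordinate points as their own singleton pieces handles exactly the case where $c_i$ coincides with a bracket endpoint.
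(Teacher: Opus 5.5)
Your proof is correct and follows the paper's argument essentially step for step. The bracket endpoints $\hat{a}^i_v,\hat{b}^i_v$ lie in $P^{(i)}\cup\{\pm\infty\}$, so membership of $c_i$ in $(\hat{a}^i_v,\hat{b}^i_v)$ is constant on each cell, and since the condition $|\W\cap\hat{A}(v)|<\ell$ does not involve $c$, this gives (i); part (ii) is the same per-axis count of at most $2|P|+1$ cells, combined over the $d$ axes and capped by the at most $m$ occupied classes.
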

\begin{proof}
For part~(i): by \Cref{alg:querying} the outer bracket on axis $i$ is the open interval $(\hat{a}^i_v,\hat{b}^i_v)$, whose endpoints are returned by the binary search \texttt{outerbound} over $P$ sorted on axis $i$ and so lie in $P^{(i)}\cup\{\pm\infty\}$.
Hence whether $c_i\in(\hat{a}^i_v,\hat{b}^i_v)$ depends only on which cell of $P^{(i)}$ contains $c_i$, so it is constant across $P$-equivalent candidates on every axis.
Taking the product over $i\in[d]$ gives $c\in\hat{A}(v)\iff c'\in\hat{A}(v)$; and since $s_{\ell,c}=|\{v:c\in\hat{A}(v),\,|\W\cap\hat{A}(v)|<\ell\}|$ depends on $c$ only through the set $\{v:c\in\hat{A}(v)\}$, we get $s_{\ell,c}=s_{\ell,c'}$.

For part~(ii): on each axis the $|P^{(i)}|\leq|P|$ coordinates partition $\mathbb{R}$ into at most $|P|+1$ open intervals together with the $|P|$ coordinate points, i.e.\ at most $2|P|+1$ cells; combining $d$ axes gives at most $(2|P|+1)^d$ classes, of which at most $m$ are occupied by candidates.
\end{proof}
Given (PW) margin $\delta$ and (PF) error $\varepsilon$, the framework runs \texttt{verify} with the quantile spacing
$
    \Delta=\frac{\delta}{4d}-\varepsilon.
$
\newcommand{\multidimstatement}{Suppose $\W$ satisfies (PW) with margin $\delta$ and the per-dimension CDF estimates $\widehat{F}^i_a$, $\widehat{F}^i_b$ satisfy (PF) with error $\varepsilon$. Let spacing $\Delta=\delta/(4d)-\varepsilon$. When $\Delta\geq \alpha/2dk^2$ for some constant $\alpha>0$, and $n=\Omega(k^4\log(\Lambda\log m))$ where $\Lambda$ is the $P$-cell count of \Cref{lem:cells}, \Cref{alg:multidim}  uses $\Oh(d\log dk)$ queries in expectation per voter.}
\begin{theorem}[Query efficiency]
\label{thm:multidim}
\multidimstatement
\end{theorem}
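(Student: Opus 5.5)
The proof splits the per-voter cost into the cost of the verification pass and the cost of the fallback. The fallback is paid only when \texttt{verify} fails. By \Cref{prop:queryingqueries}, the verification pass costs $\Oh(d\log|P|)$ per voter. Since $|P|\leq k+2d/\Delta\leq k+4d^2k^2/\alpha$, this is $\Oh(d\log dk)$. The fallback costs $\Oh(d\log m)$ per voter by \Cref{thm:correctness}. The expected cost per voter is therefore $\Oh(d\log dk)+\Pr[\texttt{verify fails}]\cdot\Oh(d\log m)$. It suffices to show that $\Pr[\texttt{verify fails}]=\Oh(1/\log m)$, since then the second term is $\Oh(d)$.

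The core step is to bound, for each fixed $c\notin\W$ and $\ell$, the single-voter probability $q_{c,\ell}\coloneqq\Pr[c\in\hat{A}(v),\,|\W\cap\hat{A}(v)|<\ell]$. I would show $q_{c,\ell}\leq p_{c,\ell}+\delta/2$. Because $\W\subseteq P$, \Cref{prop:approximateapprovals} gives $\W\cap\hat A(v)=\W\cap A(v)$. So the event defining $q_{c,\ell}$ differs from $\{v\in N_{c,\ell}\}$ only on the event $c\in\hat{A}(v)\setminus A(v)$, and $q_{c,\ell}\leq p_{c,\ell}+\Pr[c\in\hat A(v)\setminus A(v)]$. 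If $c\in P$, the two events coincide, again by \Cref{prop:approximateapprovals}.

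For $c\notin P$, the false positive $c\in\hat A(v)\setminus A(v)$ requires some axis $i$ on which $c_i\notin[a^i_v,b^i_v]$ while $c_i\in(\hat a^i_v,\hat b^i_v)$. Take the left side, $c_i<a^i_v$. The outer endpoint $\hat a^i_v$ is the largest $P^{(i)}$-coordinate strictly below $a^i_v$ (up to the binary-search convention). So $\hat a^i_v<c_i$ forces no $P_i$-coordinate to lie in $[c_i,a^i_v)$, which means $c_i<a^i_v\leq c^+_i$. Hence the probability of this event is at most $F^i_a(c^+_i)-F^i_a(c_i)<\Delta+2\varepsilon$ by \Cref{lem:quantile-gaps}. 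Here I need to check carefully whether the endpoint $a^i_v=c^+_i$ is included, and whether $P^{(i)}$ (built from all of $P$) only refines $P_i$. A finer partition only shrinks the gap, so the second point is fine. The right side is symmetric with $F^i_b$.

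A union bound over the $2d$ one-sided events gives $\Pr[c\in\hat A(v)\setminus A(v)]<2d(\Delta+2\varepsilon)=\delta/2-2d\varepsilon\leq\delta/2$, using $\Delta=\delta/(4d)-\varepsilon$. Combined with (PW), this yields $q_{c,\ell}\leq \ell/k-\delta/2$.

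Next comes the concentration step. The quantity $s_{\ell,c}$ is a sum of $n$ i.i.d.\ indicators with mean $q_{c,\ell}$. By \Cref{fact:hoeffding}, $\Pr[s_{\ell,c}\geq n\ell/k]\leq 2e^{-n\delta^2/2}$. One caveat: $P$ depends only on $\W$, $\widehat F$, $C$ and not on the voters, so the indicators are indeed independent, provided the estimates were built from a separate sample; I would state this assumption. By \Cref{lem:cells}(i), it suffices to union-bound over one representative per $P$-cell and over $\ell\in[k]$, giving $\Pr[\texttt{verify fails}]\leq 2k\Lambda e^{-n\delta^2/2}$. The condition $|\W|\leq k$ is assumed to be part of the input guarantee. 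Since $\delta=4d(\Delta+\varepsilon)\geq 2\alpha/k^2$, we have $\delta^2=\Omega(1/k^4)$. Thus $n=\Omega(k^4\log(\Lambda\log m))$, with the constant chosen to absorb the factor $k$ via $\log k\leq\log\Lambda$ or similar, makes the failure probability $\Oh(1/\log m)$. Plugging this into the first paragraph completes the proof.

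I expect the main obstacle to be the false-positive step: pinning down exactly how the bracket returned by \texttt{outerbound} relates to the true endpoints. The claim that $\hat a^i_v<c_i<a^i_v$ implies $a^i_v\leq c^+_i$ must match the strict and non-strict inequalities in \Cref{lem:quantile-gaps} and in the \planar\ semantics, including the sentinel cases $c^\pm_i=\pm\infty$.
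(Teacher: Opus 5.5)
Your argument is the paper's proof step for step: genuine witnesses plus bracketing false positives, \Cref{lem:quantile-gaps} on each of the $2d$ one-sided events, Hoeffding, and a union bound over $P$-cells and levels. The endpoint issue you were worried about also comes out right. Your event $c_i<a^i_v\leq c^+_i$ is exactly the false-positive region and has probability exactly $F^i_a(c^+_i)-F^i_a(c_i)$, which is slightly cleaner than the paper's open interval $(c_i,c^+_i)$.

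There is, however, a sign error in the margin computation. With $\Delta=\delta/(4d)-\varepsilon$ you have $\Delta+2\varepsilon=\delta/(4d)+\varepsilon$, so $2d(\Delta+2\varepsilon)=\delta/2+2d\varepsilon$, not $\delta/2-2d\varepsilon$. The false-positive mass is therefore not bounded by $\delta/2$. The claim $q_{c,\ell}\leq\ell/k-\delta/2$ is not implied by (PW) and \Cref{lem:quantile-gaps} once $\varepsilon>0$.

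The fix is local. Combining (PW) with the corrected false-positive bound gives $q_{c,\ell}\leq\ell/k-\gamma$, where $\gamma=\delta-2d(\Delta+2\varepsilon)=\tfrac12(\delta-4d\varepsilon)=2d\Delta$. The hypothesis $\Delta\geq\alpha/(2dk^2)$ then gives $\gamma\geq\alpha/k^2$ directly. Hoeffding yields $\Pr[s_{\ell,c}\geq n\ell/k]\leq e^{-2n\gamma^2}\leq e^{-2n\alpha^2/k^4}$. This is numerically the same bound you wrote, but now it is justified.

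This is also why you should not route the argument through $\delta\geq 2\alpha/k^2$, as your last paragraph does. A lower bound on $\delta$ alone says nothing about the effective margin, which is $2d\Delta$ after the estimation error $4d\varepsilon$ is paid. That is exactly why the theorem's hypothesis is stated in terms of $\Delta$. With this correction, the rest of your proposal goes through as in the paper: the bound $|P|\leq k+4d^2k^2/\alpha$, the union bound over $\Oh(\Lambda k)$ cell--level pairs, and the $\Oh(1/\log m)$ failure probability.
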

We sketch the proof; the full proof can be found in \Cref{app:multidim}
\begin{proof}[sketch]
Every voter is asked the $\Oh(d\log|P|)=\Oh(d\log dk)$ queries of one \texttt{resolve} pass over $P$, and only a verification failure adds the $\Oh(d\log m)$ of a full re-elicitation; so the expected per-voter cost is $\Oh(d\log dk)+\Pr(\text{fail})\cdot\Oh(d\log m)$, and it remains to bound $\Pr(\text{fail})$.

Fix $c\in C\setminus\W$ and $\ell\in[k]$. A voter is counted in $s_{\ell,c}$ either by genuinely lying in the cohesive group $N_{c,\ell}$ with probability $p_{c,\ell}\leq\ell/k-\delta$ by (PW), or as a bracketing false positive ($c\in\hat A(v)\setminus A(v)$), which forces an endpoint into one of the bracketing gaps $(c_i,c^+_i)$, $(c^-_i,c_i)$ on some axis; \Cref{lem:quantile-gaps} bounds each gap's mass by $\Delta+2\varepsilon$, so a union over the $d$ axes caps the false-positive probability by $2d(\Delta+2\varepsilon)$. 
Hence a voter is counted with probability at most $\ell/k-\gamma$, where $\gamma=\delta-2d(\Delta+2\varepsilon)\geq\alpha/k^2$ by the hypothesis. \Cref{fact:hoeffding} then gives $\Pr(s_{\ell,c}\geq n\ell/k)\leq\exp(-2n\alpha^2/k^4)$. By \Cref{lem:cells}, $s_{\ell,c}=s_{\ell,c'}$ for all $P$-equivalent $c'$, so a union bound over the $\Oh(\Lambda k)$ cell--level pairs yields $\Pr(\text{fail})\leq\Lambda k\exp(-2n\alpha^2/k^4)$. At $n=\Omega(k^4\log(\Lambda\log m))$ this is $\Oh(1/\log m)$, so the fallback cost is absorbed into the $\Oh(d\log dk)$ verification cost and the per-voter cost is $\Oh(d\log dk)$. Note that $\log\Lambda=\Oh(d\log (dk))$ when $\Delta\geq \alpha/2dk^2$.
\end{proof}
Having shown that the framework is query-efficient under the conditions (PW) and (PF), we now show how to achieve these properties: we provide W-selection modules supplying (PW) (\Cref{sec:wselection}) and $\widehat{F}$-estimation modules supplying (PF) (\Cref{sec:festimation}).
Each module is reported in a common form: to meet its guarantee it may draw a \emph{pool} of voters, summarised by its \emph{pool size} (how many it draws) and \emph{per-voter query load} (the most queries any pooled voter is asked), written $n_W,q_W$ for a W-selection module and $n_F,q_F$ for an $\widehat{F}$-estimation module.
A module reading off a known distribution draws no pool ($n_W=q_W=0$); a sampling one incurs both.
Only pooled voters are elicited (every other voter pays just the verification pass), and the two budgets are composed in \Cref{sec:synthesis}.

\begin{remark}[Two senses of the per-voter bound]
\label{rem:twosenses}
\Cref{thm:multidim} bounds the per-voter cost \emph{in expectation} (amortized over the $n$ voters).
When every pooled voter's load is $q_W,q_F=\Oh(d\log dk)$, the stronger guarantee holds: \emph{every} voter is asked $\Oh(d\log dk)$ queries w.h.p., bar the rare fallback. A heavier load ($\Oh(d\log m)$, or $\Oh(d^2\log(Kdk))$ under Lipschitz) leaves only the amortized bound.
The modules of \Cref{sec:wselection,sec:festimation,sec:lipschitz} realise both senses; the stronger every-voter guarantee requires a larger minimum electorate. We tag each result accordingly.
\end{remark}
%%%%%%%%%%%%%%%%%%%%%%%%%%%%%%%%%%%%%%%%%%%%%%%%%%%%%
\section{W-selection modules}
\label{sec:wselection}
Recall that the framework consumes a committee $\W$ that is already likely to be EJR+, in the sense of (PW).
We now describe two interchangeable W-selection modules: PGJCR (\Cref{sec:ejrhp}), an adapted version of GJCR (\Cref{alg:GJCR}, \citep{DBLP:conf/sigecom/Brill023}), for when the distribution $\mathcal{D}$ is known, and the query-based NGJCR (\Cref{sec:ngjcr}), an adapted version of Noisy GJCR \citep{DBLP:conf/sigecom/Brill023}, for when it is not.
We analyse each only up to its (PW) guarantee and its query cost; the end-to-end results follow by plugging them into the framework in \Cref{sec:synthesis}.
All proofs in this section are deferred to \Cref{sec:pgjcrproof}.
\subsection{Distribution-input PGJCR}
\label{sec:ejrhp}

Our first W-selection module is PGJCR (\Cref{alg:PGJCR}), which constructs a committee satisfying (PW) given the distribution $\mathcal{D}$, without making any queries.
\Cref{alg:PGJCR} is based on GJCR \cite{DBLP:conf/sigecom/Brill023}: given a distribution over the $2^m$ ballots, we greedily take candidates $c$ if, in expectation, the number of voters who approve of $c$ and fewer than $\ell$ candidates in $W$ is greater than $n\ell/(k+1)$, with $\ell$ decreasing from $k$ to $1$.
This algorithm is not specific to a spatial setting, and works whenever we have any probability distribution over approval ballots.

\newcommand{\pgjcrstatement}{When $\mathcal{D}$ is known and the $A(v)$ are i.i.d.\ from $\mathcal{D}$, PGJCR consumes no pool and makes no queries ($n_W=0$, $q_W=0$), and supplies (PW) with margin $\delta=1/(k(k+1))$ deterministically (failure probability $p_W=0$), returning a committee of at most $k$ candidates.}
\begin{theorem}
\label{thm:pgjcr}
\pgjcrstatement
\end{theorem}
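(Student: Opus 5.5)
Since PGJCR reads only the known distribution $\mathcal{D}$ and never touches a voter, the claims $n_W=q_W=0$ and $p_W=0$ are immediate: the output is a deterministic function of $\mathcal{D}$, $C$ and $k$. The substance is therefore the two structural claims, $|\W|\leq k$ and (PW) with margin $\delta=1/(k(k+1))$. I will state the rule in probabilistic form. Start from $W=\varnothing$. For $\ell=k,k-1,\dots,1$, repeatedly add any $c\in C\setminus W$ with $p^W_{c,\ell}\geq \ell/(k+1)$, where $p^W_{c,\ell}=\Pr[c\in A(v)\wedge |A(v)\cap W|<\ell]$. When no such $c$ remains, decrement $\ell$.

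For (PW), I use the termination condition together with monotonicity. When the loop at level $\ell$ ends with current committee $W_\ell$, every $c\notin W_\ell$ has $p^{W_\ell}_{c,\ell}<\ell/(k+1)$. The final committee $\W\supseteq W_\ell$, and the paper notes that $p_{c,\ell}$ is non-increasing as $W$ grows. Hence for every $c\notin\W$ we get $p_{c,\ell}<\ell/(k+1)=\ell/k-\ell/(k(k+1))\leq \ell/k-1/(k(k+1))$, using $\ell\geq 1$. The inequality is in fact strict, which is more than needed.

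The size bound is the main step, and I will mirror the GJCR counting argument with expectations in place of counts. Each time a candidate $c$ is added at level $\ell$, charge it to the event $E_c=\{c\in A(v),\ |A(v)\cap W|<\ell\}$, where $W$ is the committee just before the addition; by the selection rule, $\Pr[E_c]\geq \ell/(k+1)$. Spread this as a charge of $1/\ell$ per unit of probability mass, so each addition distributes total mass $\Pr[E_c]/\ell\geq 1/(k+1)$. Now fix a ballot $v$. It is charged only at additions of candidates it approves, and only while it approves fewer than $\ell$ members of $W$. Since $\ell$ is non-increasing over time, the $j$-th approved candidate added to $W$ (for $j=1,2,\dots$) is added at a level $\ell\geq j$, so the charge it receives is at most $1/j$.

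This bound is too weak on its own, because summing $1/j$ gives a harmonic series rather than $1$. The fix is to charge each voter exactly $1/\ell$ only at additions where it is among the ``$<\ell$'' voters, and to note the following. Let $\ell_1$ be the level at which $v$'s first approved addition occurs; then all later chargeable additions occur at levels $\ell\leq \ell_1$. At level $\ell$, voter $v$ can be charged at most $\ell-t$ times, where $t$ is its current approval count. So the total charge is at most $\max_\ell \ell\cdot(1/\ell)=1$, once I verify that $v$'s charges never exceed $\ell$ times $1/\ell$ across levels. This is a telescoping argument: at the first level where $v$ is charged, it receives at most $\ell$ charges of $1/\ell$ each, and at lower levels it has already been pushed out of the ``$<\ell$'' set.

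Granting that each ballot carries charge at most $1$, the expected total charge is at most $1$. If $|W|$ additions each contribute at least $1/(k+1)$, then $|W|/(k+1)\leq 1$. This alone only yields $|W|\leq k+1$, so I will either show strictness to rule out equality or use the threshold slack more tightly; pinning down this boundary case is the obstacle I expect. Finally, if $|W|<k$, the statement allows at most $k$ candidates, so no padding is needed.
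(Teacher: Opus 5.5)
\textbf{Verdict: genuine gap in the size bound.} Your strategy matches the paper's: a fractional charging argument for $|\W|\le k$, and termination plus monotonicity of $p^W_{c,\ell}$ in $W$ for (PW). Your (PW) part is in fact more careful than the paper's. But the size bound has a genuine gap exactly at the boundary case you flag, and under the rule as you restated it that case cannot be closed.

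You wrote the selection test as $p^W_{c,\ell}\geq \ell/(k+1)$. PGJCR (Algorithm~\ref{alg:PGJCR}) instead adds $c$ only when $\mathbb{E}|N_{c,\ell}|>n\ell/(k+1)$, i.e.\ only when $p^W_{c,\ell}>\ell/(k+1)$. With your non-strict test the claim $|\W|\le k$ is false. Take $k+1$ candidates and let each voter approve exactly one of them, each with probability $1/(k+1)$ (e.g.\ disjoint intervals in $d=1$). At every level $\ell\ge 2$ each candidate's mass is $1/(k+1)<\ell/(k+1)$, so nothing is added. At $\ell=1$ each mass equals $1/(k+1)$ and is unaffected by the other additions, so all $k+1$ candidates are selected. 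No tighter use of the slack rescues this; the missing ingredient is the strict threshold. With it, each addition distributes mass $\Pr[E_c]/\ell>1/(k+1)$, so $|\W|/(k+1)<1$ and hence $|\W|\le k$. This is exactly where the paper uses $p_c>\ell/(k+1)$: it keeps every paying ballot's remaining budget $\theta_i(X)$ strictly positive, so the total cannot reach $0$ after $k+1$ selections. (PW) still goes through under the strict rule, since termination now gives $p_{c,\ell}\le \ell/(k+1)\le \ell/k-1/(k(k+1))$.

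\textbf{Secondary issue: the per-ballot charge bound.} Your justification that each ballot's total charge is at most $1$ is not right as stated. A ballot charged once at level $3$ is not ``pushed out'' at level $2$; it can be charged again there. The bound itself is true, by a one-line argument. Suppose a ballot is charged $J$ times, at non-increasing levels $\ell_1\ge\cdots\ge\ell_J$. At the $J$-th charge it already approves the $J-1$ previously charged committee members, so $J-1\le|A(v)\cap W|<\ell_J$. Hence
\[
\sum_{j}1/\ell_j\le J/\ell_J\le 1.
\]
This is the same inequality the paper uses, phrased there as payments $p_X/p_c<(k+1)p_X/\ell$ against an initial budget $(k+1)p_X$. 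Your $1/\ell$-per-unit-mass normalisation is equivalent.
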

For the size bound, we distribute a total budget of $k+1$ proportionally across the ballots and charge each elected candidate a cost of $1$ split among the ballots that contributed to its selection; because PGJCR guarantees a sufficiently large contributing mass at each step, and no voter can exhaust her budget, the total budget (decreasing by exactly $1$ per iteration from an initial $k+1$) admits at most $k$ selections.
For the (PW) guarantee: by construction, PGJCR only adds $c$ to $W$ when $p_{c,\ell}>\ell/(k+1)$; hence for any $c\notin W$ and $\ell\in[k]$, $p_{c,\ell}\leq\ell/(k+1)=\ell/k-\ell/(k(k+1))\leq\ell/k-1/(k(k+1))$, satisfying (PW) with margin $\delta=1/(k(k+1))$.
\subsection{Query-based W-selection (NGJCR)}
\label{sec:ngjcr}

When the distribution $\mathcal{D}$ is unknown, we cannot run PGJCR directly.
NGJCR adapts the Noisy GJCR of \citep{DBLP:conf/sigecom/Brill023} by replacing each exact-expectation check in PGJCR with a sample-based estimate.
For each level $\ell$, it partitions $C\setminus W$ into batches of at most $\beta$ candidates, draws $h_1$ fresh voters per batch (each queried via \texttt{resolve}), and adds $c$ to $W$ when the empirical witness fraction $\zeta_c/h_1$ exceeds
$
    q^* = \frac{(2k+1)\ell}{2k(k+1)} - \frac{\delta_1}{2},
$
the midpoint between the ``tiny'' upper bound $\ell/(k+1)$ and the ``large'' lower bound $\ell/k-\delta_1$.
The formal pseudocode is deferred to \Cref{app:algorithms}.

\newcommand{\ngjcrstatement}{Let $p_W>0$, let $\beta\geq k$ be the size of the candidate partitions, and let $0<\delta_1\leq 1/(k(k+1))-\alpha/k^2$ for some constant $\alpha>0$.
When $\mathcal{D}$ is unknown, NGJCR (\Cref{alg:noisygjcr}) consumes a pool of $n_W=\Theta(h_1km/\beta)$ voters where $h_1=k^4/\alpha^2 \log(4mk/p_W)$, each asks $q_W=\Oh(d\log\beta)$ \planar\ queries, and supplies (PW) with margin $\delta_1$ with probability $1-p_W$.}
\begin{theorem}
\label{thm:ngjcr}
\ngjcrstatement
\end{theorem}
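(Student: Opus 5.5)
The proof splits into three parts: the pool and query cost, which are bookkeeping; a uniform concentration bound over all sample estimates; and a deterministic analysis of PGJCR run with noisy thresholds, conditional on that bound.

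\textbf{Cost.} Fix the structure of \Cref{alg:noisygjcr}. For each level $\ell\in[k]$, $C\setminus W$ is split into at most $\lceil m/\beta\rceil$ batches, and each batch gets $h_1$ fresh voters. The pool size is therefore at most $k\lceil m/\beta\rceil h_1=\Theta(h_1km/\beta)$. Each pooled voter is queried by one \texttt{resolve} call on the query set formed by its batch together with the current $W$. That set has at most $\beta+k\le 2\beta$ elements, using $\beta\ge k$ and assuming $|W|$ stays at most $k$. \Cref{prop:queryingqueries} then gives $q_W=\Oh(d\log\beta)$. \Cref{prop:approximateapprovals} says $\hat A(v)$ is exact on this query set. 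Hence the empirical count $\zeta_c$ is exactly the number of sampled voters in $N^{W}_{c,\ell}$, where $W$ is the committee at the time of the check.

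\textbf{Concentration.} The key point is that the $h_1$ voters of a batch are fresh, so they are independent of the current $W$. Conditional on the history, $\zeta_c$ is a sum of $h_1$ i.i.d.\ Bernoulli$(p^{W}_{c,\ell})$ variables. One subtlety: candidates added earlier in the same batch change $W$. I would either evaluate all checks in a batch against the $W$ fixed at the start of the batch, or note that the per-voter indicators are determined once the rectangle is known, since $W\subseteq$ query set. In both cases I condition on $W$ before applying \Cref{fact:hoeffding} with deviation $t$. I choose $t$ so that $2e^{-2h_1t^2}\le p_W/(2mk)$. With $h_1=(k^4/\alpha^2)\log(4mk/p_W)$ this allows $t=\alpha/k^2$, up to the constant. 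A union bound over all at most $mk$ (candidate, level) checks then gives the good event $\mathcal{G}$: every check has $|\zeta_c/h_1-p^{W}_{c,\ell}|<t$. It holds with probability at least $1-p_W$.

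\textbf{Correctness on $\mathcal{G}$.} The gap between the two thresholds is $\ell/k-\delta_1-\ell/(k+1)\ge 1/(k(k+1))-\delta_1\ge\alpha/k^2$. So $q^*$ sits at distance at least $\alpha/(2k^2)$ from both thresholds, and I need $t\le\alpha/(2k^2)$, adjusting constants in $h_1$ if necessary. Then on $\mathcal{G}$ two facts hold.
\begin{enumerate}
\item[(a)] Any $c$ that is added has $p^{W}_{c,\ell}>\ell/(k+1)$ at the time of addition.
\item[(b)] Any $c$ that is rejected has $p^{W}_{c,\ell}<\ell/k-\delta_1$ at that time.
\end{enumerate}
For (PW), take $c\notin W_{\text{final}}$ and any $\ell$. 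At level $\ell$, $c$ was checked and rejected, so by (b) $p^{W}_{c,\ell}\le\ell/k-\delta_1$. Since $p_{c,\ell}$ is non-increasing as $W$ grows, the bound persists to the final committee.

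\textbf{Size bound (main obstacle).} I expect this to be the hardest step. The plan is to reuse the budget argument sketched for \Cref{thm:pgjcr}, which relies only on every addition having contributing mass greater than $\ell/(k+1)$. Fact (a) supplies exactly that. Give each ballot a budget of $(k+1)$ times its probability mass. Charge each added candidate a cost of $1$, split among the ballots in $N^{W}_{c,\ell}$, which have fewer than $\ell$ winners so far. Each such ballot pays at most $1/\ell$ per winner it approves, so no ballot's total charge exceeds its budget. This caps the number of selections at $k$ on $\mathcal{G}$. I need to check that this argument goes through with the mass bound coming from $p^W$ rather than from the exact PGJCR test; if it does, $|W|\le k$ and the query-set bound used for $q_W$ are justified.
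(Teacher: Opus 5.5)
The gap is in the step you flagged as a subtlety, where several candidates are added on the strength of one sample. Neither of your two fixes works.

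Freezing $W$ at the start of the level (or batch) gives each added $c$ only $p^{W_0}_{c,\ell}>\ell/(k+1)$. The budget argument of \Cref{thm:pgjcr}, however, charges $c$ to ballots with $|X\cap W_c|<\ell$, where $W_c$ is the committee current at the moment of selection. With a frozen $W$, a ballot can be charged more than $\ell$ times at level $\ell$, and the size bound really fails: if every voter approves every candidate, all $m$ candidates clear $q^*$ at level $k$ against $W_0=\varnothing$.

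Re-using the same voters against the updated $W$ fails as well. A candidate added from batch $B_s$ is not in the query set $W_0\cup B_t$ of the voters serving another batch $B_t$, so $|A(v)\cap W|$ is no longer determined for them. Even within a single batch, the updated $W$ is a function of the very sample being re-used. You therefore cannot condition on $W$ and then apply \Cref{fact:hoeffding} to that sample. A union bound over all reachable committees would repair this, but at roughly an extra factor $k$ in $h_1$.

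The missing idea is the sampling schedule of \Cref{alg:noisygjcr}. Each iteration of the outer loop adds at most one candidate and then draws fresh pools, so every check is made against a committee fixed by the history and independent of the voters used. Then your fact (a) holds with $W=W_c$, and the PGJCR budget argument goes through verbatim, which settles your final ``need to check''.

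The price is that you must union-bound over rounds, not levels. At most $k$ additions plus $k$ decrements give at most $2k$ rounds, hence $2km$ checks and a pool of at most $2k\lceil m/\beta\rceil h_1$. The $2k$-round bound itself uses $|W|\le k$, which holds only on the good event. So define that event over the first $2k$ rounds that occur, and then show that on it the loop halts by round $2k$. This is what the paper does in the proof of \Cref{thm:netngjcr}; its proof of the present theorem only union-bounds over $mk$ pairs and is silent on this point.

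Your caution about constants is warranted. The gap $g=\frac{1}{2}\bigl(\ell/(k(k+1))-\delta_1\bigr)$ equals $\alpha/(2k^2)$ when $\ell=1$ and $\delta_1$ is at its upper limit. The paper's claim $g\geq\alpha/k^2$ is therefore off by a factor $2$, and $h_1$ must be enlarged by a constant factor for the union bound to close.
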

In particular, with $\beta=\Theta(k^5)$ each pooled voter answers $q_W=\Oh(d\log k)$ queries, pool $n_W=\Theta(m\log(m/p_W))$, while with $\beta=\Theta(m)$ each answers $q_W=\Oh(d\log m)$, pool $n_W=\Theta(k^5\log(m/p_W))$).
For the (PW) guarantee: call $c$ \emph{tiny} at level $\ell$ if $p_{c,\ell}\leq\ell/(k+1)$ and \emph{large} if $p_{c,\ell}\geq\ell/k-\delta_1$.
The hypothesis $\delta_1\leq 1/(k(k+1))-\alpha/k^2$ ensures $q^*$ sits $\Omega(\alpha/k^2)$ above the tiny bound and $\Omega(\alpha/k^2)$ below the large bound.
\Cref{fact:hoeffding} bounds the probability that any tiny candidate clears $q^*$ or any large candidate fails to, across all $mk$ (candidate, level) pairs, by $p_W$.
On the success event no tiny candidate is selected and every large candidate is selectable; any $c\notin W$ went unselected hence is not large, so $p_{c,\ell}\leq\ell/k-\delta_1$, confirming (PW) with margin $\delta_1$.
The size bound follows from the budget argument of \Cref{thm:pgjcr}, since each selected candidate has true mass $>\ell/(k+1)$ at its selection level.

\begin{remark}
\label{rem:ngjcr-framework}
Unlike PGJCR, which supplies (PW) deterministically, \Cref{thm:ngjcr} supplies (PW) only with probability $1-p_W$.
The framework (\Cref{thm:multidim}) contributes a second failure probability $\eta$: even when $\W$ does satisfy (PW), \texttt{verify} may reject it and trigger the expensive fallback.
The fallback therefore occurs with probability at most $p_W+\eta$ by a union bound, and setting both to $\Oh(1/\log m)$ keeps the total $\Oh(1/\log m)$ without changing the asymptotic per-voter query count.
When the $\widehat{F}$-estimation module is also randomised (\Cref{sec:festimation}), a third failure probability $p_F$ enters by the same argument.
\end{remark}
\subsection{A Special Case: the One-Dimensional Setting}
\label{sec:onedim}
The one-dimensional case $d=1$ admits a particularly clean W-selection rule that closely mirrors the quantile construction of \cite{DBLP:journals/corr/abs-2510-11625}: both place committee members near equally spaced quantile positions on the line.
Where \cite{DBLP:journals/corr/abs-2510-11625} selects the candidates closest to the marked points $i/(k+1)$ under the combined mixture CDF, we instead make a single {\qsl} call on the left-endpoint marginal CDF $\widehat{F}_a$, i.e.\ $\W={\qsl}(C,1/(k+1),\widehat{F}_a)$.
Despite the resemblance, quantising the left-endpoint marginal rather than the RIV mixture CDF is what carries essentially the same construction to exact EJR+.

Because the selection depends on the voters only through $\widehat{F}_a$, the accuracy of that single estimate is all that governs how evenly the chosen candidates are spread, a phenomenon special to one dimension, where a (PF) guarantee alone already yields a (PW) guarantee.
\Cref{thm:1dpry} makes this precise: the (PF) error $\varepsilon$ of $\widehat{F}_a$ passes directly into the margin of (PW).

\newcommand{\onedprystatement}{Given a CDF estimate $\widehat{F}_a$ satisfying (PF) with error $\varepsilon$ (\Cref{def:PF}), the rule $\W={\qsl}(C,1/(k+1),\widehat{F}_a)$ consumes no pool and makes no queries ($n_W=0$, $q_W=0$), and supplies (PW) with margin $\delta=\frac{1}{k(k+1)}-2\varepsilon$ deterministically ($p_W=0$), provided $\varepsilon<\frac{1}{2k(k+1)}$.
}
\begin{theorem}[1-D W-selection]
\label{thm:1dpry}
\onedprystatement
\end{theorem}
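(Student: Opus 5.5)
The plan is to argue directly in one dimension. Write $\qsarg=1/(k+1)$ and $A^*(v)=[a_v,b_v]$. The size bound is immediate: the loop of \Cref{alg:quantile-select} runs for $r=1,\dots,\lceil k+1\rceil-1=k$ and adds at most one candidate per step, so $|\W|\leq k$. Since no voters are drawn or queried, $n_W=q_W=0$ and $p_W=0$ follow by construction. The substance is to show that for $c\in C\setminus\W$ and $\ell\in[k]$,
\[
p_{c,\ell}\leq \frac{\ell}{k+1}+2\varepsilon=\frac{\ell}{k}-\frac{\ell}{k(k+1)}+2\varepsilon\leq \frac{\ell}{k}-\delta .
\]

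\textbf{Reducing to the left endpoint.} Fix $c\notin\W$ and a voter $v\in N_{c,\ell}$, so $a_v\leq c\leq b_v$. Every $u\in\W$ with $a_v\leq u\leq c$ also satisfies $u\leq b_v$, so $v$ approves it. Let $r^*=\min\{\lfloor (k+1)\widehat{F}_a(c)\rfloor,k\}$. As in the proof of \Cref{lem:lefty}, for every step $r\leq r^*$ the candidate $c$ is still available and lies in $Q_r$. Hence the selected $u^{(r)}$ satisfies $r\qsarg\leq\widehat{F}_a(u^{(r)})\leq\widehat{F}_a(c)$, and these $u^{(r)}$ are distinct. The next step is to convert ``$\widehat{F}_a(u)\leq\widehat{F}_a(c)$'' into ``$u\leq c$'' in position. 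I expect this to be the main obstacle, because $\widehat{F}_a$ is only weakly monotone and ties could place $u$ to the right of $c$. I will handle it first through the fixed tie-breaking rule (\Cref{rem:strictmono}), taking it to order tied candidates by coordinate, so that ties are broken leftward. If that is not available, I will instead note that a tie $\widehat{F}_a(u)=\widehat{F}_a(c)$ with $u>c$ means that, up to the $2\varepsilon$ slack, the interval $(c,u]$ carries no $F_a$-mass, so such a $u$ is approved by essentially the same voters.

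\textbf{Case $r^*\geq\ell$.} Consider the $\ell$ selections made at levels $r^*-\ell+1,\dots,r^*$; all of them lie weakly left of $c$. Let $u$ be the one of smallest $\widehat{F}_a$-value, so $\widehat{F}_a(u)\geq(r^*-\ell+1)\qsarg$. If $a_v\leq u$, then $v$ approves all $\ell$ of these candidates, which contradicts $v\in N_{c,\ell}$. Hence $u<a_v\leq c$, and using (PF) twice,
\[
p_{c,\ell}\leq F_a(c)-F_a(u)\leq \widehat{F}_a(c)-\widehat{F}_a(u)+2\varepsilon<(r^*+1)\qsarg-(r^*-\ell+1)\qsarg+2\varepsilon=\frac{\ell}{k+1}+2\varepsilon .
\]
Here $\widehat{F}_a(c)<(r^*+1)\qsarg$ holds by definition of $r^*$ when $r^*<k$. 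When $r^*=k$ I use $\widehat{F}_a(c)\leq 1=(k+1)\qsarg$; if equality can occur, I will shift to the true CDF, using $F_a(c)\leq 1$ directly, which the $2\varepsilon$ slack absorbs.

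\textbf{Case $r^*<\ell$.} Here I only use $a_v\leq c$, which gives
\[
p_{c,\ell}\leq F_a(c)\leq\widehat{F}_a(c)+\varepsilon<(r^*+1)\qsarg+\varepsilon\leq\ell\qsarg+\varepsilon .
\]

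Combining the two cases yields the displayed bound on $p_{c,\ell}$, which is (PW) with margin $\delta=\frac{1}{k(k+1)}-2\varepsilon$. The hypothesis $\varepsilon<\frac{1}{2k(k+1)}$ is exactly what makes this margin positive.
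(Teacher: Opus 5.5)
Your argument is correct and is essentially the paper's. The paper bounds $\widehat F_a(c)-\widehat F_a(w^-_\ell)$, for the $\ell$-th $\W$-candidate $w^-_\ell$ to the left of $c$, via its Pigeonhole and Left-count lemmas, and then uses $N_{c,\ell}\subseteq\{v:a_v\in(w^-_\ell,c]\}$. You instead take the leftmost of the candidates selected at levels $r^*-\ell+1,\dots,r^*$, which gives the same $\ell/(k+1)+2\varepsilon$ bound. Your $r^*<\ell$ case, bounding by $F_a(c)$ directly, also avoids the paper's $w^-_\ell=0$ convention.

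Only your first way of handling ties works, and it is genuinely needed. The $\iota x$ perturbation of \Cref{rem:strictmono} is exactly leftward tie-breaking, and it makes ``smallest $\widehat F_a$-value'' coincide with ``leftmost'', which your choice of $u$ requires. Your fallback fails: a flat stretch of $\widehat F_a$ on $(c,u]$ says nothing about right endpoints $b_v\in[c,u)$. Indeed, with adversarial tie-breaking the statement is false.
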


%%%%%%%%%%%%%%%%%%%%%%%%%%%%%%%%%%%%%%%%%%%%%%%%%%%%%
\section{$\widehat{F}$-Estimation Modules}
\label{sec:festimation}
Recall that the framework also needs accurate per-dimension CDF estimates: property (PF).
When $\mathcal{D}$ is known these are computed exactly; otherwise we estimate them with a single ``batch estimator'' that queries each pooled voter on a batch of at most $\beta$ candidates.
The batch size $\beta\in[1,m]$ trades pool size against per-voter load: $\beta=m$ is a single batch covering all of $C$ at load $\Oh(d\log m)$, while smaller batches cut the load to $\Oh(d\log\beta)$ at the cost of a larger pool.
Each candidate is assigned to one of the $m/\beta$ batches,\footnote{The partition into batches is arbitrary: any assignment works.} and batch $B$ is served by a fresh pool $V_B$ of $h_2$ voters resolved only on $B$, so each answers $\Oh(d\log\beta)$ queries and the per-batch estimate $\widehat{F}^i_{a,B}(c)=\frac{1}{h_2}|\{v\in V_B:a^i_v\leq c_i\}|$ (for $c\in B$, axis $i$) is a genuine empirical CDF of $h_2$ i.i.d.\ samples.
We report, on each axis, the running maximum $\widehat{F}^i_a(c):=\max\{\widehat{F}^i_{a,B(c')}(c'):c'\in C,\ c'_i\leq c_i\}$, the least non-decreasing function dominating the per-batch values (a valid CDF estimate), and same for $\widehat{F}^i_b$.
\newcommand{\logkblockstatement}{The estimate $\widehat{F}^i_a$ satisfies $\Pr[\max _{c\in C}|\widehat{F}^i_a(c)-F^i_a(c)|>\varepsilon]\leq \frac{2m}{\beta}\exp(-2h_2\varepsilon^2)$, and for $\widehat{F}^i_b$.}
\begin{lemma}
\label{lem:logk-block}
    \logkblockstatement
\end{lemma}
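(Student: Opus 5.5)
Fix an axis $i$; the case of $\widehat{F}^i_b$ is identical. The proof has two steps: control each per-batch empirical CDF by DKW, then check that the running maximum does not increase the error.

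\textbf{Step 1 (per-batch control).} For each batch $B$, the values $\widehat{F}^i_{a,B}(c)$ for $c\in B$ are evaluations of the empirical CDF of the $h_2$ i.i.d.\ samples $\{a^i_v:v\in V_B\}$. These samples have true CDF $F^i_a$, since the voters in $V_B$ are fresh draws from $\mathcal{D}$. By \Cref{fact:dkw},
\[
\Pr\Bigl[\sup_x|\widehat{F}^i_{a,B}(x)-F^i_a(x)|>\varepsilon\Bigr]\leq 2e^{-2h_2\varepsilon^2}.
\]
One subtlety: \texttt{resolve} on $B$ only reveals brackets, not $a^i_v$ itself. However, $\mathds{1}[a^i_v\leq c_i]$ is exactly determined for $c\in B$ by \Cref{prop:approximateapprovals} applied with $P=B$. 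The bracket endpoints on axis $i$ sit in $B$'s coordinates, so whether $a^i_v\leq c_i$ is read off from the bracket, and the per-batch estimate is genuinely the empirical CDF evaluated at those points. A union bound over the $m/\beta$ batches then shows that the good event $\mathcal{G}$, on which every batch has uniform error at most $\varepsilon$, fails with probability at most $\frac{2m}{\beta}e^{-2h_2\varepsilon^2}$.

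\textbf{Step 2 (running maximum).} On $\mathcal{G}$, take any $c\in C$. For the lower bound, $c$ itself is a valid index in the max (since $c_i\leq c_i$), so
\[
\widehat{F}^i_a(c)\geq\widehat{F}^i_{a,B(c)}(c)\geq F^i_a(c)-\varepsilon.
\]
For the upper bound, every term in the max has $c'_i\leq c_i$, so
\[
\widehat{F}^i_{a,B(c')}(c')\leq F^i_a(c'_i)+\varepsilon\leq F^i_a(c_i)+\varepsilon
\]
by monotonicity of $F^i_a$. Hence $|\widehat{F}^i_a(c)-F^i_a(c)|\leq\varepsilon$ for all $c\in C$ on $\mathcal{G}$, which gives the claimed bound.

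The only points needing care are the exactness of the threshold indicators from \texttt{resolve}, and that monotonicity of the true CDF is what lets the running maximum keep the error at $\varepsilon$ rather than doubling it. The union bound is over batches rather than candidates, which is where the factor $m/\beta$ comes from.
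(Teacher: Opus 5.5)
Your proof is correct and follows the paper's argument essentially step for step: per-batch DKW on the empirical CDF of each fresh pool, a union bound over the $m/\beta$ batches, and the running-maximum check (lower bound via $c$ itself, upper bound via monotonicity of $F^i_a$ at a maximizing $c'$ that may lie in another batch, which is why the good event must cover all batches). One small correction: exactness of $\mathds{1}[a^i_v\leq c_i]$ for $c\in B$ comes from the per-axis bracket $\hat a^i_v<a^i_v\leq\check a^i_v$ with consecutive $B$-coordinates (\Cref{lem:boundsonapp}), not from \Cref{prop:approximateapprovals}, which only certifies membership in the whole rectangle and so cannot separate the axes; your follow-up sentence about bracket endpoints on axis $i$ already gives the right reason.
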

Each per-batch estimate is an empirical CDF, so \Cref{fact:dkw} plus a union bound over the $m/\beta$ batches gives the stated probability. The full proof is deferred to \Cref{app:logk}.
\begin{theorem}
\label{thm:logk-estimator}
    Let $p_F>0$ and $1\leq\beta\leq m$.
When $\mathcal{D}$ is unknown, this estimator consumes a pool of $n_F=(m/\beta)\,h_2$ voters, with $h_2=\left(\log\left(\frac{4dm}{\beta p_F}\right)\right)/(2\varepsilon^2)$ voters per batch, each asked $q_F=\Oh(d\log\beta)$ \planar\ queries, and supplies (PF) with error $\varepsilon$ with probability $1-p_F$.
\end{theorem}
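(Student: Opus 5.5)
The plan is to derive \Cref{thm:logk-estimator} from \Cref{lem:logk-block} with a union bound and a plug-in of $h_2$; the pool size and per-voter load then follow from the batch structure.

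\textbf{Pool size and query load.} The candidates are split into $\lceil m/\beta\rceil$ batches, and each batch $B$ is served by its own fresh pool $V_B$ of $h_2$ voters. The total pool is therefore $n_F=(m/\beta)h_2$, up to rounding of the batch count, which I would absorb into the stated form. Each pooled voter $v\in V_B$ is resolved only on $B$, i.e.\ by a single call \texttt{resolve}$(v,B)$. By \Cref{prop:queryingqueries} this costs at most $2d\lceil\log_2(|B|+1)\rceil=\Oh(d\log\beta)$ \planar\ queries. I would also note that this call determines the quantities $\mathds{1}[a^i_v\le c_i]$ for $c\in B$ that the estimator needs. By the exactness $A(v)\cap B=\hat A(v)\cap B$ of \Cref{prop:approximateapprovals}, the binary search locates $a^i_v$ relative to every coordinate in $B$. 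This is the step I would check most carefully, in particular the boundary convention between the strict-inequality \planar\ answer and the event $a^i_v\le c_i$. If the conventions disagree, I would adjust $\widehat F$ to use the matching inequality; this changes nothing in the DKW argument.

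\textbf{Accuracy.} For each axis $i\in[d]$ and each endpoint type $a$ or $b$, \Cref{lem:logk-block} gives a failure probability of at most $\frac{2m}{\beta}e^{-2h_2\varepsilon^2}$. There are $2d$ such estimates, so a union bound gives
\[
\Pr[\text{(PF) fails}]\le 2d\cdot\frac{2m}{\beta}e^{-2h_2\varepsilon^2}=\frac{4dm}{\beta}e^{-2h_2\varepsilon^2}.
\]
Substituting $h_2=\log\bigl(\frac{4dm}{\beta p_F}\bigr)/(2\varepsilon^2)$ makes the exponential equal to $\frac{\beta p_F}{4dm}$, so the bound is exactly $p_F$. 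Hence (PF) with error $\varepsilon$ holds with probability at least $1-p_F$.

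The only subtle point is already handled inside \Cref{lem:logk-block}: taking the running maximum across batches does not destroy accuracy. The true CDF $F^i_a$ is monotone, so if every per-batch value lies within $\varepsilon$ of the truth, then so does their least monotone majorant. The theorem is therefore essentially bookkeeping once the lemma is in hand.
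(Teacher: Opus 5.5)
Your proposal is correct and follows the paper's proof: a union bound of \Cref{lem:logk-block} over the $d$ axes and both endpoints gives failure probability at most $\frac{4dm}{\beta}e^{-2h_2\varepsilon^2}=p_F$ for the stated $h_2$, and \Cref{prop:queryingqueries} applied to a single batch gives the $\Oh(d\log\beta)$ load. One small correction to your side remark: resolving on $B$ determines $\mathds{1}[a^i_v\le c_i]$ for every $c\in B$ because of the per-axis bracketing of \Cref{lem:boundsonapp}, which the paper uses inside the lemma, not because of the approval-exactness $A(v)\cap B=\hat{A}(v)\cap B$ (which alone carries no per-axis information), and the conventions you worried about do agree, since a negative \planar$(c,v,i,>)$ answer is exactly the event $a^i_v\le c_i$.
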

\begin{proof}
Union bounding \Cref{lem:logk-block} over the $d$ dimensions and both endpoints, the probability that any estimate is more than $\varepsilon$ from its true CDF is at most $\frac{4dm}{\beta}\exp(-2h_2\varepsilon^2)$, which is at most $p_F$ once $h_2\geq\frac{\log(4dm/(\beta p_F))}{2\varepsilon^2}$; this gives (PF) with error $\varepsilon$ with probability $1-p_F$.
For the query cost, each pooled voter is resolved on a single batch of $\beta$ candidates, so by \Cref{prop:queryingqueries} answers $\Oh(d\log\beta)$ \planar\ queries.
\end{proof}

\section{Smooth electorates: Lipschitz distributions}
\label{sec:lipschitz}
The unknown-distribution modules of \Cref{sec:ngjcr,sec:festimation} keep the required electorate dependent on $m$: an adversarial distribution can hide a cohesive group behind an arbitrarily fine distinction between candidate positions, forcing elicitation everywhere.
We propose that real electorates are seldom so adversarial: nearby candidates should be approved by almost the same voters, so approval statistics vary smoothly across candidate space.
We formalise this with a bounded density on the rectangle-endpoint space, which has two consequences: the marginal endpoint CDFs are $K$-Lipschitz (\Cref{lem:lipschitz-cdf}), and so is the cohesive-group mass as a function of candidate position (\Cref{lem:lip-stability}).
Each lets a module elicit voters on only an $m$-independent set of candidates: a \emph{net} for W-selection (\Cref{sec:netngjcr}) and a quantile \emph{grid} for $\widehat{F}$-estimation (\Cref{sec:lipshitz}), which combine into an EJR+ guarantee.
All proofs in this section are deferred to \Cref{app:lipschitz}.

\begin{definition}
\label{def:lipschitz}
    We say that an \mname\ is \emph{$K$-Lipschitz} if its distribution $\mathcal{D}$ over axis-aligned hyper-rectangles is absolutely continuous, with a density $f$ on the endpoint space $\mathcal{R}=\{(\underline a,\underline b)\in[0,1]^{2d}:\underline a\leq\underline b\}$ identifying a rectangle $R=\prod_{i\in[d]}[a_i,b_i]$ with its endpoint vector $(a_1,b_1,\dots,a_d,b_d)$ such that $f(R)\leq K$ for every $R\in\mathcal{R}$.
\end{definition}
This density bound is exactly what is needed for the name: bounding $f$ by $K$ directly makes the endpoint CDFs $K$-Lipschitz.

\newcommand{\lipcdfstatement}{If an \mname\ is $K$-Lipschitz, then for any $i\in[d]$,
    $|F^i_a(x)-F^i_a(y)|\leq K|x-y|$ and similar for $b$.}
\begin{lemma}
\label{lem:lipschitz-cdf}
    \lipcdfstatement
\end{lemma}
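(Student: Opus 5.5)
The plan is to pass from the joint density bound to a bound on the one-dimensional marginal density of each endpoint coordinate, and then integrate that marginal density over an interval. I will carry out the argument for the lower endpoint $a_i$ on a fixed axis $i\in[d]$. The upper endpoint $b_i$ is identical with the roles of the coordinates relabelled, which gives the ``similar for $b$'' clause.

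\textbf{Step 1: the marginal density.} By \Cref{def:lipschitz}, $\mathcal{D}$ is absolutely continuous with density $f$ on $\mathcal{R}\subseteq[0,1]^{2d}$; extend $f$ by $0$ outside $\mathcal{R}$. By Fubini--Tonelli, the random variable $a^i_v$ is absolutely continuous with marginal density
\[
g^i_a(t)=\int_{[0,1]^{2d-1}} f(t,\underline{z})\,d\underline{z},
\]
where $\underline{z}$ ranges over the remaining $2d-1$ endpoint coordinates. Since $f\leq K$ pointwise and the integration domain has Lebesgue measure at most $1$, we get $g^i_a(t)\leq K$ for almost every $t$. This uses the normalisation of the issue space to $[0,1]^d$: the volume factor is at most $1$, so no constant other than $K$ appears.

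\textbf{Step 2: integrate.} For $y\leq x$ we have
\[
F^i_a(x)-F^i_a(y)=\Pr[y<a^i_v\leq x]=\int_{(y,x]} g^i_a(t)\,dt\leq K\,\lambda\bigl((y,x]\cap[0,1]\bigr)\leq K|x-y|.
\]
Monotonicity of the CDF gives the absolute value, and the case $x<y$ follows by symmetry. If $x$ or $y$ lies outside $[0,1]$, $F^i_a$ is constant ($0$ or $1$) there, and intersecting with $[0,1]$ only shrinks the interval, so the bound still holds for all real $x,y$. In particular it holds at the $\pm\infty$ sentinels used by \Cref{lem:quantile-gaps}.

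\textbf{Main obstacle.} The only delicate point is Step 1, namely justifying that the marginal exists and is bounded by $K$ rather than by $K$ times some volume factor. The Fubini argument above handles this, provided we note that atoms are impossible under absolute continuity, so $\Pr[a^i_v=x]=0$ and the choice of strict versus non-strict inequality in the CDF is immaterial.
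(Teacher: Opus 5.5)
Your proof is correct and is essentially the paper's argument. The paper skips the explicit marginal density and bounds $\Pr[y<a^i_v\leq x]$ directly by $K$ times the Lebesgue measure of the slab $\{y<a_i\leq x\}\cap\mathcal{R}$, which is at most $x-y$; your Fubini bound $g^i_a\leq K$ is exactly the calculation behind that slab-volume bound, and your treatment of $x,y\notin[0,1]$ harmlessly extends the paper's restriction to $0\leq y\leq x\leq 1$.
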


\subsection{W-selection via a candidate net}
\label{sec:netngjcr}
We first construct a committee satisfying (PW). The enabling fact is that the cohesive-group mass is Lipschitz in the candidate, uniformly in the committee built so far and in the level, so NGJCR can be run eliciting voters on only an $m$-independent \emph{net} of candidates.

\newcommand{\lipstabilitystatement}{Let an \mname\ be $K$-Lipschitz.
For every $W\subseteq C$, every $\ell\in[k]$, and all $c,c'\in[0,1]^d$,
$
    \left|p_{c,\ell}-p_{c',\ell}\right|\ \leq\ K\,\|c-c'\|_1 .
$}
\begin{lemma}[Lipschitz stability of cohesive mass]
\label{lem:lip-stability}
\lipstabilitystatement
\end{lemma}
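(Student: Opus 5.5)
We fix $W\subseteq C$ and $\ell\in[k]$ and write the cohesive mass at an arbitrary point $c\in[0,1]^d$ as
\[
p_{c,\ell}=\Pr_{R\sim\mathcal{D}}\bigl[c\in R\ \wedge\ |W\cap R|<\ell\bigr].
\]
The plan is to bound $|p_{c,\ell}-p_{c',\ell}|$ by the probability of the symmetric difference of the two events. Let $G_W=\{R:|W\cap R|<\ell\}$; this set depends only on $W$ and $\ell$, not on $c$. Then
\[
|p_{c,\ell}-p_{c',\ell}|\leq \Pr\bigl[(\mathds{1}[c\in R]\neq\mathds{1}[c'\in R])\wedge R\in G_W\bigr]\leq\Pr\bigl[\mathds{1}[c\in R]\neq\mathds{1}[c'\in R]\bigr].
\]
Discarding $G_W$ in the last step is what makes the bound uniform in $W$ and $\ell$, as the statement requires.

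Next we reduce to one coordinate at a time. We interpolate along a path $c=z^0,z^1,\dots,z^d=c'$, where $z^j$ agrees with $c'$ on the first $j$ coordinates and with $c$ on the rest. Consecutive points differ only on axis $j$, so by the triangle inequality it suffices to show
\[
\Pr\bigl[\mathds{1}[z^{j-1}\in R]\neq\mathds{1}[z^j\in R]\bigr]\leq K|c_j-c'_j|.
\]
If the two points differ only on axis $j$ and membership differs, then all other coordinates lie in their intervals, and on axis $j$ one of the two values $x=c_j$, $y=c'_j$ is in $[a_j,b_j]$ while the other is not. This means either $a_j$ or $b_j$ lies between $x$ and $y$, so the event is contained in $\{a_j\in I\}\cup\{b_j\in I\}$, where $I$ is the closed interval between $x$ and $y$.

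The main obstacle is the constant. By \Cref{lem:lipschitz-cdf} each of these two endpoint events has probability at most $K|x-y|$, and a union bound gives $2K|x-y|$, a factor of $2$ too large. I would try to recover the tight constant by noting that the two events can be split according to which side of $I$ the membership holds. For example, take $x<y$ with $x\in R$ and $y\notin R$; this forces $b_j\in[x,y)$. The reverse case forces $a_j\in(x,y]$. A finer bound would then follow from the density: the set $\{a_j\in I\}\cup\{b_j\in I\}$ should have measure controlled by a single slab of width $|x-y|$ times the density bound, provided the marginal-density computation behind \Cref{lem:lipschitz-cdf} effectively treats the endpoint space with the constraint $a\leq b$.

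If the factor $2$ cannot be removed this way, I would accept $2K$ and absorb it into $K$ by redefining the constant, noting that every later use, namely the net spacing, only needs $\Oh(K)$. Summing the per-axis bounds over $j\in[d]$ then yields $K\|c-c'\|_1$.
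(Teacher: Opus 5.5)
Your reductions are valid: bounding by the symmetric difference, discarding $G_W$, and telescoping along the coordinate path all work. The gap is the constant. The only per-axis bound you actually prove is $2K|c_j-c'_j|$, from a union bound over $\{a_j\in I\}$ and $\{b_j\in I\}$. The improvement to $K$ is only conjectured (``should have measure controlled by a single slab \ldots provided \ldots''). Redefining $K$ proves a weaker statement, not this lemma. The constant is also used exactly downstream: \Cref{thm:netngjcr} chooses $\rho=\alpha/(4Kdk^2)$ precisely so that $Kd\rho=\tau$.

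The missing idea is to be one-sided from the first step. Since $\Pr[E\cap G_W]-\Pr[E'\cap G_W]\leq\Pr[E\setminus E']$, we have
\[
p_{c,\ell}-p_{c',\ell}\leq\Pr[c\in R,\ c'\notin R].
\]
On this event some axis $i$ has $c_i\in[a_i,b_i]$ but $c'_i\notin[a_i,b_i]$, and the sign of $c'_i-c_i$ dictates which endpoint is responsible. If $c'_i<c_i$ then $a_i\in(c'_i,c_i]$; if $c'_i>c_i$ then $b_i\in[c_i,c'_i)$. This is exactly the case split you wrote down, but in the one-sided event only one case can occur on each axis. So \Cref{lem:lipschitz-cdf} gives $K|c_i-c'_i|$ with no factor $2$. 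A union bound over $i\in[d]$, followed by swapping $c$ and $c'$, finishes the proof, and no path interpolation is needed. This is the paper's argument.

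Your symmetric-difference route can be rescued only by using the joint density directly. With marginal Lipschitzness alone, the symmetric difference can genuinely carry mass $2K|c_j-c'_j|$. For example, take $d=1$, $a$ uniform on $[1/4,3/4]$, $b=a+1/4$, $K=2$, $c=0.6$, $c'=0.6+h$: the symmetric difference has probability $4h$. Under $f\leq K$, however, the union $\{a_j\in I\}\cup\{b_j\in I\}$ meets the triangle $\{0\leq a_j\leq b_j\leq 1\}$ in area $h-h^2/2\leq h$, where $h=|I|$. The other $d-1$ triangles have area $1/2$ each, so the probability is at most $Kh$. You would have to carry out that computation; as written it is asserted, not shown.
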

Intuitively, write $p_{c,\ell}=\Pr[\{c\in A(v)\}\cap J]$ with the committee event $J=\{|A(v)\cap W|<\ell\}$, which involves no candidate position. For two positions $c,c'$ the event $J$ is common to both, so $p_{c,\ell}$ and $p_{c',\ell}$ differ only through the approval events $\{c\in A(v)\}$ and $\{c'\in A(v)\}$; since $c\in A(v)$ means $a^i_v\leq c_i\leq b^i_v$ in every coordinate $i$, \Cref{lem:lipschitz-cdf} bounds this difference coordinate by coordinate, summing to $K\|c-c'\|_1$.
We discretise the candidates at a resolution fine enough that stability error is negligible against the (PW) margin.

\begin{definition}[Candidate net]
\label{def:net}
For $\rho>0$, partition $[0,1]^d$ into the cells $\prod_{i\in[d]}[\,j_i\rho,(j_i{+}1)\rho)$, $j\in\mathbb{Z}_{\geq0}^d$.
A \emph{$\rho$-net} is a set $\mathcal{N}\subseteq C$ containing exactly one candidate from each cell that meets $C$, together with the map $\mathrm{rep}:C\to \mathcal{N}$ sending each candidate to the representative of its cell.
By construction $\|c-\mathrm{rep}(c)\|_\infty<\rho$, hence $\|c-\mathrm{rep}(c)\|_1<d\rho$, and
$
    |\mathcal{N}|\leq\min\left(m,\ \lceil 1/\rho\rceil^{\,d}\right).
$
\end{definition}
\begin{corollary}
\label{cor:net-transfer}
Combining \Cref{lem:lip-stability} and \Cref{def:net}, for every $\ell\in[k]$ and $c\in C$, $\left|p_{c,\ell}-p_{\mathrm{rep}(c),\ell}\right|\leq K d\rho$.
\end{corollary}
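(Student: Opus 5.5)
The corollary follows by composing two facts already available: the Lipschitz stability of \Cref{lem:lip-stability} and the net property of \Cref{def:net}. I would fix an arbitrary committee $W\subseteq C$, a level $\ell\in[k]$ and a candidate $c\in C$, and set $c'\coloneqq\mathrm{rep}(c)\in\mathcal{N}\subseteq C\subseteq[0,1]^d$. \Cref{lem:lip-stability} holds for every $W$, every $\ell$ and every pair of points of $[0,1]^d$, with the committee event $J$ held fixed. Applying it to the pair $(c,c')$ gives
\[
\left|p_{c,\ell}-p_{\mathrm{rep}(c),\ell}\right|\leq K\,\|c-\mathrm{rep}(c)\|_1 .
\]

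For the distance term, \Cref{def:net} places $c$ and $\mathrm{rep}(c)$ in the same half-open cell $\prod_i[j_i\rho,(j_i+1)\rho)$. So on each coordinate the gap is strictly less than $\rho$, which gives $\|c-\mathrm{rep}(c)\|_\infty<\rho$. Summing over the $d$ coordinates yields $\|c-\mathrm{rep}(c)\|_1<d\rho$. Substituting this bound gives $|p_{c,\ell}-p_{\mathrm{rep}(c),\ell}|<Kd\rho$, which implies the stated non-strict inequality.

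One point needs care: candidates at coordinate exactly $1$ fall outside the half-open cells as written. I would handle this with the standard convention that the last cell on each axis is closed at $1$; the diameter bound $\rho$ still holds under that convention.

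The only other thing to confirm is the meaning of $p_{c,\ell}$. It should be read with respect to one common committee $W$ on both sides, namely whichever committee is current when the corollary is invoked, rather than the final returned committee. This is legitimate because \Cref{lem:lip-stability} is uniform in $W$. With that reading, nothing in the argument depends on $W$ or $\ell$, and the bound holds simultaneously for all of them.
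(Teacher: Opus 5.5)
Your proposal is correct and is exactly the paper's argument: apply \Cref{lem:lip-stability}, which is uniform in $W$ and $\ell$, to the pair $(c,\mathrm{rep}(c))$, and bound $\|c-\mathrm{rep}(c)\|_1<d\rho$ using \Cref{def:net}. Your boundary caveat is harmless but unnecessary: the cells are indexed by all of $\mathbb{Z}_{\geq0}^d$, so a coordinate equal to $1$ already lies in some half-open cell, and the strict per-coordinate bound $<\rho$ holds as stated.
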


\Cref{alg:netngjcr} (\Cref{app:algorithms}) runs NGJCR on the net $\mathcal{N}$, scoring each candidate $c$ by its representative's count $\zeta_{\mathrm{rep}(c)}$ and splitting $\mathcal{N}$ into batches of size $\beta$ served by fresh pools, so a voter answers $\Oh(d\log\beta)$ queries.

\newcommand{\netngjcrstatement}{Let $p_W>0$, let $\alpha>0$ be a constant with $0<\delta_1\leq 1/(k(k+1))-\alpha/k^2$, and let $k\leq\beta\leq|\mathcal{N}|$.
For an \mname\ that is $K$-Lipschitz, set $\rho=\alpha/(4Kdk^2)$, so $|\mathcal{N}|\leq\min\left(m,\lceil 4Kdk^2/\alpha\rceil^{d}\right)$, and $h_1=\lceil(8k^4/\alpha^2)\log(4k|\mathcal{N}|/p_W)\rceil$.
When $\mathcal{D}$ is unknown, net-NGJCR consumes a pool of $n_W=\Theta\left(k\,h_1\lceil|\mathcal{N}|/\beta\rceil\right)$ voters, each asked $q_W=\Oh(d\log\beta)$ \planar\ queries, and supplies (PW) with margin $\delta_1$ with probability at least $1-p_W$.
For a single batch $\beta=|\mathcal{N}|$, $q_W=\Oh(d^2\log(Kdk))$ and $n_W=\Oh\left(k^5\left(d\log(Kdk)+\log(1/p_W)\right)\right)$.
}
\begin{theorem}[Net-NGJCR]
\label{thm:netngjcr}
\netngjcrstatement
\end{theorem}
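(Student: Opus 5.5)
The plan is to reduce net-NGJCR to the NGJCR analysis of \Cref{thm:ngjcr}. The new ingredient is a transfer step through \Cref{cor:net-transfer}, which absorbs the error from scoring each candidate $c$ by its representative $\mathrm{rep}(c)$.

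\textbf{Step 1: choice of $\rho$ and size of the net.} With $\rho=\alpha/(4Kdk^2)$, \Cref{cor:net-transfer} gives $|p_{c,\ell}-p_{\mathrm{rep}(c),\ell}|\leq Kd\rho=\alpha/(4k^2)$. This holds uniformly in the current committee $W$ and in $\ell$, by \Cref{lem:lip-stability}. The size bound $|\mathcal{N}|\leq\min(m,\lceil 4Kdk^2/\alpha\rceil^d)$ is then immediate from \Cref{def:net}.

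\textbf{Step 2: separation of the threshold.} Keep the threshold $q^*$ of NGJCR, the midpoint between the tiny bound $\ell/(k+1)$ and the large bound $\ell/k-\delta_1$. The hypothesis $\delta_1\leq 1/(k(k+1))-\alpha/k^2$ makes the gap between these bounds at least $\ell\alpha/k^2\geq\alpha/k^2$, so $q^*$ sits at least $\alpha/(2k^2)$ from each.

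\textbf{Step 3: concentration at the net points.} At each level $\ell$, every net point $r$ in a batch receives $h_1$ fresh i.i.d.\ voters, resolved on that batch. Since $r$ and $W$ both lie in the queried set, \Cref{prop:approximateapprovals} makes $\zeta_r$ an exact count of voters in $N_{r,\ell}$. I will condition on the committee built so far; the pool is fresh, so independence holds. \Cref{fact:hoeffding} with deviation $t=\alpha/(4k^2)$ then gives $|\zeta_r/h_1-p_{r,\ell}|<\alpha/(4k^2)$ except with probability $2\exp(-2h_1\alpha^2/(16k^4))=2\exp(-h_1\alpha^2/(8k^4))$.

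Union-bounding over the at most $k|\mathcal{N}|$ (net point, level) pairs, the choice $h_1=\lceil(8k^4/\alpha^2)\log(4k|\mathcal{N}|/p_W)\rceil$ keeps the total failure probability at most $p_W/2\leq p_W$. This is where the union bound ranges over $|\mathcal{N}|$ rather than $m$, which is the point of the construction.

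\textbf{Step 4: combining the errors, the main obstacle.} On the success event, the score of any $c$ is within $\alpha/(4k^2)+\alpha/(4k^2)=\alpha/(2k^2)$ of $p_{c,\ell}$: the first term is sampling error, the second is net transfer. The total error budget is exactly the threshold margin of Step 2. The care needed is in strictness and in how $W$ changes during a level.

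\emph{Dynamic committee.} Adding a candidate only decreases $p_{c,\ell}$ (monotonicity in $W$). So I will evaluate scores against the committee at the time each batch is sampled, exactly as in \Cref{thm:ngjcr}. Within one level, several candidates sharing a representative may be added; the (PW) claim concerns only the final committee, and each $p$ only falls.

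\emph{Tiny candidates.} A tiny candidate has score below $q^*$, so it is never selected. Every selected candidate therefore had true mass $>\ell/(k+1)$ when chosen, and the budget argument of \Cref{thm:pgjcr} gives $|W|\leq k$.

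\emph{Unselected candidates.} Any $c\notin W$ after level $\ell$ had score at most $q^*$ at its last check. Hence $p_{c,\ell}<\ell/k-\delta_1$ with respect to the committee then, and also with respect to the final committee. Levels $\ell'\neq\ell$ are handled at their own iteration, and later additions only reduce $p_{c,\ell}$. This yields (PW) with margin $\delta_1$.

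If Step 4 runs into trouble, the fallback is to shrink $\rho$ by a constant factor to obtain slack; this does not change the asymptotics.

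\textbf{Step 5: costs.} Each pooled voter is resolved on one batch of at most $\beta$ net points plus the current $W$, with $|W|\leq k\leq\beta$. By \Cref{prop:queryingqueries}, each voter is asked $\Oh(d\log\beta)$ queries. The pool has $k$ levels, each with $\lceil|\mathcal{N}|/\beta\rceil$ batches of $h_1$ voters, giving $n_W=\Theta(k\,h_1\lceil|\mathcal{N}|/\beta\rceil)$.

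For the single-batch case $\beta=|\mathcal{N}|$, we have $\log|\mathcal{N}|\leq d\log\lceil4Kdk^2/\alpha\rceil=\Oh(d\log(Kdk))$. This gives $q_W=\Oh(d^2\log(Kdk))$ and $n_W=\Oh(k\cdot k^4(d\log(Kdk)+\log k+\log(1/p_W)))$, which simplifies to $\Oh(k^5(d\log(Kdk)+\log(1/p_W)))$.
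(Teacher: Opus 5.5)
You have the right error budget (sampling error $\tau=\alpha/(4k^2)$ plus net-transfer error $Kd\rho=\tau$, against a threshold margin $\Gamma\geq 2\tau$), but the probabilistic accounting in Step~3 does not cover the samples that Step~4 relies on. You union-bound over the $k|\mathcal{N}|$ (net point, level) pairs, which means one fresh sample per level. The size bound needs more than that: every added candidate $c$ must satisfy $p^{W_c}_{c,\ell}>\ell/(k+1)$, where $W_c$ is the committee \emph{at the moment $c$ is added}, because that is what the budget argument of \Cref{thm:pgjcr} consumes. With a single sample per level this fails. All candidates in one net cell share the score $\zeta_{\mathrm{rep}(c)}$. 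So if almost every voter approves a tight cluster of many candidates inside one cell, they all clear $q^*$ at level $k$ on the same sample. Adding ``several candidates sharing a representative'' from that one sample then pushes $|W|$ far past $k$. Stale scores are harmless for the (PW) direction, since by monotonicity in $W$ they only overestimate. They are not harmless for the size bound.

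The algorithm avoids this by drawing fresh pools in every \emph{round}, where each round either adds one candidate or decrements $\ell$. Your union bound must therefore range over rounds, and there are $|W|+k$ of them. But $|W|\leq k$ is exactly what you derive on the good event, so union-bounding over ``all rounds'' is circular. The missing step is the paper's device:
\begin{itemize}
\item Define the good event $\mathcal{E}$ only over the first $2k$ rounds. Then $\Pr[\mathcal{E}^c]\leq 2k|\mathcal{N}|\cdot 2e^{-2h_1\tau^2}\leq p_W$ for the stated $h_1$; your factor-$2$ slack is exactly what this costs.
\item Argue deterministically on $\mathcal{E}$ that every addition among those rounds has fresh own-mass $>\ell/(k+1)$. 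By the budget argument, at most $k$ of those rounds are additions.
\item With at most $k$ decrements, the loop halts by round $2k$ and never reads a pool outside $\mathcal{E}$.
\end{itemize}
With this repair your Steps~4 and~5 go through. The only change is that the pool count becomes at most $2kh_1\lceil|\mathcal{N}|/\beta\rceil$, which is still $\Theta(kh_1\lceil|\mathcal{N}|/\beta\rceil)$.
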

Net-NGJCR scores each candidate by its representative's empirical witness fraction, incurring two errors against the true mass: the net-transfer error $\leq Kd\rho$ (\Cref{cor:net-transfer}) and the sampling error over $h_1$ voters.
Choosing $\rho$ so $Kd\rho$ equals the sampling tolerance $\tau=\alpha/(4k^2)$ keeps both below $\tau$; as $q^*$ sits a margin $\Gamma\geq 2\tau$ from each of the ``tiny'' ($\ell/(k+1)$) and ``large'' ($\ell/k-\delta_1$) bounds, the combined $2\tau$ error never selects a tiny candidate nor misses a large one, giving (PW).
The size bound $k$ is inherited from \Cref{thm:pgjcr}.

\subsection{$\widehat{F}$-estimation on a quantile grid}
\label{sec:lipshitz}
It remains to supply the framework's second input: per-dimension CDF estimates accurate to $\varepsilon$ on all of $C$, again from an $m$-independent number of queries.
Since a $K$-Lipschitz CDF varies slowly, it suffices to estimate it on a coarse uniform grid $T\coloneqq{\qsl}(C,\varepsilon/2K,\text{Uniform})$ of only $|T|=\Oh(K/\varepsilon)$ points (independent of $m$) and interpolate: we estimate the empirical CDF at the grid points (splitting $T$ into batches served by fresh pools and queried with \texttt{resolve}, exactly as the batch estimator of \Cref{thm:logk-estimator}), after which the Lipschitz bound of \Cref{lem:lipschitz-cdf} fills the gaps between them.

\newcommand{\lipestimatorstatement}{Let $p_F>0$ and $1\leq\beta\leq|T|$.
For an \mname\ that is $K$-Lipschitz, form the grid $T\coloneqq{\qsl}(C,\varepsilon/2K,\text{Uniform})$ (so $|T|\leq 2K/\varepsilon$) and set $h_2=2\log\left(\frac{8dK}{\varepsilon\beta p_F}\right)/\varepsilon^2$.
When $\mathcal{D}$ is unknown, this estimator consumes a pool of $n_F=\lceil|T|/\beta\rceil\,h_2$ voters, each asked $q_F=\Oh(d\log\beta)$ \planar\ queries, and supplies (PF) with error $\varepsilon$ with probability $1-p_F$.}
\begin{theorem}
\label{thm:lipschitz-estimator}
    \lipestimatorstatement
\end{theorem}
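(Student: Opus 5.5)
The plan has three parts: build the grid, estimate the CDF on it, and interpolate. Let $T={\qsl}(C,\varepsilon/2K,\text{Uniform})$, where ``Uniform'' is the CDF $G(x)=x_i$ on each axis (the grid is built per axis; the constants absorb this). First bound $|T|$: {\qsl} makes at most $\lceil 2K/\varepsilon\rceil-1$ selections, so $|T|\leq 2K/\varepsilon$. By \Cref{lem:lefty}, for every $c\in C\setminus T$ either some $u\in T$ has $u_i\leq c_i<u_i+\varepsilon/2K$, or $c_i<\varepsilon/2K$. In the second case the left sentinel at $0$ plays the role of $u$, with $F^i_a(0)$ estimated as $0$ up to Lipschitz error.

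Next I would estimate on the grid. Split $T$ into $\lceil|T|/\beta\rceil$ batches. Each batch gets a fresh pool of $h_2$ voters, who are resolved only on that batch. By \Cref{prop:approximateapprovals}, \texttt{resolve} determines exactly whether $a^i_v\leq u_i$ for every $u$ in the batch, so each per-batch estimate is a genuine empirical CDF. By \Cref{prop:queryingqueries}, each voter answers $\Oh(d\log\beta)$ queries. Applying \Cref{fact:dkw} at tolerance $\varepsilon/2$ gives per-batch failure probability at most $2\exp(-h_2\varepsilon^2/2)$. A union bound over the $2d$ (axis, endpoint) pairs and the at most $2K/(\varepsilon\beta)$ batches gives total failure at most $(8dK/(\varepsilon\beta))\exp(-h_2\varepsilon^2/2)$, which is at most $p_F$ by the choice of $h_2$. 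As in the batch estimator, I would take the running maximum over grid points, as in \Cref{lem:logk-block}. On the success event this preserves the $\varepsilon/2$ accuracy, because the true $F$ is monotone: the max of values each within $\varepsilon/2$ of a nondecreasing function is still within $\varepsilon/2$ of it.

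The last step is interpolation to all of $C$. For $c\in C$, define $\widehat{F}^i_a(c)\coloneqq\widehat{F}^i_a(u)$ where $u$ is the covering grid point from \Cref{lem:lefty}, with $0$ used in the bottom-band case. Then
\[
|\widehat{F}^i_a(c)-F^i_a(c)|\leq|\widehat{F}^i_a(u)-F^i_a(u)|+|F^i_a(u)-F^i_a(c)|\leq \varepsilon/2+K\cdot\varepsilon/2K=\varepsilon,
\]
using \Cref{lem:lipschitz-cdf}. The same argument applies to $\widehat{F}^i_b$, which gives (PF).

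I expect the main obstacle to be this interpolation step, specifically handling the cases and tolerances cleanly. The bottom band and per-axis grid need care: $u$ must lie within $\varepsilon/2K$ in coordinate $i$ specifically. Running {\qsl} with the uniform CDF on coordinate $i$ (one grid per axis, at most $d\cdot 2K/\varepsilon$ points in total) handles this, and the extra factor $d$ is already absorbed in the logarithm of $h_2$. The bookkeeping of which DKW tolerance to use ($\varepsilon/2$) must also match the stated $h_2$ up to constants.
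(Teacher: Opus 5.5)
Your proposal is correct and is essentially the paper's proof: a \texttt{QuantSel} grid with Coverage, fresh-pool batches resolved only on their grid points, DKW at tolerance $\varepsilon/2$ union-bounded to match $h_2$, the running maximum as in \Cref{lem:logk-block}, and Lipschitz interpolation from the left grid neighbour. Your explicit per-axis grids and bottom-band convention $\widehat{F}=0$ (valid because $F^i_a(0)=F^i_b(0)=0$ under a density) are more careful than the paper, which asserts coverage on every axis from a single call with $|T|\leq 2K/\varepsilon$.

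Two small fixes are needed:
\begin{itemize}
\item The fact that \texttt{resolve} on a batch reveals $\mathds{1}[a^i_v\leq u_i]$ for every $u$ in it comes from the consecutive per-axis brackets of \Cref{lem:boundsonapp}, as used in \Cref{lem:logk-block}. It does not follow from \Cref{prop:approximateapprovals}, which only tells you whether $u\in A(v)$.
\item Your count of at most $2K/(\varepsilon\beta)$ batches holds per axis, not for the merged grid of size up to $2dK/\varepsilon$. To keep the stated $h_2$, batch each $T_i$ separately and take axis $i$'s running maximum over $T_i$ only. The cost is that the total grid, and hence the pool, is up to $d$ times the stated size.
\end{itemize}
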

The error at a candidate $c$ splits in two. \emph{Interpolation}: $c$ lies within $\varepsilon/2K$ of its nearest grid point, across which the $K$-Lipschitz CDF moves by at most $\varepsilon/2$ (\Cref{lem:lipschitz-cdf}). \emph{Estimation}: each grid point reports an empirical CDF of $h_2$ i.i.d.\ samples, kept within $\varepsilon/2$ uniformly by DKW (\Cref{fact:dkw}) from an $m$-independent $h_2=\Oh(\varepsilon^{-2}\log(dK/\varepsilon\beta p_F))$. The spacing $\varepsilon/2K$ balances the two into total error $\varepsilon$; a running maximum stitches the batches into one monotone CDF.
%%%%%%%%%%%%%%%%%%%%%%%%%%%%%%%%%%%%%%%%%%%%%%%%%%%%%
\section{Putting It Together}
\label{sec:synthesis}
Each end-to-end guarantee composes one W-selection and one $\widehat{F}$-estimation module through the verification framework.
Since each module consumes a pool of voters at some per-voter query load, composing them composes two budgets: the pools, which set the minimum electorate $n$, and the per-voter load.
Outside the pools every voter pays only the verification pass's $\Oh(d\log dk)$ \planar\ queries; we report the \emph{amortized expected} complexity (total queries over $n$), in which each fixed pool's $\Oh(Pq/n)$ share is dominated by the verification pass once $n$ is large, so the average tends to $\Oh(d\log dk)$.
Whether this also holds for \emph{every} voter w.h.p.\ is the distinction made precise in \Cref{rem:twosenses}.

\begin{fact}[Amortized query complexity]
\label{fact:amortized}
Run the framework with a W-selection module that queries a pool of $n_W$ voters at most $q_W$ times each and succeeds with probability $p_W$, an $\widehat{F}$-estimation module that queries a disjoint pool of $n_F$ voters at most $q_F$ times each and succeeds with probability $p_F$, and verification on the remaining voters.
Then the expected total number of \planar\ queries is at most
\[
    n_W q_W + n_F q_F + n\cdot\Oh(d\log dk) + (\eta+p_W+p_F)\cdot n\cdot\Oh(d\log m),
\]
where $\eta$ is the probability that verification fails. $\eta+p_W+p_F$ is a union bound over the probability that either of the modules, or the framework fail.
The last term is the fallback: on failure of a component, the framework (at most) re-elicits all $n$ voters on all $m$ candidates, $\Oh(d\log m)$ queries each.
Equivalently, the amortized expected number of queries per voter is
$
    \Oh(d\log dk) + (n_W q_W + n_F q_F)/n + (\eta+p_W+p_F)\cdot\Oh(d\log m).
$
In particular, if $n=\Omega(k^4\log(\Lambda\log m))$, $n= \Omega((n_W q_W + n_F q_F)/(d\log dk))$, and $p_W,p_F,\eta=\Oh(1/
\log m)$, the amortized expected query complexity is $\Oh(d\log dk)$ per voter.
\end{fact}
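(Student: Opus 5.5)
The plan is to account for queries by splitting the run into three mutually exclusive groups of voters: the W-selection pool, the $\widehat{F}$-estimation pool, and the remaining electorate that only takes part in verification. I would add up the queries from each group and then add one extra fallback term for the event that some component fails.

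First, by definition of the modules, each voter in the W-selection pool answers at most $q_W$ queries and each voter in the estimation pool at most $q_F$. Because the two pools are disjoint, these voters contribute at most $n_W q_W + n_F q_F$ queries deterministically. Second, every voter in $V$ is resolved once on $P$ in \texttt{verify}. The number of such voters is at most $n$. By \Cref{prop:queryingqueries}, each costs $\Oh(d\log|P|)$ queries. Under the spacing of \Cref{thm:multidim} we have $|P|\le k+2d/\Delta=\Oh(d^2k^2)$, so this pass totals $n\cdot\Oh(d\log dk)$.

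Third, I handle the fallback. Let $\mathcal{E}$ be the event that W-selection fails to supply (PW), that estimation fails to supply (PF), or that \texttt{verify} rejects. By the union bound of \Cref{rem:ngjcr-framework}, $\Pr[\mathcal{E}]\le p_W+p_F+\eta$. On $\mathcal{E}$ the framework calls \texttt{resolve}$(v,C)$ for every voter, which by \Cref{prop:queryingqueries} costs at most $n\cdot 2d\lceil\log_2(m+1)\rceil$ queries. This bound holds regardless of the state the run is in, so it is uniform over the conditioning. Linearity of expectation then gives the stated total. Dividing by $n$ gives the amortized per-voter form.

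For the final "in particular" clause, I check each term under the hypotheses.
\begin{itemize}
\item The condition $n=\Omega(k^4\log(\Lambda\log m))$ is exactly the hypothesis of \Cref{thm:multidim}, which yields $\eta=\Oh(1/\log m)$ when (PW) and (PF) hold. If instead a module failed, that case is already absorbed into $p_W$ and $p_F$.
\item Since $p_W+p_F+\eta=\Oh(1/\log m)$, the fallback term is $\Oh(1/\log m)\cdot\Oh(d\log m)=\Oh(d)$.
\item The condition $n=\Omega((n_Wq_W+n_Fq_F)/(d\log dk))$ makes the pool share $\Oh(d\log dk)$.
\end{itemize}
Summing the three terms gives $\Oh(d\log dk)$ per voter.

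The only delicate point I expect is making $\eta$ well-defined. It must be read as the verification-failure probability conditional on (PW) and (PF) holding, which is the quantity \Cref{thm:multidim} actually controls. The unconditional failure probability then splits cleanly into the three terms. The fact that the fallback cost is bounded deterministically is what makes this splitting harmless.
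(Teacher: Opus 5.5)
Your argument is correct and matches the accounting the paper builds into the statement itself: a deterministic pool cost $n_Wq_W+n_Fq_F$, one \texttt{resolve} pass over $P$ per voter, and an $\Oh(d\log m)$-per-voter fallback weighted by the union bound $\eta+p_W+p_F$, with $\eta$ read conditionally on (PW) and (PF), as you rightly note. One small tightening: $|P|\le k+2d/\Delta$ presumes $|\widehat{W}|\le k$, which the sampling modules (NGJCR, net-NGJCR) guarantee only on their success event; off that event $|P|\le m$ still caps the pass at $\Oh(d\log m)$ per voter, so the excess is absorbed into the same $p_W\cdot n\cdot\Oh(d\log m)$ fallback term.
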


We instantiate \Cref{fact:amortized} in the settings of \Cref{tab:results}, making two standing choices forced by the framework.
The estimator accuracy is $\varepsilon=\Theta(1/(dk^2))$, the loosest meeting \Cref{thm:multidim}'s hypothesis $\delta-4d\varepsilon=\Theta(1/k^2)$ (any looser breaks verification).
The module failure probabilities are $p_W,p_F=\Theta(1/\log m)$, matching the framework's $\eta=\Oh(1/\log m)$ (\Cref{rem:ngjcr-framework}); being only logarithmically small they add merely a $\log\log m$ factor and, crucially for the Lipschitz setting (\Cref{cor:lipschitz}), reintroduce no polynomial-in-$m$ pool.
The $n$-thresholds in \Cref{tab:results} are simplified by treating $d,k$ as small (the corollaries carry the precise dependencies) and assuming $\log m\geq d\log dk$, the regime where the cell bound of \Cref{lem:cells} is effective: then $\log\Lambda\leq d\log dk$ is $m$-free, so the verification floor $\Omega(k^4\log(\Lambda\log m))$ collapses to $\Omega(\mathrm{poly}(d,k)\log\log m)$, as the Known and Lipschitz rows report.

\begin{table}[t]
\caption{End-to-end settings; the amortized expected query complexity is $\Oh(d\log dk)$ per voter throughout. The \emph{per-voter bound} column marks whether this also holds for every voter (\emph{w.h.p.}) or only on average (\emph{expected}), per \Cref{rem:twosenses}. The $n$-thresholds are simplified sufficient bounds (exact forms in \Cref{app:synthesis}).}
\label{tab:results}
\centering
\small
\begin{tabular}{llll}
\hline
Setting & Modules & per-voter bound & $n$-threshold \\
\hline
Known $\mathcal{D}$ (\Cref{cor:knowndist})              & PGJCR\,/\,exact            & w.h.p.   & $\Omega(\text{poly}(k)\log\log m)$ \\
Unknown, $\beta{=}\Theta(m)$ (\Cref{cor:logm})          & NGJCR\,/\,$\log m$ est.    & expected     & $\Omega\left(\text{poly}(d,k)\log^2 m\right)$ \\
Unknown, $\beta{=}\Theta(k^5)$ (\Cref{cor:logk})        & NGJCR\,/\,$\log k$ est.    & w.h.p.   & $\Omega\left(\text{poly}(d,k)m\log m\right)$ \\
$K$-Lipschitz $\mathcal{D}$ (\Cref{cor:lipschitz})           & net-NGJCR\,/\,grid est.   & expected   & $\Omega\left(\text{poly}(d,k)\log K\left(\log\log m+\log K\right)\right)$ \\
\hline
\end{tabular}
\end{table}

The settings trade assumptions on $\mathcal{D}$ against cost (\Cref{tab:results}): a known $\mathcal{D}$ makes selection and estimation free (every voter pays only the verification pass); an unknown $\mathcal{D}$ forces sampled pools, trading electorate against per-voter load ($\beta=\Theta(m)$ asks $\Oh(d\log m)$ each in expectation, $\beta=\Theta(k^5)$ caps every voter at $\Oh(d\log k)$ w.h.p.\ but needs $n=\Omega(m\log m)$); and smoothness removes the polynomial-in-$m$ electorate, leaving $m$-dependence only through $\log\log m$.
The proofs of the four corollaries below are deferred to \Cref{app:synthesis}.
\newcommand{\knowndiststatement}{Given $\mathcal{D}$, PGJCR with the exact per-dimension CDFs yields an EJR+ committee at $\Oh(d\log dk)$ queries per voter, for any $n=\Omega(k^4\log(\Lambda\log m))$; as no voter is pooled, this holds for every voter w.h.p.}
\begin{corollary}[Known distribution]
\label{cor:knowndist}
\knowndiststatement
\end{corollary}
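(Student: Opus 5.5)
The plan is to instantiate \Cref{fact:amortized} with PGJCR as the W-selection module and the exact CDFs $\widehat{F}^i_a=F^i_a$, $\widehat{F}^i_b=F^i_b$ as the $\widehat{F}$-estimation module. Correctness, i.e.\ that the output is EJR+, is unconditional by \Cref{thm:correctness}, so only the query bound needs work.

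By \Cref{thm:pgjcr}, PGJCR has $n_W=q_W=0$ and $p_W=0$. It returns $|\W|\leq k$ satisfying (PW) with margin $\delta=1/(k(k+1))$. Known CDFs give (PF) with $\varepsilon=0$ and $n_F=q_F=p_F=0$. The framework then runs with spacing $\Delta=\delta/(4d)=1/(4dk(k+1))$. Since $k+1\leq 2k$, this gives $\Delta\geq 1/(8dk^2)=\alpha/(2dk^2)$ with the constant $\alpha=1/4$. The hypothesis of \Cref{thm:multidim} therefore holds, and $|P|\leq k+2d/\Delta=\Oh(d^2k^2)$.

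Next I will invoke \Cref{thm:multidim}, or rather its proof, to bound the verification failure probability. The bound is $\eta\leq\Lambda k\exp(-2n\alpha^2/k^4)$. Choosing the hidden constant in $n=\Omega(k^4\log(\Lambda\log m))$ large enough makes $\Lambda k\exp(-2n\alpha^2/k^4)=\Oh(1/\log m)$. The factor $k$ is absorbed because $\log k\leq\log\Lambda$ up to constants, or it can simply be folded into the constant, since $\Lambda\geq 1$ and $k^4\log k$ is dominated after adjusting. \Cref{fact:amortized} then gives an amortized cost of
\[
\Oh(d\log dk)+0+\eta\cdot\Oh(d\log m)=\Oh(d\log dk)+\Oh(d),
\]
which is $\Oh(d\log dk)$. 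The side condition $n=\Omega((n_Wq_W+n_Fq_F)/(d\log dk))$ holds trivially because the numerator is zero.

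For the every-voter, w.h.p.\ claim I will argue directly rather than through expectation. Since no pool is drawn, every voter's queries come only from one \texttt{resolve} pass over $P$. By \Cref{prop:queryingqueries} this costs at most $2d\lceil\log_2(|P|+1)\rceil=\Oh(d\log dk)$ deterministically. The only other source of queries is the fallback, which occurs with probability $\eta=\Oh(1/\log m)=o(1)$. Hence, with probability $1-\Oh(1/\log m)$, every voter is asked $\Oh(d\log dk)$ queries, matching the w.h.p.\ convention of \Cref{sec:prelims}.

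The main obstacle is bookkeeping rather than any new idea. Two points need care. First, the constant $\alpha$ extracted from $\delta$ must genuinely meet $\Delta\geq\alpha/(2dk^2)$, including the $(k+1)$ versus $k$ slack. Second, the stated $n$-threshold must absorb the extra factor $k$ from the union bound over levels, and I will check that $\log(\Lambda k\log m)=\Oh(\log(\Lambda\log m))$ under the standing assumption.
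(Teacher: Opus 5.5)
Your proposal is correct and follows essentially the same route as the paper. You plug PGJCR ($\delta=1/(k(k+1))$, $n_W=q_W=p_W=0$) and the exact CDFs ($\varepsilon=0$) into \Cref{thm:multidim} with $\alpha=1/4$, note that \Cref{fact:amortized} collapses to the framework floor, and get the every-voter w.h.p.\ bound because each voter pays only the single deterministic \texttt{resolve} pass over $P$ unless the $\Oh(1/\log m)$-probability fallback fires. The paper absorbs the extra factor $k$ from the union bound over levels without comment, and additionally uses \Cref{lem:cells} to rewrite the threshold as $n=\Omega(k^4(d\log dk+\log\log m))$; that rewriting is a simplification, not something the statement requires.
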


\newcommand{\logmstatement}{With $\mathcal{D}$ unknown and a per-voter budget of $\Oh(\log m)$, NGJCR and the batch estimator, both at $\beta=\Theta(m)$, yield an EJR+ committee at $\Oh(d\log dk)$ amortized expected queries per voter, whenever $n=\Omega\left(k^5\log^2 m+d^2k^4\log m\log\log m\right)$.}
\begin{corollary}[Unknown distribution, $\Oh(\log m)$ budget]
\label{cor:logm}
\logmstatement
\end{corollary}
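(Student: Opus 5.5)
The plan is to instantiate \Cref{fact:amortized} with NGJCR (\Cref{thm:ngjcr}) as the W-selection module and the batch estimator (\Cref{thm:logk-estimator}) as the $\widehat{F}$-estimation module, both run at $\beta=m$ (a single batch), and then check the three hypotheses of the ``in particular'' clause. Correctness (EJR+) is unconditional by \Cref{thm:correctness}, so only the query count needs work. I would fix the parameters first.
\begin{itemize}
\item Take $p_W=p_F=\Theta(1/\log m)$.
\item Take NGJCR margin $\delta_1=1/(k(k+1))-\alpha/k^2$ for a small constant $\alpha$ (e.g.\ $\alpha=1/8$, so that $\delta_1=\Theta(1/k^2)$).
\item Take estimator error $\varepsilon=\delta_1/(8d)-\alpha'/(2dk^2)$, positive and $\Theta(1/(dk^2))$ for suitable constants. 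This gives a spacing $\Delta=\delta_1/(4d)-\varepsilon\geq\alpha'/(2dk^2)$, which meets the hypothesis of \Cref{thm:multidim}.
\end{itemize}

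Next I would compute the pools and loads. With $\beta=m$, NGJCR has $n_W=\Theta(h_1k)$ with $h_1=\Theta(k^4\log(mk\log m))=\Theta(k^4\log m)$ (absorbing $\log k\le\log m$), so $n_W=\Theta(k^5\log m)$ and $q_W=\Oh(d\log m)$. The estimator has $n_F=h_2=\Theta(\varepsilon^{-2}\log(d\log m))=\Theta(d^2k^4\log(d\log m))$ and $q_F=\Oh(d\log m)$. So $n_Wq_W+n_Fq_F=\Oh(dk^5\log^2m+d^3k^4\log m\log(d\log m))$. Requiring $n=\Omega((n_Wq_W+n_Fq_F)/(d\log dk))$ gives, up to the $\log dk$ denominator and treating the $d$ in $\log(d\log m)$ as absorbed, $n=\Omega(k^5\log^2m+d^2k^4\log m\log\log m)$. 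This is the stated threshold. I would also check that the pooled voters are disjoint from the verification voters, or are simply reused, which does not matter for the expectation.

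Then I would handle the framework's own failure probability $\eta$. By \Cref{thm:multidim}, conditioned on (PW) and (PF) holding, $\eta=\Oh(1/\log m)$ provided $n=\Omega(k^4\log(\Lambda\log m))$. Since $\Lambda\le m$, this is $\Omega(k^4(\log m+\log\log m))=\Oh(k^4\log m)$, which the stated threshold already dominates. One subtlety is that (PW) and (PF) hold only on success events, and the module outputs are independent of the verification voters. So I would condition on the module success events and union-bound as in \Cref{rem:ngjcr-framework}, giving total failure probability at most $\eta+p_W+p_F=\Oh(1/\log m)$. The fallback term $(\eta+p_W+p_F)\Oh(d\log m)$ is then $\Oh(d)$, and the amortized cost is $\Oh(d\log dk)$.

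I expect the main obstacle to be bookkeeping rather than a new idea. First, the parameter choice must simultaneously satisfy $\delta_1>0$, $\varepsilon>0$, and $\Delta=\Omega(1/(dk^2))$, which requires that the constant in $\varepsilon$ be strictly below $\delta_1/(4d)$. Second, the $\log$ factors must be absorbed cleanly so the threshold matches exactly. If the $\log(d\log m)$ term refuses to fit under $\log\log m$, I would simply assume $d\le\log m$, which is the regime the paper already adopts.
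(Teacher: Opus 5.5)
Your proposal is correct and follows the paper's proof essentially step for step. You instantiate the amortized-cost fact with NGJCR and the batch estimator at a single batch, read off $n_W=\Theta(k^5\log m)$ and $n_F=\Theta(d^2k^4\log(d\log m))$ with $\Oh(d\log m)$ loads, divide by $d\log dk$, and absorb the framework floor using $\Lambda\leq m$. Your explicit choices of $\delta_1$ and $\varepsilon$, together with the check $\Delta\geq\alpha'/(2dk^2)$, simply spell out what the paper leaves as ``a margin meeting the framework's requirement.''

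You also do not need the fallback assumption $d\leq\log m$. If you keep the $\log dk$ denominator, then $\log(d\log m)/\log dk\leq 1+\log\log m$, which already absorbs the stray $\log d$.
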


\newcommand{\logkstatement}{Suppose $m=\Omega(k^5)$. With $\mathcal{D}$ unknown and a per-voter budget of only $\Oh(\log k)$, restricting elicitation to batches of size $\beta=\Theta(k^5)$ (so $\beta\leq m$), NGJCR and the batch estimator yield an EJR+ committee at $\Oh(d\log dk)$ queries per voter (and, as every pooled voter answers only $\Oh(d\log k)$, for every voter w.h.p.), whenever $n=\Omega\left((k^4+m)\log m+(d^2m/k)\log(dm)\right)$.}
\begin{corollary}[Unknown distribution, $\Oh(\log k)$ budget]
\label{cor:logk}
\logkstatement
\end{corollary}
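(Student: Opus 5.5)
The plan is to instantiate \Cref{fact:amortized} with NGJCR (\Cref{thm:ngjcr}) and the batch estimator (\Cref{thm:logk-estimator}), both run at batch size $\beta=\Theta(k^5)$. Then, exactly as in \Cref{cor:logm}, we check the three hypotheses of \Cref{fact:amortized}: the verification floor, absorption of the pool cost, and small failure probabilities. The difference from \Cref{cor:logm} is that we also verify the stronger every-voter bound of \Cref{rem:twosenses}.

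\textbf{Step 1: fix the parameters and confirm the margins.} Take $p_W=p_F=\Theta(1/\log m)$. For NGJCR, choose $\delta_1=\Theta(1/k^2)$ with $\delta_1\leq 1/(k(k+1))-\alpha/k^2$; for instance $\alpha=1/4$ and $\delta_1=1/(4k^2)$ work once $k\geq 2$, and the case $k=1$ is handled by adjusting constants. Set $\varepsilon=c_0\delta_1/d$ with $c_0$ small, so that $\Delta=\delta_1/(4d)-\varepsilon\geq\alpha'/(2dk^2)$ for some constant $\alpha'$. This is the hypothesis of \Cref{thm:multidim}, and it gives $|P|=\Oh(d^2k^2)$ and $\log\Lambda=\Oh(d\log dk)$. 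The requirement $\beta\geq k$ holds because $\beta=\Theta(k^5)$, and $\beta\leq m$ follows from the assumption $m=\Omega(k^5)$.

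\textbf{Step 2: bound the pool sizes.}
\begin{itemize}
\item NGJCR: $n_W=\Theta(h_1 km/\beta)$ with $h_1=\Theta(k^4\log(mk\log m))$. This gives $n_W=\Theta(m\log m)$, using $k\leq m$.
\item Batch estimator: $n_F=(m/\beta)h_2$ with $h_2=\Theta(d^2k^4\log(dm\log m/\beta))$. This gives $n_F=\Theta((d^2m/k)\log(dm))$.
\end{itemize}
The two pools must be disjoint, and the verification voters are the remaining ones. It therefore suffices that $n\geq n_W+n_F+\Omega(k^4\log(\Lambda\log m))$.

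The verification floor $k^4\log(\Lambda\log m)$ is at most $k^4(\log m+\log\log m)$, since $\Lambda\leq m$. Together with the pool sizes, this yields the stated threshold $n=\Omega\big((k^4+m)\log m+(d^2m/k)\log(dm)\big)$.

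\textbf{Step 3: per-voter query load.}
\begin{itemize}
\item Each NGJCR pooled voter is resolved on one batch and so answers $\Oh(d\log\beta)=\Oh(d\log k)$ queries, by \Cref{prop:queryingqueries}.
\item Each estimator pooled voter likewise answers $q_F=\Oh(d\log k)$ queries.
\item Every voter pays the verification pass, $\Oh(d\log|P|)=\Oh(d\log dk)$ queries.
\end{itemize}
So, barring fallback, every voter answers $\Oh(d\log dk)$ queries deterministically.

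The fallback occurs with probability at most $\eta+p_W+p_F$, by \Cref{rem:ngjcr-framework}. Here $\eta=\Oh(1/\log m)$ follows from \Cref{thm:multidim} at our $n$. Hence every voter's bound holds w.h.p., and the amortized bound follows from \Cref{fact:amortized}: the pool term $(n_Wq_W+n_Fq_F)/n$ is $\Oh(d\log k)$, and the fallback term is $\Oh(1/\log m)\cdot\Oh(d\log m)=\Oh(d)$.

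\textbf{Step 4: correctness.} The output is EJR+ unconditionally, by \Cref{thm:correctness}.

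\textbf{Main obstacle.} The delicate point is conditioning. \Cref{thm:multidim} assumes (PW) and (PF) hold, but here they hold only on the joint success event of the two modules. Moreover, verification must use voters that are independent of the pools, so that Hoeffding's inequality still applies to the fresh i.i.d.\ verification voters. I would handle this as follows:
\begin{itemize}
\item Run verification only on voters outside the pools, whose approval sets are independent of $\W$ and $\widehat{F}$.
\item Condition on $(\W,\widehat{F})$ lying in the module success event, and apply \Cref{thm:multidim} with $n'=n-n_W-n_F=\Omega(n)$ voters.
\item Add the failure probabilities $p_W$ and $p_F$ by a union bound.
\end{itemize}
If the framework instead requires all $n$ voters in the counts, the pooled voters contribute at most $(n_W+n_F)$ to each $s_{\ell,c}$. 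This would be absorbed into the margin $\gamma n=\Theta(n/k^2)$ only if $n\gg k^2(n_W+n_F)$. I would therefore adopt the disjoint-verification reading, which matches the disjoint pools assumed in \Cref{fact:amortized}.
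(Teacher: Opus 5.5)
Your Steps 1--3 follow the paper's proof. You use the same two modules at $\beta=\Theta(k^5)$, the same pool sizes $n_W=\Theta(m\log m)$ and $n_F=\Oh((d^2m/k)\log(dm))$, the same loads $q_W=q_F=\Oh(d\log k)$, and you absorb the floor the same way via $\log(\Lambda\log m)\leq 2\log m$. Requiring $n\geq n_W+n_F+k^4\log(\Lambda\log m)$, instead of dividing $n_Wq_W+n_Fq_F$ by $d\log dk$ as the paper does, is equivalent up to constants here.

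The gap is the fix you adopt for your ``main obstacle'': it is incompatible with your Step~4. If verification runs only on voters outside the pools, $s_{\ell,c}$ no longer counts the pooled members of $N_{c,\ell}$. Then the inequality $|N_{c,\ell}|\leq s_{\ell,c}$, on which \Cref{thm:correctness} rests, fails. EJR+ is a condition on all $n$ voters, and at your threshold the pool $n_W=\Theta(m\log m)$ can be a constant fraction of the electorate. Passing $s_{\ell,c}<n\ell/k$ then certifies nothing about groups containing many pooled voters, so you lose the unconditional correctness that is the point of the framework. Note that \Cref{alg:multidim} resolves every $v\in V$ on $P$, and the cost term $n\cdot\Oh(d\log dk)$ in \Cref{fact:amortized} charges all $n$ voters for that pass.

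Each way of restoring correctness costs you something. You can resolve the pooled voters on $P$ and count them, but then their dependence on $(\W,\widehat F)$ returns. Alternatively, you can charge every pooled voter as a potential witness with the test $s_{\ell,c}+n_W+n_F<n\ell/k$. This passes w.h.p.\ only if $(n-n_W-n_F)\gamma\gtrsim n_W+n_F$, i.e.\ $n=\Omega(k^2(n_W+n_F))$. That is exactly the $k^2$ blow-up you were trying to avoid, and it exceeds the stated threshold.

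The paper's proof does not discuss this dependence at all. It invokes \Cref{thm:multidim} and \Cref{fact:amortized} directly, implicitly treating each $s_{\ell,c}$ as a sum of i.i.d.\ indicators independent of $(\W,\widehat F)$. So your concern is legitimate and points at something the paper glosses over. However, your fix gives up the EJR+ guarantee rather than resolving the issue. To reproduce the corollary as the paper proves it, keep every voter in the counts so that Step~4 holds verbatim. Handling the dependence rigorously needs an additional argument, or the larger threshold above.
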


\newcommand{\lipschitzcorstatement}{With $\mathcal{D}$ unknown but $K$-Lipschitz, net-NGJCR and the Lipschitz grid estimator, each on a single pool ($q_W=\Oh(d^2\log(Kdk))$, $q_F=\Oh(d\log(Kdk))$), yield an EJR+ committee at $\Oh(d\log dk)$ queries per voter in expectation whenever
$$
    n=\Omega\left(d^2k^5\log K\left(\log(Kdk)+\log(d\log m)\right)\right).
$$
}
\begin{corollary}[Lipschitz distribution, single pool]
\label{cor:lipschitz}
\lipschitzcorstatement
\end{corollary}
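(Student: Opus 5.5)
The plan is to instantiate \Cref{fact:amortized} with net-NGJCR (\Cref{thm:netngjcr}) as the W-selection module and the grid estimator (\Cref{thm:lipschitz-estimator}) as the $\widehat{F}$-estimation module, each on a single pool. Then we check the four hypotheses of the fact: the verification floor $n=\Omega(k^4\log(\Lambda\log m))$, the pool-domination condition $n=\Omega((n_Wq_W+n_Fq_F)/(d\log dk))$, and $p_W,p_F,\eta=\Oh(1/\log m)$. Correctness (EJR+) is unconditional by \Cref{thm:correctness}, so only the query count needs work.

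\textbf{Parameters.} Take $\delta_1=1/(k(k+1))-\alpha/k^2$ for a small constant $\alpha$ (say $\alpha=1/8$), so $\delta_1=\Theta(1/k^2)$. Take $\varepsilon=\Theta(1/(dk^2))$, chosen so that $\Delta=\delta_1/(4d)-\varepsilon\geq\alpha'/(2dk^2)$ for a constant $\alpha'$; this meets the hypothesis of \Cref{thm:multidim}. Set $p_W=p_F=\Theta(1/\log m)$.

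\textbf{Module budgets.} With a single batch, \Cref{thm:netngjcr} gives $q_W=\Oh(d^2\log(Kdk))$ and
\[
n_W=\Oh\bigl(k^5(d\log(Kdk)+\log\log m)\bigr).
\]
\Cref{thm:lipschitz-estimator} with $\beta=|T|=\Oh(K/\varepsilon)=\Oh(Kdk^2)$ gives $q_F=\Oh(d\log(Kdk))$ and
\[
n_F=h_2=\Oh\bigl(d^2k^4\log(Kdk\log m)\bigr).
\]
The pool-domination condition asks that $n\cdot d\log(dk)$ exceed $n_Wq_W+n_Fq_F$. The dominant product is
\[
n_Wq_W=\Oh\bigl(d^2k^5\log(Kdk)\,(d\log(Kdk)+\log\log m)\bigr).
\]
Dividing by $d\log dk$ and using $\log(Kdk)/\log(dk)=\Oh(\log K)$ for $K\geq 2$ leaves a requirement of order
\[
d^2k^5\log K\bigl(\log(Kdk)+\log(d\log m)\bigr).
\]
This matches the stated threshold; any stray $d$ factors are absorbed there. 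The $n_Fq_F$ term is smaller, since $k^4<k^5$ and $q_F<q_W$.

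\textbf{Verification floor.} Since $|P|\leq k+2d/\Delta=\Oh(d^2k^2)$, we get $\log\Lambda\leq d\log(2|P|+1)=\Oh(d\log dk)$. The requirement $k^4\log(\Lambda\log m)=\Oh(k^4(d\log dk+\log\log m))$ is then dominated by the stated bound. Under that $n$, the tail estimate in \Cref{thm:multidim} gives $\eta=\Lambda k\exp(-\Omega(n/k^4))=\Oh(1/\log m)$.

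\textbf{Main obstacle.} The delicate step is the bookkeeping that makes the single-pool product $n_Wq_W$ collapse, after division by $d\log dk$, into exactly the stated $\log K(\log(Kdk)+\log(d\log m))$ form with the $d^2k^5$ prefactor. The expected issue is the extra factor of $d$ from $q_W=\Oh(d^2\log(Kdk))$ against the $d\log|\mathcal N|$ inside $n_W$. I would first try bounding $\log|\mathcal N|\leq d\log(Kdk)$ and $\log(4k|\mathcal N|/p_W)=\Oh(d\log(Kdk)+\log\log m)$, then verify that the resulting expression is within constants of the stated one, possibly adjusting $\alpha$. A second point to check is that a disjoint pool (fresh voters) keeps the verification voters i.i.d., as \Cref{fact:amortized} requires. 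Adding the three failure probabilities then completes the argument.
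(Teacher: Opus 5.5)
Your proposal is correct and follows the paper's proof essentially step for step. You plug single-pool net-NGJCR and the grid estimator into \Cref{fact:amortized} with $\varepsilon=\Theta(1/(dk^2))$ and $p_W=p_F=\Theta(1/\log m)$, divide the pool products by $d\log dk$ using $\log(Kdk)/\log dk\leq 1+\log K$, and use $\log\Lambda=\Oh(d\log dk)$ from \Cref{lem:cells} to make the verification floor non-binding, so the $d$-factor bookkeeping you flag as the main obstacle already closes in your own computation. One small slip: $n_Fq_F$ is not always below $n_Wq_W$, because its $\log\log m$ part scales like $d^3k^4$ against $d^2k^5$ and so dominates when $d>k$. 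After dividing by $d\log dk$ it is still absorbed by the $d^2k^5\log K\log(d\log m)$ term of the stated threshold, which is how the paper collapses it.
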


\begin{remark}[Recovering an every-voter (w.h.p.) bound under smoothness]
\label{rem:lipschitz-blocked}
Batching the net and grid to $\beta=\Theta(k^5)$ would cap every pooled voter at $\Oh(d\log k)$, restoring the every-voter (w.h.p.)\ guarantee of \Cref{rem:twosenses}, but at a steep price: the net pool then scales with $|\mathcal{N}|=\Oh((Kdk^2)^d)$, exponential in $d$.
We therefore report the single-pool, expected-sense form, whose threshold is polynomial in $d,k$ and only polylogarithmic in $K$.
\end{remark}

\section{Conclusion}
\label{sec:conclusion}
We have studied a multiwinner voting setting where candidates and voters lie in a $d$-dimensional space.
We assume many candidates ($m$ large), and that we are able to query voters using \planar\ queries.
The question we seek to answer is whether we can still achieve justified representation in this setting without asking all voters about all candidates.

We have introduced a framework that takes as input an initial guess committee and estimates for the distribution marginals.
It always returns an EJR+ committee, and, conditional on the guess and estimates being accurate, requires only $\Oh(d\log dk)$ \planar\ queries per voter, amortized, in expectation.
This bound does not depend on $m$, the total number of candidates.
We have provided ways to obtain the guess and estimates in four settings: (i) when $\mathcal{D}$ is known; (ii) when $\mathcal{D}$ is unknown, where some voters are asked $\Oh(d\log m)$ queries; (iii) when $\mathcal{D}$ is unknown, where some voters are asked only $\Oh(d\log k)$ queries; and (iv) when $\mathcal{D}$ is smooth (Lipschitz).
In particular, in the smooth case we have shown that even when the distribution is unknown we can still obtain the result with an electorate that does not scale polynomially in $m$, depending on $m$ only through $\log\log m$.

Several questions remain open.
We would like to consider more general $d$-dimensional preferences, e.g.\ spheres or general convex shapes.
We would also like to look at whether the core can be guaranteed.
A lower bound on the amount of information required from voters in expectation would also be interesting: that is, if we elicit fewer than $d\log dk$ queries per voter, can we still guarantee EJR+? We conjecture not.
This could take the form of constructing a family of elections such that, regardless of how we elicit that information, there would always exist two elections such that there is no EJR+ committee for both elections simultaneously.
Finally, it would be interesting to see how realistic axis-aligned rectangles are with respect to real preferences.

\subsubsection*{Acknowledgments.}
This work was partly supported by the UK Research and Innovation (UKRI) Engineering and Physical Sciences Research Council (EPSRC) [grant numbers EP/X038351/1, EP/X038348/1]. This work was also partly supported by UK Research and Innovation [grant number EP/S023356/1], in the UKRI Centre for Doctoral Training in Safe and Trusted Artificial Intelligence (\href{https://safeandtrustedai.org}{safeandtrustedai.org}).

\bibliographystyle{splncs04}
\bibliography{ddim_clean}

%%%%%%%%%%%%%%%%%%%%%%%%%%%%%%%%%%%%%%%%%%%%%%%%%%%%%
\newpage
\appendix
% The body and the appendix share the raw section counter (\appendix resets it
% to 0), so hyperref would build identical PDF anchors (section.1, subsection.2.1,
% ...) for appendix and body units, making appendix links jump to the body.
% Give appendix units their own anchor names to keep every destination unique.
\renewcommand{\theHsection}{appendix.\Alph{section}}
\renewcommand{\theHsubsection}{appendix.\Alph{section}.\arabic{subsection}}

\section{Table of Notation}
\label{sec:notation}
For reference, we collect here all notation used in the paper.

\renewcommand{\arraystretch}{1.25}
\begin{longtable}{@{}l@{\hspace{1.5em}}p{0.8\textwidth}@{}}
\hline
\textbf{Symbol} & \textbf{Meaning}\\
\hline
\endfirsthead
\hline
\textbf{Symbol} & \textbf{Meaning}\\
\hline
\endhead
\hline
\endfoot
\multicolumn{2}{@{}l}{\textit{Elections, committees, and fairness}}\\
$E=(V,C,k,A)$ & A multiwinner voting (MWV) election.\\
$V,\ v$ & Set of $n$ voters; a voter $v\in V$.\\
$C,\ c$ & Set of $m$ candidates $C\subset[0,1]^d$; a candidate $c\in C$, with coordinate $c_i$ on axis $i$.\\
$n,\ m,\ k,\ d$ & Number of voters, number of candidates, committee size, number of dimensions.\\
$A,\ A(v)$ & Approval function $A:V\to 2^C$; the approval set of voter $v$.\\
$H,\ A(H)$ & A group of voters $H\subseteq V$; their common approvals $A(H)=\bigcap_{v\in H}A(v)$.\\
$\ell$ & A representation level, $\ell\in[k]$.\\
$\widehat W$ & The guessed committee supplied to the framework.\\
$N_{c,\ell}$ & Cohesive group $\{v:c\in A(v),\ |A(v)\cap W|<\ell\}$.\\
\hline
\multicolumn{2}{@{}l}{\textit{The \mname\ spatial model}}\\
$A^*(v)$ & Voter $v$'s true approval region (a hyper-rectangle).\\
$\mathcal D$ & Distribution over axis-aligned hyper-rectangles.\\
$a^i_v,\ b^i_v$ & Lower/upper endpoint of voter $v$ on axis $i$ ($a^i_v\leq b^i_v$).\\
$\underline a,\ \underline b$ & Endpoint vectors; the rectangle $\prod_{i\in[d]}[a_i,b_i]$.\\
$F^i_a,\ F^i_b$ & True per-dimension CDFs of the lower/upper endpoints ($F_a,F_b$ when $d=1$).\\
$p_{c,\ell}$ & Single-voter cohesive probability $\Pr[v\in N_{c,\ell}]$.\\
\hline
\multicolumn{2}{@{}l}{\textit{Planar queries and bracketing}}\\
$\hat A(v),\ \check A(v)$ & Outer/inner approval brackets, $\check A(v)\subseteq A(v)\subseteq\hat A(v)$.\\
$\hat a^i_v,\hat b^i_v;\ \check a^i_v,\check b^i_v$ & Outer/inner bracket endpoints on axis $i$.\\
$P$ & Query set built by \texttt{verify} ($P\subseteq C$).\\
$P_i$ & Candidates added to $P$ on axis $i$ by {\qsl} (a subset of $C$).\\
$P^{(i)}$ & Set of $i$-th coordinate values of candidates in $P$, i.e.\ $\{p_i:p\in P\}$.\\
$c^-_i,\ c^+_i$ & Closest $P_i$-coordinates bracketing $c_i$ ($c^-_i\leq c_i\leq c^+_i$).\\
$s_{\ell,c}$ & Number of counted possible EJR+ witnesses against $\widehat W$ at $(c,\ell)$.\\
\hline
\multicolumn{2}{@{}l}{\textit{The verification framework}}\\
$\Delta=\delta/(4d)-\varepsilon$ & \qsl spacing used inside \texttt{verify} (derived from $\delta,\varepsilon$).\\
$\delta$ & (PW) margin.\\
$\varepsilon$ & (PF) estimation error.\\
$\gamma=\delta-2d(\Delta+2\varepsilon)$ & Net margin after the bracketing gap loss.\\
$\alpha$ & Small positive constant in the margin condition $\gamma\geq\alpha/k^2$.\\
$\eta$ & Probability that verification fails.\\
$\Lambda=\min\{m,(2|P|+1)^d\}$ & Number of $P$-equivalence classes (cells); $\log\Lambda=O(d\log dk)$ (\Cref{lem:cells}).\\
$\Xi,\ Z$ & Event that voter $v$ is counted in $s_{\ell,c}$; its sub-event with $c\notin A(v)$.\\
\hline
\multicolumn{2}{@{}l}{\textit{W-selection and $\widehat F$-estimation modules}}\\
$\beta$ & Batch size.\\
$B,\ B_t,\ B(c)$ & A candidate batch; the $t$-th batch; the batch containing $c$.\\
$V_B$ & Voter pool serving batch $B$; the single estimation pool (case $\beta=m$).\\
$\zeta_c$ & Empirical witness count for a candidate.\\
$q^*$ & Empirical selection threshold (NGJCR).\\
$\widehat F^i_a,\ \widehat F^i_b$ & Estimated per-dimension CDFs.\\
$\widehat F^i_{a,B},\ \widehat F_B$ & Per-batch CDF estimate; batch empirical CDF (axis $i$, lower endpoint).\\
$n_W,\ q_W$ & W-selection module: pool size and per-voter query load.\\
$n_F,\ q_F$ & $\widehat F$-estimation module: pool size and per-voter query load.\\
$p_W,\ p_F$ & Failure probabilities of the W-selection / estimation modules.\\
\hline
\multicolumn{2}{@{}l}{\textit{Smooth (Lipschitz) setting}}\\
$K,\ f$ & Lipschitz constant; the density on the endpoint space ($f\leq K$).\\
$\mathcal R$ & Endpoint space $\{(\underline a,\underline b)\in[0,1]^{2d}:\underline a\leq\underline b\}$.\\
$\rho,\ \mathcal N,\ \mathcal N_t,\ \mathrm{rep}$ & Net cell resolution; the candidate net; net batches; cell-representative map $\mathrm{rep}:C\to\mathcal N$.\\
$T$ & Quantile grid ($|T|\leq 2K/\varepsilon$)\\
$\lefty{c}{T}$ & Nearest grid point in $T$ with coordinate $\leq c_i$ on each axis (left-extension of $c$ onto $T$). Contrast with $c^-_i$ (the left $P$-bracket).\\
$\tau=\alpha/(4k^2)$ & Lipschitz stability slack.\\
$\mathcal E,\ \mathcal E_i$ & Good (concentration) event; per-coordinate event.\\
\hline
\multicolumn{2}{@{}l}{\textit{PGJCR budget analysis}}\\
$\mathcal P(C)$ & Power set of $C$ (the possible ballots).\\
$X,\ p_X$ & A ballot $X\in\mathcal P(C)$; its probability $\Pr[A(v)=X]$.\\
$\mathcal X_c$ & Ballots witnessing $c$ at level $\ell$: $\{X:c\in X,\ |W\cap X|<\ell\}$.\\
$p_c$ & $\Pr[A^*(v)\in\mathcal X_c]$.\\
$\theta_i(X)$ & Budget of ballot $X$ after $i$ selections; $\theta_0(X)=(k+1)p_X$.\\
$\pi_c(X)$ & Payment $p_X/p_c$ charged to ballot $X$ for candidate $c$.\\
\hline
\multicolumn{2}{@{}l}{\textit{General mathematical and probabilistic notation}}\\
$[t]$ & The set $\{1,\dots,t\}$.\\
$X[i:j]$ & Slice notation $\{X[t]:i\leq t<j\}$; indexing starts at $0$.\\
$\mathds 1[\Phi]$ & Indicator of event $\Phi$.\\
\end{longtable}
\renewcommand{\arraystretch}{1}
\section{Omitted algorithm listings}
\label{app:algorithms}
For completeness we collect here the pseudocode deferred from the body: the fallback rule GJCR
(\Cref{alg:GJCR}), the distribution-input W-selection rule PGJCR (\Cref{alg:PGJCR}), the query-based
W-selection rule NGJCR (\Cref{alg:noisygjcr}), and its Lipschitz, net-based variant (\Cref{alg:netngjcr}).

\begin{algorithm}[p]
\SetKwFunction{FGJCR}{GJCR}
\SetKwProg{Fn}{Function}{:}{}
\Fn{\FGJCR{$C,V,k$}}{
    $W\gets \varnothing$\;
    \For{$\ell= k$ to $1$}{
        \For{$c\in C\setminus W$}{
            \lIf{$|N_{c,\ell}|\geq n\ell/k$}{
                $W\gets W\cup\{c\}$
            }
        }
    }
    \KwRet{W}
}
\caption{GJCR \citep{DBLP:conf/sigecom/Brill023}}
\label{alg:GJCR}
\end{algorithm}

\begin{algorithm}[p]
\SetKwFunction{FPGJCR}{PGJCR}
\SetKwProg{Fn}{Function}{:}{}
\Fn{\FPGJCR{$C,V,\mathcal{D},k$}}{
    $W\gets \varnothing$\;
    \For{$\ell= k$ to $1$}{
        \For{$c\in C\setminus W$}{
            \lIf{$\E[|N_{c,\ell}|]> n\ell/(k+1)$}{
                $W\gets W\cup\{c\}$
            }
        }
    }
    \KwRet{W}
}
\caption{PGJCR}
\label{alg:PGJCR}
\end{algorithm}

\begin{algorithm}[p]
\SetKwFunction{FQuery}{resolve}
\caption{Noisy Greedy Justified Candidate Rule \citep{DBLP:conf/sigecom/Brill023}, adapted to our notation}
\label{alg:noisygjcr}
$p_W,\delta_1,\beta$ provided\;
$W \gets \emptyset$\;
$h_1 \gets \frac{k^4}{\alpha^2}\log(4mk/p_W)$\;
$\ell \gets k$\;
\While{$\ell \geq 1$}{
    $q^*\gets(2k+1)\ell/(2k(k+1))-\delta_1/2$\;
    Partition $C \setminus W$ into $B_1, \dots, B_{\lceil m/\beta\rceil}$\;
    \FQuery{$v,W \cup B_t$} for each $t \in [\lceil m/\beta\rceil]$ and for $h_1$ fresh voters\;
    Assign $\zeta_c$ the number of queried voters $j$ with $c\in A(j)$ and $\lvert A(j) \cap W \rvert < \ell$, for each $c \notin W$\;
    \eIf{there is $c\notin W$: $\zeta_c/h_1 \geq q^*$}{
        $W \gets W \cup \{c\}$\;
    }{
        \While{there is no $c\notin W$: $\zeta_c/h_1 \geq q^*$}{
            $\ell \gets  \ell - 1$\;
            $q^*\gets(2k+1)\ell/(2k(k+1))-\delta_1/2$\;
            }
    }
}
\Return{$W$}\;
\end{algorithm}

\begin{algorithm}[p]
\SetKwFunction{FQuery}{resolve}
\caption{Net-based Noisy GJCR for a Lipschitz \mname}
\label{alg:netngjcr}
$p_W,\delta_1,\beta$ provided, with $\delta_1\leq 1/(k(k+1))-\alpha/k^2$ for a constant $\alpha>0$ and $k\leq\beta\leq|\mathcal{N}|$\;
$\rho\gets \alpha/(4Kdk^2)$\;
Build a $\rho$-net $\mathcal{N}\subseteq C$ with cell map $\mathrm{rep}:C\to \mathcal{N}$ (\Cref{def:net})\;
$h_1\gets \lceil (8k^4/\alpha^2)\log(4k|\mathcal{N}|/p_W)\rceil$\;
$W\gets\varnothing$;\quad $\ell\gets k$\;
\While{$\ell\geq 1$}{
    $q^*\gets(2k+1)\ell/(2k(k+1))-\delta_1/2$\;
    Partition $\mathcal{N}$ into batches $\mathcal{N}_1,\dots,\mathcal{N}_{\lceil|\mathcal{N}|/\beta\rceil}$ of at most $\beta$ candidates\;
    \FQuery{$v,W\cup \mathcal{N}_t$} for each $t$ and for $h_1$ fresh voters per batch\;
    $\zeta_{c'}\gets|\{v\in\text{pool of }\mathcal{N}_{t(c')}:\ c'\in A(v),\ |A(v)\cap W|<\ell\}|$ for each $c'\in \mathcal{N}$\;
    \eIf{there is $c\in C\setminus W$ with $\zeta_{\mathrm{rep}(c)}/h_1\geq q^*$}{
        $W\gets W\cup\{c\}$ for one such $c$ (preferring $c=\mathrm{rep}(c)$ when available)\;
    }{
        $\ell\gets\ell-1$\;
    }
}
\Return{$W$}\;
\end{algorithm}
\section{Proofs from \Cref{sec:wselection}}
\label{sec:pgjcrproof}
\begin{remark}[Strict monotonicity, WLOG]
\label{rem:strictmono}
We may assume every CDF (true or estimated) fed into a {\qsl} call is strictly increasing: replacing $\widehat F^i_a$ by $\widehat F^i_a(x)+\iota x$ for an arbitrarily small $\iota>0$ preserves monotonicity, makes it strictly increasing, and changes its (PF) error by at most $\iota$, which we absorb into $\varepsilon$ since $\iota$ can be taken arbitrarily small; similarly for $\widehat F^i_b$ and $\widehat F_a$.
Consequently, ties in {\qsl}'s selection criterion (\Cref{alg:quantile-select}) arise only between distinct candidates that happen to share the exact same coordinate on the relevant axis; such ties are broken by an arbitrary fixed rule (e.g.\ candidate index), and none of our results depend on which candidate is chosen.
\end{remark}
\begingroup
\def\thetheorem{\ref{thm:pgjcr}}
\def\theHtheorem{restate-thm-pgjcr}
\begin{theorem}
\pgjcrstatement
\end{theorem}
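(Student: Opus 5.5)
The claims about pool and queries are immediate: PGJCR (\Cref{alg:PGJCR}) only evaluates $\E|N_{c,\ell}|=n\,p^W_{c,\ell}$, which is computable from $\mathcal{D}$, so $n_W=q_W=0$. There is also no randomness, so $p_W=0$. The substance lies in two claims, (PW) and the size bound, and I would prove them separately.

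\textbf{(PW).} Let $W$ be the final committee, and fix $c\in C\setminus W$ and $\ell\in[k]$. During the iteration at level $\ell$, $c$ was examined with the then-current committee $W_\ell\subseteq W$ and was rejected. Hence $p^{W_\ell}_{c,\ell}\le \ell/(k+1)$. Because $p_{c,\ell}$ is non-increasing as $W$ grows, we get $p^W_{c,\ell}\le\ell/(k+1)$. One caveat: within level $\ell$, $c$ might be examined before later additions. Monotonicity handles this, since those additions only decrease $p_{c,\ell}$. Then
\[
\frac{\ell}{k+1}=\frac{\ell}{k}-\frac{\ell}{k(k+1)}\le\frac{\ell}{k}-\frac{1}{k(k+1)}
\]
gives margin $\delta=1/(k(k+1))$.

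\textbf{Size bound (main obstacle).} I would use a budget argument over ballots $X\in\mathcal P(C)$ with $p_X=\Pr[A(v)=X]$. Give each ballot the initial budget $\theta_0(X)=(k+1)p_X$, so the total is $k+1$. When $c$ is selected at level $\ell$, let $\mathcal X_c=\{X:c\in X,|W\cap X|<\ell\}$ with mass $p_c>\ell/(k+1)$. Charge each $X\in\mathcal X_c$ the amount $\pi_c(X)=p_X/p_c<(k+1)p_X/\ell$. The total charge per selection is exactly $1$.

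The key invariant is that no ballot is ever overdrawn. Suppose $X$ pays for a selection at level $\ell$. Then $X$ currently contains fewer than $\ell$ committee members, so it has paid at most $\ell-1$ times before. Every earlier payment occurred at a level $\ell'\ge\ell$ (levels decrease), so each cost less than $(k+1)p_X/\ell'\le(k+1)p_X/\ell$. Including the current payment, the total spent by $X$ is less than $\ell\cdot(k+1)p_X/\ell=(k+1)p_X$. Budgets therefore stay positive; more precisely, each participating ballot retains strictly positive budget.

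Since each selection removes exactly $1$ from a total that starts at $k+1$ and remains strictly positive, at most $k$ selections can occur. To make this rigorous, I would argue that after $k+1$ selections the total budget would be $0$, while strictness shows that any ballot that ever paid still has positive remaining budget. That is a contradiction because at least one ballot paid, with $p_c>0$. This gives $|W|\le k$.

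The delicate point is the strict inequality, together with confirming that the charges compare correctly across levels, namely that earlier selections happened at levels $\ge\ell$. The loop order $\ell=k,\dots,1$ guarantees this.
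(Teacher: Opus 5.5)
Your proposal is correct and follows essentially the same route as the paper. For the size bound it uses the same ballot-budget argument, with $\theta_0(X)=(k+1)p_X$ and payments $p_X/p_c$; for (PW) it uses the same by-construction rejection argument, giving margin $1/(k(k+1))$. You are slightly more careful than the paper in making explicit that $p_{c,\ell}$ is non-increasing in $W$ (needed because $c$ is tested against an intermediate committee) and that earlier payments occurred at levels $\ell'\geq\ell$.
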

\addtocounter{theorem}{-1}
\endgroup
\begin{proof}
First, we show that $|W|\leq k$.
Let $p_X$ be the probability that $A(v)=X$ defined over the sets of $C$.
We start by considering the function $\theta_0(X)=(k+1)p_X$, so $\sum_{X\in \mathcal{P}(C)}\theta_0(X)=k+1$.
To construct the payments, consider any step in which a candidate $c$ is selected: say that this is the $i$'th candidate to be added to $W$.
At this step, let $\mathcal{X}_c$ denote the set $\{X\in \mathcal{P}(C):c\in X,|W\cap X|<\ell\}$.
Since $c$ was added, we know that $\Pr(A(v)\in \mathcal{X}_c)>\frac{\ell}{k+1}$.
Let $p_c=\Pr(A^*(v)\in \mathcal{X}_c)=\sum_{X\in \mathcal{X}_c}p_X$.
We define two new functions, $\pi_c(X)=p_X/p_c$ and $\theta_i(X)=\theta_{i-1}(X)-\pi_c(X)$.
We have that $\sum _{X\in \mathcal{X}_c} \pi_c(X)=1$.
We notice that in any previous iteration that considered $c'$, say, we have $p_{c'}>\frac{\ell}{k+1}$.
Since for each $X\in \mathcal{X}_c$ we have $|X\cap W|<\ell$, we have that $\theta_i(X)>\theta_0(X)-\ell p_X(k+1)/\ell=0$.
Consider $\theta_k$ (or the last $\theta_i$ if fewer than $k$ candidates are selected for $W$ in total).
In each iteration, $\sum_{X\in \mathcal {P}(C)}\theta_i(X)=(\sum _{X\in \mathcal {P}(C)}\theta_{i-1}(X))-1$, $\sum _{X\in \mathcal {P}(C)}\theta_k(X)>0$, and $\sum _{X\in \mathcal {P}(C)}\theta_0(X)=k+1$, we must have at most $k$ iterations, and thus at most $k$ candidates selected.

Now, we show that $W$ satisfies (PW) with margin $1/(k(k+1))$. 
We shall show for any $c \in C\setminus W$, $\ell\in[k]$, that $p_{c,\ell}\leq \ell/(k+1)\leq \ell/k-1/(k(k+1))$. 
By construction, if $p_{c,\ell}> \ell/(k+1)$ then $c$ would have been selected for $W$, hence this cannot be the case.
\end{proof}
\begingroup
\def\thetheorem{\ref{thm:ngjcr}}
\def\theHtheorem{restate-thm-ngjcr}
\begin{theorem}
\ngjcrstatement
\end{theorem}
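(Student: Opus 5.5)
We establish three things: the query and pool accounting, a high-probability good event on which every empirical estimate is accurate, and, on that event, both (PW) and the size bound $|W|\le k$ via the budget argument of \Cref{thm:pgjcr}.

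\textbf{Accounting.} Each round of \Cref{alg:noisygjcr} partitions $C\setminus W$ into at most $\lceil m/\beta\rceil$ batches and resolves $h_1$ fresh voters per batch on $W\cup B_t$. Since $|W\cup B_t|\le k+\beta\le 2\beta$ (using $\beta\ge k$ and $|W|\le k$, the latter proved below), \Cref{prop:queryingqueries} gives $q_W=\Oh(d\log\beta)$. Each round either adds a candidate or lowers $\ell$, so there are at most $2k$ rounds. If we only redraw pools when $W$ changes and reuse them while $\ell$ decreases, this becomes $k+1$ rounds. Either way $n_W=\Theta(h_1 k m/\beta)$. Within a batch, the $\zeta_c$ are computed exactly: every approval of $c\in B_t$ and of $W$ is read off the outer bracket, which is exact on the queried set by \Cref{prop:approximateapprovals}.

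\textbf{Good event.} Conditional on the current $W$, each $\zeta_c/h_1$ is an average of $h_1$ i.i.d.\ indicators with mean $p^W_{c,\ell}$, because the voters are fresh. Set the tolerance $\tau=\alpha/(2k^2)$. By \Cref{fact:hoeffding}, $|\zeta_c/h_1-p^W_{c,\ell}|<\tau$ fails with probability at most $2e^{-2h_1\tau^2}=2e^{-h_1\alpha^2/(2k^4)}$. We then union bound over all $m$ candidates in each of the at most $2k$ rounds. To avoid conditioning subtleties, we apply Hoeffding per round conditionally on the history, which is valid because each round's pool is independent of the past. The resulting failure bound is $4mk\,e^{-h_1\alpha^2/(2k^4)}$. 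This is at most $p_W$ once $h_1=\Theta(k^4/\alpha^2)\log(4mk/p_W)$; the constant in the stated $h_1$ may need adjusting, which the $\Theta$ absorbs.

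\textbf{Deterministic consequences on the good event.} First we check the margins. The threshold satisfies
$q^*-\ell/(k+1)=\ell/(2k(k+1))-\delta_1/2\ge \tfrac12\big(1/(k(k+1))-\delta_1\big)\ge \alpha/(2k^2)=\tau$,
and symmetrically $(\ell/k-\delta_1)-q^*=q^*-\ell/(k+1)\ge\tau$.

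\emph{Size.} Any selected $c$ has $p^W_{c,\ell}>q^*-\tau\ge\ell/(k+1)$ at selection time. The budget/payment argument in the proof of \Cref{thm:pgjcr} uses only this fact, so it yields at most $k$ selections.

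\emph{(PW).} Suppose some $c\notin W_{\mathrm{final}}$ has $p_{c,\ell}>\ell/k-\delta_1$ for the final committee. Since $p$ is non-increasing in $W$, at the round where level $\ell$ was abandoned, $p^W_{c,\ell}>\ell/k-\delta_1$ held for the then-current $W$. On the good event this gives $\zeta_c/h_1>q^*$, contradicting the decision to decrement $\ell$. Hence every remaining pair satisfies $p_{c,\ell}\le\ell/k-\delta_1$.

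\textbf{Main obstacle.} The delicate step is this last one. We must pin down the algorithm's control flow precisely: a level is abandoned only after a fresh estimate in which no candidate clears $q^*$, and monotonicity transfers that round's certification to the final committee. We must also ensure the union bound covers adaptively chosen committees. I would handle the adaptivity by conditioning round by round on the history, and I would handle the control flow by reading the inner while-loop as testing the already computed $\zeta_c$ at lower levels. Because $\zeta_c$ depends on $\ell$, I would instead recompute it at each level from the same pool, which Hoeffding still covers, with the union bound extended over the $k$ levels.
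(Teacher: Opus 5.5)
Your proposal is correct and follows essentially the paper's argument: Hoeffding concentration of $\zeta_c/h_1$ about $p^W_{c,\ell}$, the midpoint threshold $q^*$, a union bound, and the PGJCR budget count for $|W|\le k$, organised as a round-by-round good event in the style of the paper's net-NGJCR proof. Your constant caveat is justified, since the hypothesis on $\delta_1$ only guarantees a gap of $\alpha/(2k^2)$ between $q^*$ and either bound (not the $\alpha/k^2$ the paper asserts), so $h_1$ must be doubled for the union bound to close. Two small tidy-ups: define the good event over the first $2k$ rounds that occur, so the round count does not circularly rely on $|W|\le k$; and the inner while-loop needs no modification, because $p^W_{c,\ell'}\le p^W_{c,\ell_0}$ for $\ell'<\ell_0$ lets the stale level-$\ell_0$ estimates certify $p^W_{c,\ell'}<q^*(\ell')+\tau\le \ell'/k-\delta_1$ at every skipped level $\ell'$.
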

\addtocounter{theorem}{-1}
\endgroup
The concentration argument below adapts the analysis Brill and Peters give for their Noisy GJCR algorithm \citep{DBLP:conf/sigecom/Brill023} to the (PW) notion and the empirical statistic $\zeta_c$ used here.
\begin{proof}
Fix a level $\ell\in[k]$, call $c\in C$ \emph{tiny} if $p_{c,\ell}\leq \ell/(k+1)$ and \emph{large} if $p_{c,\ell}\geq \ell/k-\delta_1$, and write $g\coloneqq \ell/(2k(k+1))-\delta_1/2$ for the gap separating $q^*$ from each of these two bounds.
At this level, NGJCR admits a candidate only via the test $\zeta_c/h_1\geq q^*$, so it can violate (PW) with margin $\delta_1$ only by admitting some tiny $c$ or rejecting some large $c$; by the counting argument of \Cref{thm:pgjcr}, rejecting every large candidate also keeps $|W|\leq k$.
It therefore suffices to bound each of these two events, for fixed $(c,\ell)$, by $\frac{p_W}{4mk}$: a union bound over the at most $mk$ pairs $(c,\ell)$ then caps the total failure probability at $\frac{p_W}{2}\leq p_W$.

For tiny $c$, $\mathbb{E}[\zeta_c]\leq h_1\ell/(k+1)=h_1(q^*-g)$, so admission requires $\zeta_c$ to exceed its mean by at least $h_1g$; by \Cref{fact:hoeffding},
\[
\Pr[\zeta_c\geq h_1q^*]\leq\exp(-2g^2h_1).
\]
For large $c$, symmetrically $\mathbb{E}[\zeta_c]\geq h_1(\ell/k-\delta_1)=h_1(q^*+g)$, so rejection requires a deviation of at least $h_1g$ below the mean, giving the same bound $\Pr[\zeta_c<h_1q^*]\leq\exp(-2g^2h_1)$.
As noted above, $g\geq \alpha/k^2$, so both bounds are at most $\exp(-2h_1\alpha^2/k^4)$, which equals $\left(\frac{p_W}{4mk}\right)^2\leq\frac{p_W}{4mk}$ for our choice of $h_1$.
\end{proof}
\subsection{Proof of \Cref{thm:1dpry}}
First, we prove two additional properties of {\qsl}.\begin{lemma}[Pigeonhole]
\label{lem:1dpigeonhole}
Let $S={\qsl}(C,\qsarg,G)$, sorted by position as $u_{(1)},\ldots,u_{(|S|)}$.
Then for every $i\in[|S|]$,
\[
    G(u_{(i)})\geq i\qsarg.
\]
\end{lemma}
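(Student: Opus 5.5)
The plan is an induction on the order in which {\qsl} adds candidates, combined with monotonicity of $G$. By \Cref{rem:strictmono} we may take $G$ strictly increasing in position, so sorting $S$ by position is the same as sorting by $G$-value. The difficulty is that {\qsl} selects at levels $r=1,2,\ldots$, but some levels may be skipped (when $Q_r=\varnothing$), and the candidate picked at level $r$ need not be the $r$-th smallest in $S$. So we cannot simply say ``the $i$-th pick is at level $\geq i\qsarg$''. We have to relate positional rank to selection steps.

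The first step is to show that selected $G$-values are strictly increasing in the step index. Let the candidate $u_r$ be selected at step $r$, with $G(u_r)\geq r\qsarg$. Then at step $r'>r$, every available candidate in $Q_{r'}$ has $G\geq r'\qsarg$. I also want $G(u_{r'})\geq G(u_r)$.

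\begin{itemize}
\item \textbf{Case $G(u_r)\geq r'\qsarg$.} Here $u_r$ lay in $Q_{r'}$'s superset when $u_r$ was chosen. Any $c$ with $G(c)<G(u_r)$ and $G(c)\geq r\qsarg$ would have been in $Q_r$ and chosen instead, since $u_r$ is the minimiser. So every still-available candidate with $G\geq r\qsarg$ has $G\geq G(u_r)$.
\item \textbf{Case $G(u_r)<r'\qsarg$.} Then trivially $G(u_{r'})\geq r'\qsarg>G(u_r)$.
\end{itemize}

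Either way, the selection order coincides with the $G$-order, and hence with the positional order.

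With this in hand, the $i$-th smallest element $u_{(i)}$ of $S$ is the $i$-th selected element. It was selected at some step $r_i$. Because at most one candidate is added per step, the step indices satisfy $r_1<r_2<\cdots$, so $r_i\geq i$. By definition of $Q_{r_i}$, this gives $G(u_{(i)})\geq r_i\qsarg\geq i\qsarg$, which is the claim.

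The main obstacle is the first step, the monotonicity of selected values across steps. It hinges on the minimiser property, together with the fact that $Q_r$ only shrinks in its lower threshold as $r$ grows. Ties on equal coordinates are harmless: they give equal $G$-values, and the fixed tie-breaking rule lets us order tied elements consistently with the selection order.
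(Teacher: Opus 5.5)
Your proof is correct, but it takes a more roundabout route than the paper. You first prove a structural invariant: by the minimiser rule and a two-case analysis, the $G$-values of selected candidates are non-decreasing in the step at which they are chosen. Positional rank therefore coincides with selection rank, and $r_i\geq i$ finishes the argument.

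The paper never relates positional rank to selection step. It takes the $i$ positionally-smallest selected points $u_{(1)},\ldots,u_{(i)}$. Their selection steps are $i$ distinct positive integers, so one of them is at least $i$, and the point chosen at that step has $G$-value at least $i\qsarg$. Since $u_{(i)}$ has the largest $G$-value among these $i$ points, $G(u_{(i)})\geq i\qsarg$ follows.

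So the obstacle you flagged, that ``the $i$-th pick need not be at level $\geq i\qsarg$'', is exactly what the pigeonhole step sidesteps. That argument uses only $G(u)\geq r(u)\qsarg$ and injectivity of $r$, not the $\arg\min$ rule. It would therefore hold for any rule that picks, at level $r$, some available candidate with $G\geq r\qsarg$.

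Your route buys a stronger fact (selection order equals $G$-order), which is true but not needed for this lemma. It costs the case analysis and the extra care about ties that you had to add at the end.
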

In words, since {\qsl} fills distinct quantile levels as it scans from left to right, the $i$-th selected point (by position) has already cleared $i$ of them, so its $G$-value is at least $i\qsarg$; the selected points fan out across the range rather than bunching together.
\begin{proof}
Each $u\in S$ was selected at some step $r(u)\in\{1,\ldots,\lceil 1/\qsarg\rceil-1\}$, so $G(u)\geq r(u)\qsarg$.
Different elements of $S$ are selected at different steps, so $r$ is an injection.
Among $u_{(1)},\ldots,u_{(i)}$, the values $r(u_{(1)}),\ldots,r(u_{(i)})$ are $i$ distinct positive integers, so at least one is $\geq i$, and the candidate selected at that step has $G$-value $\geq i\qsarg$.
Since $u_{(i)}$ has the largest $G$-value among $u_{(1)},\ldots,u_{(i)}$, we get $G(u_{(i)})\geq i\qsarg$.
\end{proof}

\begin{lemma}[Left-count]
\label{lem:1drho}
Let $S={\qsl}(C,\qsarg,G)$, fix $c\in C\setminus S$, and let $\xi=|\{r\in\mathbb{N}:r\qsarg<G(c)\}|$.
Then $|\{u\in S:u<c\}|\geq\xi$.
\end{lemma}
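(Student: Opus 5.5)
We want: for $c\notin S$ with $\xi$ levels $r\qsarg<G(c)$ (that is, $r=1,\dots,\xi$), at least $\xi$ selected points lie to the left of $c$. Here we read ``$u<c$'' as $G(u)<G(c)$, i.e.\ position order, since in the 1-D setting $G$ is non-decreasing in position and strictly increasing by \Cref{rem:strictmono}. The plan is to show that each step $r\in\{1,\dots,\xi\}$ of \Cref{alg:quantile-select} selects a distinct candidate $u_r$ with $G(u_r)<G(c)$, or else that enough such candidates were already selected. Injectivity of steps then gives $\xi$ distinct elements.

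\textbf{Step 1 (each step is nonempty).} Fix $r\le\xi$, so $r\qsarg<G(c)$. We need $r\le\lceil 1/\qsarg\rceil-1$ so that step $r$ is executed. Since $G(c)\le 1$, we have $r<1/\qsarg$, which gives this bound. Because $c\notin S$, the candidate $c$ is still available at step $r$ and satisfies $G(c)\ge r\qsarg$. Hence $c\in Q_r$, so $Q_r\neq\varnothing$ and some $u_r$ is selected.

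\textbf{Step 2 (the selected point lies left of $c$).} The point $u_r$ minimises $G$ over $Q_r\ni c$, so $G(u_r)\le G(c)$. If $G(u_r)=G(c)$, strict monotonicity (\Cref{rem:strictmono}) forces $u_r$ and $c$ to share a coordinate. Then $u_r$ is not strictly left of $c$, and the tie-break prevents nothing. This tie case is the main obstacle. I would handle it by interpreting $u<c$ via the fixed tie-break order, so that $u_r$ precedes $c$ in it; the tie-break picked $u_r$ over $c$, so $u_r$ comes first. Alternatively, I would argue that ties are measure-zero under strict monotonicity of positions, taking distinct coordinates as WLOG. With either reading, $u_r<c$.

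\textbf{Step 3 (distinctness and count).} Each step adds a new element to $S$ (candidates are removed from availability once selected), so $u_1,\dots,u_\xi$ are pairwise distinct. All of them lie in $\{u\in S:u<c\}$, so this set has size at least $\xi$.
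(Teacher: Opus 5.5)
Your argument is essentially the paper's proof: for each $r\le\xi$ the still-available $c$ lies in $Q_r$, so the point $u_r$ chosen at step $r$ has $G(u_r)\le G(c)$, and distinct steps give distinct points (your check that step $r$ is actually executed is a detail the paper leaves implicit). The tie case you flag as the main obstacle cannot arise, so you need neither a reinterpretation of $<$ via the tie-break order nor a measure-zero appeal, which is unjustified here since candidate positions are fixed, not random: in this one-dimensional lemma $C\subset\mathbb{R}$ is a set, so distinct candidates have distinct positions, and with $G$ strictly increasing $G(u_r)=G(c)$ forces $u_r=c$, contradicting $u_r\in S$, $c\notin S$ --- exactly how the paper concludes $G(u_r)<G(c)$ and hence $u_r<c$.
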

In words, every quantile level strictly below $G(c)$ is claimed by a selected point lying to the left of $c$, so the larger $G(c)$ is, the more selected points are guaranteed to sit to its left.
\begin{proof}
We show that each quantile level $r\qsarg$ with $r\leq\xi$ selects a candidate at position strictly less than $c$.
Since distinct levels select distinct candidates, this gives $|\{u\in S:u<c\}|\geq\xi$.

Fix such an $r$.
At step $r$, candidate $c$ is still available ($c\notin S$) and satisfies $G(c)>r\qsarg$, so $c\in Q_r$.
The point $u_r=\arg\min_{c'\in Q_r}G(c')$ selected at step $r$ satisfies $G(u_r)\leq G(c)$.
Since $G$ is strictly increasing (\Cref{rem:strictmono}) and $C\subset\mathbb{R}$ is a set of distinct values, $G(u_r)=G(c)$ would force $u_r=c$, contradicting $u_r\in S$ while $c\notin S$; so $G(u_r)<G(c)$ strictly, and $u_r<c$ by monotonicity of $G$.
\end{proof}

Now to prove \Cref{thm:1dpry}, we introduce some notation.
For $c\in C\setminus\W$, write $\sigma=|\{w\in\W:w<c\}|$ for the number of $\W$-candidates strictly to the left of $c$, and sort $\W=\{w_1<w_2<\cdots<w_{|\W|}\}$ by position.
Let $w^-_j$ ($j\geq1$) denote the $j$-th candidate of $\W$ to the left of $c$ (so $w^-_1=w_\sigma$, $w^-_2=w_{\sigma-1}$, \ldots); adopt the convention $w^-_j=0$ if $j>\sigma$.
We specialise the {\qsl} bounds of \Cref{sec:quantsel} to $S=\W$, $G=\widehat{F}_a$, and $\qsarg=1/(k+1)$.
The pigeonhole bound (\Cref{lem:1dpigeonhole}) gives $\widehat{F}_a(w_{(i)})\geq i/(k+1)$ for the $i$-th $\W$-candidate by position, and the left-count bound (\Cref{lem:1drho}) gives $\sigma\geq|\{r\in[k]:r/(k+1)<\widehat{F}_a(c)\}|$.

\begin{lemma}[Distance bound]
\label{lem:1ddist}
For every $c\in C\setminus\W$ and every $\ell\in[k]$,
\[
    \widehat{F}_a(c)-\widehat{F}_a(w_\ell^-)\leq\frac{\ell}{k+1}.
\]
Consequently, $F_a(c)-F_a(w_\ell^-)\leq\ell/(k+1)+2\varepsilon$.
\end{lemma}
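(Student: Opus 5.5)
We prove the first inequality by cases on how $\ell$ compares with $\sigma$, the number of $\W$-candidates strictly left of $c$. We then transfer to $F_a$ using (PF).

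\emph{Case $\ell\leq\sigma$.} Here $w^-_\ell=w_{\sigma-\ell+1}$ is a genuine committee member.
\begin{itemize}
\item The pigeonhole bound (\Cref{lem:1dpigeonhole}) gives $\widehat{F}_a(w_{\sigma-\ell+1})\geq(\sigma-\ell+1)/(k+1)$.
\item For the upper bound on $\widehat{F}_a(c)$, apply the left-count bound (\Cref{lem:1drho}) in contrapositive form. Since only $\sigma$ selected points lie strictly left of $c$, at most $\sigma$ levels $r/(k+1)$ can satisfy $r/(k+1)<\widehat{F}_a(c)$.
\item To make this exact, note that the levels are $1/(k+1),2/(k+1),\ldots$. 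If $\widehat{F}_a(c)>(\sigma+1)/(k+1)$, then the $\sigma+1$ levels $r=1,\ldots,\sigma+1$ are all strictly below $\widehat{F}_a(c)$. That would force $\sigma\geq\sigma+1$, a contradiction.
\item Here I must check that these levels are actually scanned by {\qsl}, i.e.\ $\sigma+1\leq\lceil k+1\rceil-1=k$. When $\sigma=k$ the level $k+1$ is not scanned. But then $\widehat{F}_a(c)\leq1=(k+1)/(k+1)$ holds trivially, so the bound $\widehat{F}_a(c)\leq(\sigma+1)/(k+1)$ survives in all cases.
\end{itemize}
Subtracting gives $\widehat{F}_a(c)-\widehat{F}_a(w^-_\ell)\leq(\sigma+1)/(k+1)-(\sigma-\ell+1)/(k+1)=\ell/(k+1)$, as required.

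\emph{Case $\ell>\sigma$.} Here $w^-_\ell=0$ by convention, so we need $\widehat{F}_a(c)-\widehat{F}_a(0)\leq\ell/(k+1)$. The same left-count argument gives $\widehat{F}_a(c)\leq(\sigma+1)/(k+1)\leq\ell/(k+1)$. Since $\widehat{F}_a(0)\geq0$, the bound follows. One subtlety is that the position $0$ may not be a candidate. I will read $\widehat{F}_a(0)$ as the value of a nonnegative CDF at $0$, which is harmless, and for the true $F_a$ we have $F_a(0)\geq0$ as well.

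\emph{Transfer to $F_a$.} The consequence follows from (PF):
\[
F_a(c)-F_a(w^-_\ell)\leq\widehat{F}_a(c)+\varepsilon-\widehat{F}_a(w^-_\ell)+\varepsilon .
\]
When $w^-_\ell=0$ we instead simply use $F_a(c)\leq\widehat{F}_a(c)+\varepsilon$ and $F_a(0)\geq0$, which loses even less.

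The main obstacle is the boundary bookkeeping in the left-count contrapositive. Specifically:
\begin{itemize}
\item the strictness in ``$r/(k+1)<\widehat{F}_a(c)$'' must be handled so that we obtain $\leq(\sigma+1)/(k+1)$ rather than something weaker;
\item the last quantile level not being scanned needs the separate $\sigma=k$ check above;
\item the convention $w^-_j=0$ must be treated uniformly with the genuine-candidate case.
\end{itemize}
I will handle these by stating the contrapositive as the explicit claim $\widehat{F}_a(c)\leq\min\{1,(\sigma+1)/(k+1)\}$ before splitting into cases.
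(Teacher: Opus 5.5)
Your proof is correct and follows essentially the paper's argument. The paper sets $\xi=|\{r\in[k]:r/(k+1)<\widehat F_a(c)\}|$, notes $\widehat F_a(c)\le(\xi+1)/(k+1)$ and $\xi\le\sigma$ by the left-count lemma, which together are exactly your bound $\widehat F_a(c)\le\min\{1,(\sigma+1)/(k+1)\}$; it then splits on $\sigma\ge\ell$ versus $\sigma<\ell$ and uses the pigeonhole lemma, just as you do. Your transfer in the $w^-_\ell=0$ case, which uses only $F_a(c)\le\widehat F_a(c)+\varepsilon$ and $F_a(0)\ge 0$ and so never invokes (PF) at a non-candidate point, is if anything slightly more careful than the paper's.
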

\begin{proof}
Let $\xi=|\{r\in[k]:r/(k+1)<\widehat{F}_a(c)\}|$.
Then $\widehat{F}_a(c)\leq(\xi+1)/(k+1)$.
\Cref{lem:1drho} gives $\sigma\geq\xi$.

If $\sigma\geq\ell$: $w^-_\ell=w_{(\sigma-\ell+1)}$ in position-sorted order.
\Cref{lem:1dpigeonhole} gives $\widehat{F}_a(w_\ell^-)\geq(\sigma-\ell+1)/(k+1)$.
Hence
\[
\widehat{F}_a(c)-\widehat{F}_a(w_\ell^-)
\leq\frac{\xi+1}{k+1}-\frac{\sigma-\ell+1}{k+1}
=\frac{\xi-\sigma+\ell}{k+1}
\leq\frac{\ell}{k+1},
\]
using $\xi\leq\sigma$.

If $\sigma<\ell$: by convention $w_\ell^-=0$ and $\widehat{F}_a(w_\ell^-)=0$.
Since $\sigma\geq\xi$ and $\sigma<\ell$, we have $\xi\leq\ell-1$, so $\widehat{F}_a(c)\leq(\xi+1)/(k+1)\leq\ell/(k+1)$.

The true-CDF bound follows by (PF): $F_a(c)-F_a(w_\ell^-)\leq\widehat{F}_a(c)-\widehat{F}_a(w_\ell^-)+2\varepsilon\leq\ell/(k+1)+2\varepsilon$.
\end{proof}

\begin{lemma}[Inclusion]
\label{lem:1dinclusion}
$N_{c,\ell}\subseteq\{v:a_v\in(w_\ell^-,c]\}$.
\end{lemma}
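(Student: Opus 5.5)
We prove the contrapositive: take a voter $v$ with $a_v\notin(w_\ell^-,c]$ and show $v\notin N_{c,\ell}$. We are in the one-dimensional setting, so $A^*(v)=[a_v,b_v]$ and $c\in A(v)$ iff $a_v\leq c\leq b_v$. There are two ways $a_v$ can fall outside the half-open interval, and we handle them separately.

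\emph{Case $a_v>c$.} Then $c<a_v$, so $c\notin A(v)$ and hence $v\notin N_{c,\ell}$. This case is immediate.

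\emph{Case $a_v\leq w_\ell^-$.} Here we may assume $c\in A(v)$, since otherwise we are done; so $a_v\leq c\leq b_v$. First suppose $\sigma\geq\ell$, so that $w_\ell^-$ is a genuine committee member.
\begin{itemize}
\item By definition, $w^-_1,\dots,w^-_\ell$ are $\ell$ distinct candidates of $\W$.
\item Each lies in $[w_\ell^-,c)\subseteq[a_v,b_v]$: they are at least $w_\ell^-\geq a_v$ and strictly less than $c\leq b_v$.
\item Therefore all of them belong to $A(v)$, giving $|A(v)\cap\W|\geq\ell$, so $v\notin N_{c,\ell}$.
\end{itemize}
This is the only substantive step, and it uses just the ordering of $\W$ together with the interval structure of $A^*(v)$.

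The main obstacle is the boundary convention when $\sigma<\ell$, where $w_\ell^-=0$ by convention. In that case the condition $a_v\leq w_\ell^-$ means $a_v\leq 0$. Since $A^*(v)\subseteq[0,1]$, this forces $a_v=0$, and such a voter is not excluded by any committee member. So the convention must be read so that the interval $(0,c]$ is replaced by $[0,c]$.

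I would handle this by treating $w_\ell^-$ as $-\infty$, or equivalently by reading the interval as $[0,c]$, whenever $\sigma<\ell$. The claim then becomes trivial in this case, since every $v$ with $c\in A(v)$ has $a_v\in[0,c]$. Because $F_a(w_\ell^-)=0$ is still the value used in \Cref{lem:1ddist}, this reading does not affect the later probability bound. The same bound follows if $F_a$ has no atom at $0$.

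I would state this convention explicitly and then combine the two cases to conclude $N_{c,\ell}\subseteq\{v:a_v\in(w_\ell^-,c]\}$.
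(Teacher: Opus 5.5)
Your proof is correct and is essentially the paper's argument in contrapositive form. The paper argues directly that if $c\in A(v)$ and $|A(v)\cap\W|\leq\ell-1$, then the $\ell$ committee members nearest to the left of $c$ cannot all lie in $[a_v,c)$, which forces $a_v>w_\ell^-$; this is exactly your interval-containment step. Your flag on the $\sigma<\ell$ case is also valid. The paper's one-line proof silently needs $w_\ell^-$ to behave like $-\infty$ there: otherwise a voter with $a_v=0$, $b_v=c$ lies in $N_{c,\ell}$ but not in $\{v:a_v\in(0,c]\}$. Your reading is consistent with the paper's use of $\widehat{F}_a(w_\ell^-)=0$ in \Cref{lem:1ddist}.
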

\begin{proof}
If $c\in A(v)$ then $a_v\leq c$.
If, additionally, $|A(v)\cap\W|\leq\ell-1$, then the number of $\W$-candidates approved on the left of $c$ is at most $\ell-1$, so $a_v$ lies strictly to the right of the $\ell$-th $\W$-candidate on the left, i.e.\ $a_v>w_\ell^-$.
\end{proof}
\begingroup
\def\thetheorem{\ref{thm:1dpry}}
\def\theHtheorem{restate-thm-1dpry}
\begin{theorem}[1-D W-selection]
\onedprystatement
\end{theorem}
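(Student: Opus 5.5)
The plan is to combine the Inclusion lemma (\Cref{lem:1dinclusion}) with the Distance bound (\Cref{lem:1ddist}) to control $p_{c,\ell}$. The first two claims are immediate. The rule reads $\widehat{F}_a$ only, so it draws no pool and makes no queries ($n_W=q_W=0$). It is deterministic given $\widehat{F}_a$, so $p_W=0$. For the size bound, {\qsl} with $\qsarg=1/(k+1)$ runs over $r=1,\ldots,\lceil k+1\rceil-1=k$ levels and adds at most one candidate per level, so $|\W|\leq k$.

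For (PW), I will fix $c\in C\setminus\W$ and $\ell\in[k]$. By \Cref{lem:1dinclusion}, every $v\in N_{c,\ell}$ has $a_v\in(w^-_\ell,c]$. Hence
\[
p_{c,\ell}\leq\Pr[w^-_\ell<a_v\leq c]=F_a(c)-F_a(w^-_\ell).
\]
When $\sigma<\ell$ we use the convention $w^-_\ell=0$ with $F_a(0)$ read as $0$ (or the event is just $a_v\leq c$). The consequence clause of \Cref{lem:1ddist} then bounds this by $\ell/(k+1)+2\varepsilon$. Writing $\ell/(k+1)=\ell/k-\ell/(k(k+1))\leq\ell/k-1/(k(k+1))$ for $\ell\geq1$ gives
\[
p_{c,\ell}\leq \ell/k-\bigl(1/(k(k+1))-2\varepsilon\bigr).
\]
This is (PW) with the stated margin $\delta$. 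The hypothesis $\varepsilon<1/(2k(k+1))$ only serves to make $\delta>0$.

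The main obstacle is the boundary convention when $\sigma<\ell$. There $w^-_\ell$ is the artificial point $0$, and the (PF) step inside \Cref{lem:1ddist} compares $F_a(0)$ with $\widehat{F}_a(0)=0$. I would handle this case directly. Here $N_{c,\ell}\subseteq\{a_v\leq c\}$, so $p_{c,\ell}\leq F_a(c)\leq\widehat{F}_a(c)+\varepsilon\leq\ell/(k+1)+\varepsilon$, which is even better. The estimate is applied only at the real candidate $c$, so (PF) is used legitimately. A second subtlety is \Cref{rem:strictmono}: perturbing $\widehat{F}_a$ to make it strictly increasing costs an arbitrarily small $\iota$, and I absorb it by a limiting argument, noting that the bound is closed under $\iota\to0$ since $F_a$ is the true CDF and does not change. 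I will also check that \Cref{lem:1dinclusion} counts only $\W$-candidates strictly left of $c$ that lie in $[a_v,c]$, and that all of these are approved because $b_v\geq c$ whenever $c\in A(v)$. This is exactly what forces $a_v>w^-_\ell$.
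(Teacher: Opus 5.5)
Your proposal is correct and follows the paper's proof: \Cref{lem:1dinclusion} gives $p_{c,\ell}\le F_a(c)-F_a(w^-_\ell)$, \Cref{lem:1ddist} caps this by $\ell/(k+1)+2\varepsilon$, and $\ell/(k+1)\le \ell/k-1/(k(k+1))$ finishes the argument. Your separate treatment of the case $\sigma<\ell$ tightens the paper's convention $w^-_\ell=0$: you bound $p_{c,\ell}\le F_a(c)\le\widehat{F}_a(c)+\varepsilon\le \ell/(k+1)+\varepsilon$, so (PF) is used only at the genuine candidate $c$, whereas the paper's convention tacitly needs $a_v>0$ and evaluates the estimate at a point that need not lie in $C$.
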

\addtocounter{theorem}{-1}
\endgroup
\begin{proof}
By \Cref{lem:1dinclusion}, $p_{c,\ell}\leq\Pr[a_v\in(w_\ell^-,c]]=F_a(c)-F_a(w_\ell^-)$, where the last equality is the definition of $F_a$ and holds for any joint distribution on $(a_v,b_v)$.
\Cref{lem:1ddist} gives the bound.
\end{proof}
\section{Proof of \Cref{prop:approximateapprovals}}
\label{sec:ddimquery}
To prove \Cref{prop:approximateapprovals}, we first prove that the estimated endpoints bracket the true endpoints on each dimension.
\begin{lemma}
\label{lem:boundsonapp}
$\hat{a}^i_v< a^i_v\leq \check{a}^i_v$, and $\check{b}^i_v\leq b^i_v<\hat{b}^i_v$.
\end{lemma}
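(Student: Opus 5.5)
The lemma concerns \texttt{outerbound}, so I plan to prove it by a loop invariant on the binary search. I treat the case $\lhd={>}$ (which produces $\check a^i_v,\hat a^i_v$); the case $\lhd={<}$ is the mirror image.

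First I would identify the monotone predicate being searched. With $\lhd={>}$, $P$ is sorted by coordinate $i$ in decreasing order, and \planar$(P[j],v,i,>)=\mathds{1}[A^*(v)\subseteq\{x:x_i>P[j]_i\}]=\mathds{1}[a^i_v>P[j]_i]$. Because the coordinates decrease along the sorted list, this predicate is monotone in $j$: false on an initial segment (large coordinates, $P[j]_i\geq a^i_v$) and true on the remaining suffix ($P[j]_i<a^i_v$). The sentinels are consistent with this. $P[-1]$ sits at $+\infty$ and is false, and $P[|P|]$ sits at $-\infty$ and is true, exactly as the algorithm stipulates from $A^*(v)\subseteq[0,1]^d$.

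The invariant I would maintain throughout the loop is that $P[x-1]$ is false and $P[y]$ is true. It holds initially by the sentinels. In each step, a true query at $\mu$ sets $y\gets\mu$, and a false query sets $x\gets\mu+1$, so $P[x-1]=P[\mu]$ is false; either way the invariant survives. Since $\mu\in[x,y)$, no sentinel is ever queried. At termination $x=y$, so $P[x-1]$ is false and $P[x]$ is true, and these are the returned $\check a^i_v,\hat a^i_v$ in that order. False means $P[x-1]_i\geq a^i_v$, giving $a^i_v\leq\check a^i_v$. True means $P[x]_i<a^i_v$, giving $\hat a^i_v<a^i_v$. This proves $\hat a^i_v<a^i_v\leq\check a^i_v$.

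For $\lhd={<}$, the list is sorted increasingly and the predicate is $\mathds{1}[b^i_v<P[j]_i]$. The same invariant yields $P[x-1]_i\leq b^i_v$ and $P[x]_i>b^i_v$, that is, $\check b^i_v\leq b^i_v<\hat b^i_v$.

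The main obstacle is bookkeeping rather than mathematics. I need to check that the predicate really is monotone, that the sentinel conventions at $\mp\infty$ match the sort direction, and that ties among equal coordinates do not break monotonicity. They do not, because the predicate depends only on the coordinate value.
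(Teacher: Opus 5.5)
Your proof is correct and follows the same route as the paper. Both derive $\hat a^i_v<a^i_v$ from the true answer at the returned $P[x]$ and $a^i_v\leq\check a^i_v$ from the false answer at $P[x-1]$, handle the $\pm\infty$ sentinels trivially, and treat the $\lhd={<}$ case by symmetry. Your explicit invariant ($P[x-1]$ false, $P[y]$ true, and $\mu\in[x,y)$ so no sentinel is queried) supplies the binary-search justification that the paper leaves implicit; it also shows that monotonicity of the predicate is not actually needed for this lemma.
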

\begin{proof}
We'll show that $\hat{a}^i_v< a^i_v\leq \check{a}^i_v$ and $\check{b}^i_v\leq b^i_v<\hat{b}^i_v$ will follow by symmetry.
First, $\hat{a}^i_v< a^i_v$: if $\hat{a}^i_v$ is the trailing sentinel $-\infty$ this is immediate; otherwise $\hat{a}^i_v$ is a candidate that was outputted, so \planar$(\hat{a}^i_v,v,i,>)$ is true and thus $a^i_v>\hat{a}^i_v$.
Now we will show $a^i_v\leq \check{a}^i_v$.
If $\check{a}^i_v$ is the leading sentinel $+\infty$ this is immediate; otherwise, since $\check{a}^i_v> P[x]$, it must be the case that \planar$(\check{a}^i_v,v,i,>)$ is false, so $a^i_v\not>\check{a}^i_v$, so $a^i_v\leq \check{a}^i_v$.
\end{proof}
Now we can prove \Cref{prop:approximateapprovals}.
\begingroup
\def\theproposition{\ref{prop:approximateapprovals}}
\def\theHproposition{restate-prop-approximateapprovals}
\begin{proposition}
\approxapprovalsstatement
\end{proposition}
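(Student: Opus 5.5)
The plan is to derive everything from \Cref{lem:boundsonapp}, applied coordinatewise, together with the fact that the bracket endpoints returned by \texttt{outerbound} are consecutive elements of $P$ in the sorted order on axis $i$.

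\textbf{Step 1: the inclusion chain.} For $\check{A}(v)\subseteq A(v)$, take $c\in\check{A}(v)$, so $\check{a}^i_v\leq c_i\leq\check{b}^i_v$ on every axis. \Cref{lem:boundsonapp} gives $a^i_v\leq\check{a}^i_v$ and $\check{b}^i_v\leq b^i_v$, hence $c_i\in[a^i_v,b^i_v]$ for all $i$, i.e.\ $c\in A^*(v)\cap C=A(v)$. For $A(v)\subseteq\hat{A}(v)$, take $c\in A(v)$. Then $\hat{a}^i_v<a^i_v\leq c_i\leq b^i_v<\hat{b}^i_v$ by \Cref{lem:boundsonapp}, so $c\in\prod_i(\hat{a}^i_v,\hat{b}^i_v)$.

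\textbf{Step 2: tightness on $P$.} Since $A(v)\cap P\subseteq\hat{A}(v)\cap P$ by Step 1, it remains to take $p\in P\cap\hat{A}(v)$ and show $p\in A(v)$. Fix an axis $i$ and the lower endpoint; the upper one is symmetric. The binary search in \texttt{outerbound}$(v,P,i,>)$ runs over $P$ sorted in decreasing order of coordinate $i$. The predicate \planar$(\cdot,v,i,>)$, i.e.\ ``$a^i_v>p_i$'', is monotone along this order: false for large coordinates, true once the coordinate drops below $a^i_v$. The search therefore returns the boundary pair: $\hat{a}^i_v=P[x]$ is the first element, in decreasing order, with $p_i<a^i_v$, and $\check{a}^i_v=P[x-1]$ is the last element with $p_i\geq a^i_v$.

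Now $p\in P$ with $p_i>\hat{a}^i_v$ must precede position $x$ in decreasing order, so the query on it is false, giving $p_i\geq a^i_v$. Symmetrically, $p_i<\hat{b}^i_v$ gives $p_i\leq b^i_v$. Combining these over all axes puts $p\in A^*(v)$, and hence $p\in A(v)$.

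\textbf{Main obstacle.} The delicate point is Step 2, specifically ties. Several elements of $P$ may share coordinate $i$, so ``$p_i>\hat{a}^i_v$ implies $p$ precedes $\hat{a}^i_v$ in the sort'' must be phrased by value, not by index. I would argue via monotonicity of the predicate in the value $p_i$: any $p$ with $p_i>\hat{a}^i_v$ lies strictly above an element on which the query returned true, so its own truth value is determined only by comparing $p_i$ with $a^i_v$. The boundary property of the binary search, namely that everything at index $<x$ is false, then covers it, since all such $p$ sit at indices $<x$ even under ties. I would also check the sentinel cases ($\hat{a}^i_v=-\infty$ means every query was false) to confirm the argument still goes through.
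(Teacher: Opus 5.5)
Your proposal is correct and follows essentially the same route as the paper: Step 1 applies \Cref{lem:boundsonapp} coordinatewise to get the inclusion chain, and Step 2 uses the fact that \texttt{outerbound} returns the largest $P$-coordinate strictly below $a^i_v$ (and symmetrically the smallest one strictly above $b^i_v$), so no $p\in P$ can satisfy $\hat{a}^i_v<p_i<a^i_v$. The paper phrases this step as a contradiction ($\hat{a}^i_v$ ``would not have been selected''), whereas you argue directly from the binary-search boundary property; your explicit treatment of monotonicity of the predicate, ties, and sentinels fills in details the paper leaves implicit.
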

\addtocounter{theorem}{-1}
\endgroup
\begin{proof}
$\check{A}(v)\subseteq A(v)\subseteq \hat{A}(v)$ follows from applying \Cref{lem:boundsonapp} on each $i\in[d]$.
Now we show that $A(v)\cap P=\hat{A}(v)\cap P$.
Clearly we must have $A(v)\cap P\subseteq \hat{A}(v)\cap P$ and so we need to show that $A(v)\cap P\supseteq\hat{A}(v)\cap P$.
Let $u\in \hat{A}(v)\cap P$, so $u\in P$, meaning we only need to show that $u\in A(v)$.
Suppose to the contrary that $u\not\in A(v)$ so for some $i\in[d]$ we have, WLOG again, $u_i<a^i_v$.
But $u\in \hat{A}(v)$, and so $\hat{a}^i_v<u_i<\hat{b}^i_v$ so $\hat{a}^i_v<u_i<a^i_v$ but then since $u_i<a^i_v$, $\hat{a}^i_v$ would not have been selected by \texttt{outerbound}, a contradiction.
\end{proof}

\section{Proof of \Cref{thm:multidim}}
\label{app:multidim}
We first prove the bracketing-gap bound (\Cref{lem:quantile-gaps}) used throughout the analysis.
\begingroup
\def\thelemma{\ref{lem:quantile-gaps}}
\def\theHlemma{restate-lem-quantile-gaps}
\begin{lemma}
\quantilegapsstatement
\end{lemma}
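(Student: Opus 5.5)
I will prove the left-endpoint bound $F^i_a(c^+_i)-F^i_a(c_i)<\Delta+2\varepsilon$ and obtain the right-endpoint bound by the mirror argument. Set $G\coloneqq 1-\widehat{F}^i_a$, which is non-increasing in the coordinate $c_i$; by \Cref{rem:strictmono} I may take it strictly decreasing. Let $S\coloneqq{\qsl}(C,\Delta,G)\subseteq P_i$. The first step is to apply the coverage lemma (\Cref{lem:lefty}) to $S$ and $c$. This is legitimate since $c\notin P\supseteq S$, so $c\in C\setminus S$. It yields two cases: either some $u\in S$ has $G(u)\leq G(c)<G(u)+\Delta$, or $G(c)<\Delta$.

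In the first case I translate back: $G(u)\leq G(c)$ means $\widehat{F}^i_a(u)\geq\widehat{F}^i_a(c)$. Strict monotonicity forces $u_i\geq c_i$; ties in coordinate give equal $G$-values, which is harmless. Hence $u_i$ is a $P_i$-coordinate at or to the right of $c_i$, so $c^+_i\leq u_i$. Monotonicity of the true CDF then gives $F^i_a(c^+_i)\leq F^i_a(u_i)$. The coverage inequality reads $\widehat{F}^i_a(u)-\widehat{F}^i_a(c)<\Delta$. Applying (PF) at the two candidates $u$ and $c$ gives
\[
F^i_a(c^+_i)-F^i_a(c_i)\leq F^i_a(u_i)-F^i_a(c_i)\leq \widehat{F}^i_a(u)-\widehat{F}^i_a(c)+2\varepsilon<\Delta+2\varepsilon .
\]

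In the second case, $1-\widehat{F}^i_a(c)<\Delta$. If no $P_i$-point lies at or right of $c_i$, then $c^+_i=+\infty$ and the gap is $1-F^i_a(c_i)\leq 1-\widehat{F}^i_a(c)+\varepsilon<\Delta+\varepsilon$. If instead some $P_i$-point lies to the right, then $F^i_a(c^+_i)\leq 1$ still gives the same bound. The right-endpoint statement is symmetric: use ${\qsl}(C,\Delta,\widehat{F}^i_b)$, where coverage now supplies $u$ with $u_i\leq c_i$, hence $c^-_i\geq u_i$. The boundary case $\widehat{F}^i_b(c)<\Delta$ with the $-\infty$ convention gives $F^i_b(c_i)<\Delta+\varepsilon$.

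The main obstacle is the careful direction bookkeeping. Three points need care. First, ${\qsl}$ scans increasing $G$, which for $1-\widehat{F}^i_a$ corresponds to decreasing position, so the covering point lands on the correct side. Second, (PF) is only available at candidate locations, not at arbitrary reals; I avoid needing it at $c^+_i$ by bounding through $u$ via monotonicity of $F^i_a$. Third, the strict inequality must survive the $\iota$-perturbation of \Cref{rem:strictmono}. I will handle this by absorbing $\iota$ into $\varepsilon$ as that remark permits, or by noting that coverage yields strict slack.
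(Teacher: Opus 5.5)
Your proposal is correct and follows essentially the same route as the paper's proof. Both apply Coverage (\Cref{lem:lefty}) to ${\qsl}(C,\Delta,1-\widehat{F}^i_a)$, use strict monotonicity to place the covering point $u$ at or right of $c_i$ so that $c^+_i\leq u_i$, and then chain monotonicity of $F^i_a$ with (PF) at $u$ and $c$; the boundary case $1-\widehat{F}^i_a(c)<\Delta$ is handled via $F^i_a\leq 1$, and the $\widehat{F}^i_b$ bound is symmetric. In that boundary case you are slightly more careful than the paper, which only says ``taking $c^+_i=+\infty$'', whereas you note the bound also holds when $c^+_i$ is finite.
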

\addtocounter{theorem}{-1}
\endgroup
\begin{proof}
We prove the first bound; the second is symmetric.

Recall that $P_i\supseteq{\qsl}(C,\Delta,G)$ for $G\coloneqq 1-\widehat{F}^i_a$, the call that controls upward gaps of $\widehat{F}^i_a$.
By Coverage (\Cref{lem:lefty}), either $G(c)<\Delta$, or there is $u\in P_i$ with $G(u)\leq G(c)<G(u)+\Delta$.

In the first case, $\widehat F^i_a(c_i)>1-\Delta$, and (PF) gives $F^i_a(c_i)>\widehat F^i_a(c_i)-\varepsilon>1-\Delta-\varepsilon$; taking $c^+_i=+\infty$ (so $F^i_a(c^+_i)=1$) gives $F^i_a(c^+_i)-F^i_a(c_i)<\Delta+\varepsilon<\Delta+2\varepsilon$ directly.

In the second case, $\widehat F^i_a(u_i)\geq\widehat F^i_a(c_i)$, and since $\widehat F^i_a$ is strictly increasing (\Cref{rem:strictmono}), $u_i\geq c_i$.
As $c^+_i$ is the smallest $P_i$-coordinate $\geq c_i$ and $u\in P_i$, $c^+_i\leq u_i$, hence $F^i_a(c^+_i)\leq F^i_a(u_i)$.
Passing between $\widehat{F}^i_a$ and $F^i_a$ with (PF) ($|\widehat{F}^i_a-F^i_a|\leq\varepsilon$), we get $
    {F^i_a(c^+_i)\leq F^i_a(u_i)
    \leq\widehat{F}^i_a(u_i)+\varepsilon
    <\widehat{F}^i_a(c_i)+\Delta+\varepsilon
    \leq F^i_a(c_i)+\Delta+2\varepsilon.}
$
The second bound is symmetric, replacing the call on $1-\widehat{F}^i_a$ by the call on $\widehat{F}^i_b$ and the right bracket $c^+_i\geq c_i$ by the left bracket $c^-_i\leq c_i$.
\end{proof}
\begingroup
\def\thetheorem{\ref{thm:multidim}}
\def\theHtheorem{restate-thm-multidim}
\begin{theorem}[Query efficiency]
\multidimstatement
\end{theorem}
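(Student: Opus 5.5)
The plan follows the sketch given in the body and fills in the details. First I split the cost of each voter into two parts. Every voter always pays for one \texttt{resolve} pass over $P$. When \texttt{verify} fails, every voter also pays for a full re-elicitation. By \Cref{prop:queryingqueries}, the expected per-voter cost is therefore at most $2d\lceil\log_2(|P|+1)\rceil+\Pr(\text{fail})\cdot 2d\lceil\log_2(m+1)\rceil$. For the first term, I bound $|P|\leq k+2d/\Delta$, since each of the $2d$ {\qsl} calls selects at most one candidate per level. Using $\Delta\geq\alpha/(2dk^2)$ gives $|P|=\Oh(d^2k^2)$, so the first term is $\Oh(d\log dk)$. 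The whole argument then reduces to showing $\Pr(\text{fail})=\Oh(1/\log m)$. The check $|\W|\leq k$ I take as part of the input guarantee, since both W-selection modules return at most $k$ candidates.

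Next I fix $c\in C\setminus\W$ and $\ell\in[k]$ and bound the probability that a single voter is counted in $s_{\ell,c}$. If $c\in P$, then \Cref{prop:approximateapprovals} makes $\hat A(v)$ exact on $\W\cup\{c\}$, so the voter is counted exactly when $v\in N_{c,\ell}$. That event has probability $p_{c,\ell}\leq\ell/k-\delta$ by (PW). If $c\notin P$, being counted requires $c\in\hat A(v)$ and $|\W\cap\hat A(v)|<\ell$. Since $\W\subseteq P$, the second condition is exactly $|\W\cap A(v)|<\ell$. So the counting event $\Xi$ splits into $N_{c,\ell}$ and the sub-event $Z$ in which $c\in\hat A(v)\setminus A(v)$. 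On $Z$ there is some axis $i$ with $c_i<a^i_v$ or $c_i>b^i_v$, while $\hat a^i_v<c_i<\hat b^i_v$. Take the case $c_i<a^i_v$. The endpoint $\hat a^i_v$ is the largest $P$-coordinate strictly below $a^i_v$, so every point of $P_i$ in $[c_i,a^i_v)$ lies at or below $\hat a^i_v<c_i$. Hence no point of $P_i$ lies in $[c_i,a^i_v)$, and $a^i_v\in(c_i,c^+_i]$. One has to be careful here with the open versus closed endpoints, and with $P^{(i)}$ containing $P_i$. By \Cref{lem:quantile-gaps}, this event has probability $F^i_a(c^+_i)-F^i_a(c_i)<\Delta+2\varepsilon$. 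The case $c_i>b^i_v$ is symmetric and uses $(c^-_i,c_i)$. A union bound over the two sides and the $d$ axes gives $\Pr(Z)<2d(\Delta+2\varepsilon)$. Therefore $\Pr(\Xi)\leq \ell/k-\gamma$ with $\gamma=\delta-2d(\Delta+2\varepsilon)$. Substituting $\Delta=\delta/(4d)-\varepsilon$ gives $\gamma=\delta/2-2d\varepsilon=2d\Delta\geq\alpha/k^2$.

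The voters are i.i.d. and $P$ depends only on $\W$, $\widehat F$ and $C$, not on the voters. So $s_{\ell,c}$ is a sum of $n$ i.i.d. indicators, and \Cref{fact:hoeffding} gives $\Pr[s_{\ell,c}\geq n\ell/k]\leq \exp(-2n\alpha^2/k^4)$. To union-bound over all $(c,\ell)$, I use \Cref{lem:cells}(i): $s_{\ell,c}$ is constant on $P$-equivalence classes, so one representative per class suffices. \Cref{lem:cells}(ii) then bounds the number of events by $\Lambda k$, giving $\Pr(\text{fail})\leq\Lambda k e^{-2n\alpha^2/k^4}$. With $n\geq C_0 (k^4/\alpha^2)\log(\Lambda k\log m)$, which is covered by the hypothesis $n=\Omega(k^4\log(\Lambda\log m))$ up to absorbing $\log k$, this is at most $1/\log m$. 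The fallback term then becomes $\Oh(d\log m/\log m)=\Oh(d)$, and the total is $\Oh(d\log dk)$.

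I expect the main obstacle to be the false-positive step. The difficulty is matching the bracket endpoints returned by \texttt{outerbound} to the brackets $c^\pm_i$ of \Cref{lem:quantile-gaps}. Those brackets are defined using $P_i$ rather than all of $P^{(i)}$, and I need to confirm that the coarser set only enlarges the gap in the right direction. I also need to handle ties, where $c_i$ equals a $P$-coordinate, without losing the strict inequality. My first attempt will be to argue directly from the maximality of $\hat a^i_v$ among queried points as outlined above, using \Cref{rem:strictmono} to dispose of ties.
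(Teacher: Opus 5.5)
\paragraph{Review.} Your outline follows the paper's proof almost step for step. You split $\Xi$ into $N_{c,\ell}$ and the false-positive event $Z$, bound $Z$ axis by axis with \Cref{lem:quantile-gaps}, then apply Hoeffding and a union bound over $P$-cells. Two of your refinements are genuine improvements:
\begin{itemize}
\item You get the half-open event $a^i_v\in(c_i,c^+_i]$, which is exactly what $F^i_a(c^+_i)-F^i_a(c_i)$ measures.
\item You notice that $|\W|\le k$ is an extra hypothesis that (PW) does not imply.
\end{itemize}

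\paragraph{The gap.} The step that fails is ``the case $c_i>b^i_v$ is symmetric''. The paper's proof makes the same move when it writes $b^i_v\in(c^-_i,c_i)$. Redo your own argument on that side:
\begin{itemize}
\item \planar$(p,v,i,<)$ tests $b^i_v<p_i$, so $\hat b^i_v$ is the smallest $P$-coordinate strictly above $b^i_v$.
\item Hence no $P$-coordinate lies in $(b^i_v,c_i]$, which gives $c^-_i\le b^i_v$.
\item So the event is $b^i_v\in[c^-_i,c_i)$, closed on the left.
\end{itemize}
\Cref{lem:quantile-gaps}, however, only controls $F^i_b(c_i)-F^i_b(c^-_i)=\Pr[c^-_i<b^i_v\le c_i]$. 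This misses the atom $\Pr[b^i_v=c^-_i]$. The true mirror image of your $a$-side computation needs the left limit $x\mapsto\Pr[b^i_v<x]$, not $F^i_b$. \Cref{rem:strictmono} does not rescue this. It removes flat stretches of the CDFs handed to \texttt{QuantSel}, not jumps of the true $F^i_b$.

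\paragraph{The atom can be large.} Several candidates may share the coordinate $c^-_i$ on axis $i$, and \texttt{QuantSel} then spends every level inside the jump on them. Concretely, take $d=2$ and let $\mathcal D$ be the point mass on $[0,\tfrac12]\times[0.4,0.6]$, with $\W=\varnothing$. The candidates are:
\begin{itemize}
\item $c=(0.51,0.5)$;
\item at least $1/\Delta$ candidates $(\tfrac12,y_j)$ with $y_j<0.4$;
\item arbitrarily many candidates $(1,z_j)$ with $z_j<0.4$.
\end{itemize}
No candidate is approved. Hence (PW) holds with $\delta=1/(k(k+1))$ (this is PGJCR's output here), and (PF) holds with $\varepsilon=0$. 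This gives $\Delta=1/(8k(k+1))\ge\alpha/(2dk^2)$ with $\alpha=1/4$.

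Now trace the four \texttt{QuantSel} calls:
\begin{itemize}
\item The $\widehat F^1_b$ call spends all its levels on the candidates at $x_1=\tfrac12$.
\item The other three calls have $G(c)<\Delta$ and never select $c$.
\end{itemize}
So $P^{(1)}\subseteq\{\tfrac12,1\}$ and $c\notin P$. Every voter then has $\hat a^1_v=-\infty$ and $\hat b^1_v\ge 1>c_1$, so $c\in\hat A(v)\setminus A(v)$. Thus $s_{1,c}=n\ge n/k$, and verification fails deterministically at cost $\Theta(d\log m)$ per voter. The statement is therefore false as posed, not merely unproven.

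\paragraph{How to repair it.} You need one of two changes:
\begin{itemize}
\item Assume the marginals $F^i_b$ are atomless, as in the Lipschitz setting. Then $\Pr[b^i_v=c^-_i]=0$ and your argument goes through.
\item Or modify the framework so that the $b$-side \texttt{QuantSel} call, and (PF), use an estimate of $x\mapsto\Pr[b^i_v<x]$. With that change, the exact mirror of your $a$-side argument gives $\Pr[c^-_i\le b^i_v<c_i]<\Delta+2\varepsilon$, and the rest of your proof is fine.
\end{itemize}
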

\addtocounter{theorem}{-1}
\endgroup
\begin{proof}
We bound the expected per-voter query count.
The argument has two parts: under (PW) and (PF) the verification test passes with high probability, so a typical voter is asked only the $\Oh(d\log dk)$ queries of one \texttt{resolve} pass over $P$, while the rare fallback contributes a negligible amount in expectation.
We first bound the failure probability, then the query cost of each branch.

Let $\Xi$ be the event that voter $v$ is counted in
$s_{\ell,c}$, i.e.\ $c\in\hat{A}(v)$ and $|\hat{A}(v)\cap\W|<\ell$; the latter
equals $|A(v)\cap\W|<\ell$ since $\W\subseteq P$.
Let $Z$ be the sub-event of $\Xi$
in which $c\notin A(v)$.
Splitting on whether $c\in A(v)$ and using $A(v)\subseteq\hat{A}(v)$ from \Cref{prop:approximateapprovals},
$
    \Pr(\Xi) \leq p_{c,\ell}+\Pr(Z)
    \leq \frac{\ell}{k}-\delta+\Pr(Z),
$
where the last inequality applies (PW) with margin $\delta$ (\Cref{def:PW-1d}).

If $c\in P$ then $\hat{A}(v)\cap P=A(v)\cap P$, so $\Pr(Z)=0$.
For $c\notin P$,
$c\in\hat{A}(v)\setminus A(v)$ requires, for some $i$, either $a^i_v>c_i$ with $\hat{a}^i_v<c_i$
(so $a^i_v\in(c_i,c^+_i)$), or $b^i_v<c_i$ with $\hat{b}^i_v>c_i$ (so $b^i_v\in(c^-_i,c_i)$),
where $c^-_i<c_i<c^+_i$ are the consecutive $P_i$-candidates bracketing $c_i$.
By \Cref{lem:quantile-gaps} (using (PF) with error $\varepsilon$),
\[
    \Pr(a^i_v\in(c_i,c^+_i))=F^i_a(c^+_i)-F^i_a(c_i)<\Delta+2\varepsilon,\qquad
    \Pr(b^i_v\in(c^-_i,c_i))=F^i_b(c_i)-F^i_b(c^-_i)<\Delta+2\varepsilon.
\]
Union bounding over $i\in[d]$, we get $\Pr(Z)\leq 2d(\Delta+2\varepsilon)$.
With the derived spacing $\Delta=\delta/(4d)-\varepsilon$ we have $\gamma\coloneqq\delta-2d(\Delta+2\varepsilon)=\frac{1}{2}(\delta-4d\varepsilon)\geq\alpha/k^2>0$ by the hypothesis.
Since $\Pr(Z)\leq \delta-\gamma$, we have $\Pr(\Xi)\leq\ell/k-\delta+\Pr(Z)\leq\ell/k-\gamma$.
\Cref{fact:hoeffding} gives
$
    \Pr(s_{\ell,c}\geq n\ell/k)\leq\exp(-2n\gamma^2)\leq\exp(-2n\alpha^2/k^4).
$
By \Cref{lem:cells}, $s_{\ell,c}=s_{\ell,c'}$ for all $P$-equivalent $c'$, so a union bound over the $\Oh(\Lambda k)$ cell--level pairs yields $\Pr(\text{verification fails})\leq\Lambda k\exp(-2n\alpha^2/k^4)$.

Since $\Delta=\delta/(4d)-\varepsilon\geq\alpha/(2dk^2)$ by the hypothesis, each $P_i$ contains at most $2/\Delta\leq 4dk^2/\alpha=O(dk^2)$ candidates from the two {\qsl} calls,
so $|P|\leq k+\Oh(d^2k^2)=\Oh(d^2k^2)$.
By \Cref{prop:queryingqueries},
querying uses $\Oh(d\log|P|)=\Oh(d\log dk)$ queries per voter, with an extra
$\Oh(d\log m)$ if verification fails.
When $n=\Omega(\alpha^{-2}k^4\log (\Lambda k\log m))=\Omega(\alpha^{-2}k^4\log(\Lambda\log m))$, we have $\Lambda k\exp(-2n\alpha^2/k^4)\leq\Lambda k\exp(-\log(\Lambda k\log m))=1/\log m$, so verification fails with probability at most $\Oh(1/\log m)$, and the expected per-voter query count is $\Oh(d\log dk)+\Lambda k\exp(-2n\alpha^2/k^4)\cdot\Oh(d\log m)=\Oh(d\log dk)+\Oh(d)=\Oh(d\log dk)$.
\end{proof}

\section{The batch estimator: full proof of \Cref{lem:logk-block}}
\label{app:logk}

\begingroup
\def\thelemma{\ref{lem:logk-block}}
\def\theHlemma{restate-lem-logk-block}
\begin{lemma}
\logkblockstatement
\end{lemma}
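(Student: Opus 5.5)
The plan is to reduce the running-maximum estimate to the per-batch empirical CDFs, which \Cref{fact:dkw} controls uniformly, and then to check that taking the running maximum cannot push any value outside the $\varepsilon$-band around $F^i_a$.

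\textbf{Step 1: concentration for each batch.} For a fixed axis $i$ and batch $B$, the function $x\mapsto\frac{1}{h_2}|\{v\in V_B:a^i_v\leq x\}|$ is the empirical CDF of $h_2$ i.i.d.\ draws of $a^i_v$. This holds because the pool $V_B$ consists of fresh voters drawn i.i.d.\ from $\mathcal{D}$. The per-batch values $\widehat{F}^i_{a,B}(c)$ for $c\in B$ are exactly this function evaluated at $c_i$. This uses that \texttt{resolve} on $B$ decides $\mathds 1[a^i_v\leq c_i]$ exactly for $c\in B$. Indeed, $\hat a^i_v<a^i_v\leq\check a^i_v$ holds by \Cref{lem:boundsonapp}, and both brackets are consecutive points of $B$ sorted on axis $i$, so no $c\in B$ has $c_i$ strictly between them.

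By \Cref{fact:dkw}, $\sup_x|\widehat{F}^i_{a,B}(x)-F^i_a(x)|\leq\varepsilon$ fails with probability at most $2e^{-2h_2\varepsilon^2}$. A union bound over the at most $m/\beta$ batches shows that the good event $\mathcal G$, on which every batch is uniformly $\varepsilon$-accurate, fails with probability at most $\frac{2m}{\beta}e^{-2h_2\varepsilon^2}$.

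\textbf{Step 2: the running maximum stays within $\varepsilon$.} On $\mathcal G$, fix $c\in C$. For the lower bound, the maximum defining $\widehat{F}^i_a(c)$ includes $c'=c$, so $\widehat{F}^i_a(c)\geq\widehat{F}^i_{a,B(c)}(c)\geq F^i_a(c_i)-\varepsilon$. For the upper bound, every $c'$ in the maximum has $c'_i\leq c_i$, and $F^i_a$ is monotone, so
\[
\widehat{F}^i_{a,B(c')}(c')\leq F^i_a(c'_i)+\varepsilon\leq F^i_a(c_i)+\varepsilon .
\]
Hence $|\widehat{F}^i_a(c)-F^i_a(c)|\leq\varepsilon$ for all $c$ on $\mathcal G$, which gives the stated probability bound. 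The argument for $\widehat{F}^i_b$ is identical, with $b^i_v$ and the bracket $(\check b^i_v,\hat b^i_v)$ in place of $a^i_v$ and $(\hat a^i_v,\check a^i_v)$.

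\textbf{Main obstacle.} The delicate point is Step 1's claim that querying only on $B$ recovers $\mathds 1[a^i_v\leq c_i]$ exactly. This needs the inner and outer brackets from \texttt{outerbound} to be adjacent in the sorted order of $B$. Ties in coordinates also need care; I would handle them by noting that equal coordinates give equal indicator values.
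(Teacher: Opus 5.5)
Your proposal is correct and follows the paper's proof essentially step for step. You recover $\mathds 1[a^i_v\le c_i]$ exactly on each batch from the adjacent brackets of \Cref{lem:boundsonapp}, apply DKW per batch with a union bound over the $m/\beta$ batches, and check both sides of the running-maximum bound on a good event that ranges over all batches, so that the maximiser's batch is also accurate (your handling of ties is also fine, since equal coordinates receive identical \planar\ answers and the brackets are adjacent in the sorted order regardless).
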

\addtocounter{theorem}{-1}
\endgroup
\begin{proof}
    Fix a dimension $i$, and consider estimating the CDF of the left endpoint $a_i$; the right endpoint $b_i$ is analogous. Let $F$ be the true CDF of $a_i$.
    Each candidate is assigned to a unique batch, that batch $B$ is served by a fresh pool $V_B$ of $h_2$ i.i.d.\ voters resolved only on $B$. We then estimate a CDF of $B$ based on $V_B$: for $c\in B$, $\hat{F}_B(c)=\frac{1}{h_2}\left|\{v\in V_B:a^i_v\leq c_i\}\right|$; since every $v\in V_B$ was resolved over $B$, \Cref{lem:boundsonapp} gives endpoints $\hat a^i_v,\check a^i_v$ that are consecutive $B$-candidates (or sentinels) with $\hat a^i_v<a^i_v\leq\check a^i_v$, so no $B$-candidate lies strictly between them; hence every $c\in B$ has either $c_i\leq\hat a^i_v$ (so $a^i_v>c_i$) or $c_i\geq\check a^i_v$ (so $a^i_v\leq c_i$), and we know precisely whether $a^i_v\leq c_i$.

    Consider now $F^*_B$, a (hypothetical) CDF estimate constructed by observing $a^i_v$ directly for every $v\in V_B$. Then $F^*_B(c)=\hat{F}_B(c)$ for every $c\in B$. The DKW inequality (\Cref{fact:dkw}) states \[
    \Pr\left[\max_{c\in B}\left|F^*_{B}(c)-F(c)\right|\geq\varepsilon\right]
    \leq\Pr\left[\sup_{x\in\mathbb{R}}\left|F^*_{B}(c)-F(c)\right|\geq\varepsilon\right]
    \leq 2\exp(-2h_2\varepsilon^2),
\]
and since $F^*_B(c)=\hat{F}_B(c)$ for every $c\in B$, we have $\Pr\left[\max_{c\in B}\left|\hat{F}_{B}(c)-F(c)\right|\geq\varepsilon\right]\leq 2\exp(-2h_2\varepsilon^2)$.

Define the \emph{good event}
\[
    \mathcal{E}:=\bigcap_{B}\left\{\max_{c\in B}\left|\widehat{F}_{B}(c)-F(c)\right|\leq\varepsilon\right\}.
\]
By Boole's inequality over the $m/\beta$ batches (independence of the pools is not needed),
\[
    \Pr[\mathcal{E}^c]\leq\frac{m}{\beta}\cdot 2\exp(-2h_2\varepsilon^2)=\frac{2m}{\beta}\exp(-2h_2\varepsilon^2).
\]

We now show that when we construct the estimate $\hat{F}(c)=\max_{c'\in C,c'_i\leq c_i}\hat{F}_{B(c')}(c')$ over all blocks, where $B(c)$ is the block containing $c$, the estimation property is preserved.

Condition on $\mathcal{E}$, so $|\widehat{F}_{B(c')}(c')-F(c')|\leq\varepsilon$ for every candidate $c'$.
Fix a candidate $c$ and let $c^*\in C$, $c^*_i\leq c_i$, attain the maximum defining $\widehat{F}(c)$, so $\widehat{F}(c)=\widehat{F}_{B(c^*)}(c^*)$.

\emph{Lower bound.} The maximand includes $c'=c$, so $\widehat{F}(c)\geq\widehat{F}_{B(c)}(c)\geq F(c)-\varepsilon$.

\emph{Upper bound.} At $c^*$, whose batch $B(c^*)$ is also good (this is exactly why $\mathcal{E}$ intersects over \emph{all} batches, since $c^*$ may lie in a different batch from $c$) we have $\widehat{F}_{B(c^*)}(c^*)\leq F(c^*)+\varepsilon$; and $c^*_i\leq c_i$ with $F$ non-decreasing gives $F(c^*)\leq F(c)$.
Hence $\widehat{F}(c)\leq F(c)+\varepsilon$.

Combining the two bounds, $|\widehat{F}(c)-F(c)|\leq\varepsilon$ for every $c\in C$.
Moreover $\widehat{F}$ is non-decreasing in $c_i$ and $[0,1]$-valued by construction, hence a valid CDF estimate.

So, on $\mathcal{E}$ we have $\max_{c\in C}|\widehat{F}(c)-F(c)|\leq\varepsilon$, so
\[
    \Pr\left[\max_{c\in C}\left|\widehat{F}(c)-F(c)\right|>\varepsilon\right]\leq\Pr[\mathcal{E}^c]\leq\frac{2m}{\beta}\exp(-2h_2\varepsilon^2),
\]
and the identical argument bounds $\widehat{F}^i_b$.
This proves \Cref{lem:logk-block}; \Cref{thm:logk-estimator} then follows by the union bound over the $d$ dimensions and both endpoints and the query-cost accounting given after the lemma in \Cref{sec:festimation}.
\end{proof}

%%%%%%%%%%%%%%%%%%%%%%%%%%%%%%%%%%%%%%%%%%%%%%%%%%%%%
\section{Proofs from \Cref{sec:lipschitz}}
\label{app:lipschitz}
\begingroup
\def\thelemma{\ref{lem:lipschitz-cdf}}
\def\theHlemma{restate-lem-lipschitz-cdf}
\begin{lemma}
\lipcdfstatement
\end{lemma}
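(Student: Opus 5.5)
The plan is to write the CDF difference as the $\mathcal{D}$-probability of a slab in endpoint space, and then bound that probability by integrating the density bound $f\leq K$ over the slab.

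Fix an axis $i$ and assume without loss of generality that $x\leq y$. Clamp both points to $[0,1]$. This is harmless: $F^i_a$ is constant ($0$ or $1$) outside $[0,1]$, because all endpoints lie in $[0,1]$, and clamping can only shrink $|x-y|$. Then
\[
F^i_a(y)-F^i_a(x)=\Pr[x<a^i_v\leq y]=\int_{\mathcal{R}}\mathds{1}[x<a_i\leq y]\,f(\underline a,\underline b)\,d(\underline a,\underline b).
\]
This identity uses absolute continuity from \Cref{def:lipschitz}.

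Next, bound the integrand by $K\cdot\mathds{1}[x<a_i\leq y]\cdot\mathds{1}[(\underline a,\underline b)\in\mathcal{R}]$. Enlarging the domain to the full cube $[0,1]^{2d}$ gives the upper bound
\[
K\cdot\mathrm{vol}\{(\underline a,\underline b)\in[0,1]^{2d}:x<a_i\leq y\}.
\]
By Fubini, this volume equals $(y-x)\cdot 1^{2d-1}=|x-y|$. Therefore $|F^i_a(x)-F^i_a(y)|\leq K|x-y|$. The argument for $F^i_b$ is identical, with the slab $\{x<b_i\leq y\}$ in place of $\{x<a_i\leq y\}$.

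There is no real obstacle here. The only points needing care are bookkeeping: the endpoint space $\mathcal{R}$ is a proper subset of the cube, so we must bound its volume from above by the cube's rather than compute it exactly; and inputs outside $[0,1]$ must be handled, which the clamping step does.
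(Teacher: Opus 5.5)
Your proof is correct and follows the paper's argument essentially verbatim: write $F^i_a(y)-F^i_a(x)$ as the $\mathcal{D}$-probability of the slab $\{x<a_i\leq y\}$, then bound it by $K$ times the slab's Lebesgue measure, which is at most $|x-y|$ because the other $2d-1$ coordinates range over $[0,1]$. Your clamping step for arguments outside $[0,1]$ is a small bookkeeping addition that the paper leaves implicit by restricting to $0\leq y\leq x\leq 1$; it relies on $\Pr[a^i_v=0]=0$, which absolute continuity guarantees.
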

\addtocounter{theorem}{-1}
\endgroup
\begin{proof}
    Fix $i\in[d]$; we treat $a^i_v$, the argument for $b^i_v$ being identical.
As $\mathcal{D}$ has density $f\leq K$, every event has probability at most $K$ times its Lebesgue measure.
For $0\leq y\leq x\leq 1$, the event $\{y< a^i_v\leq x\}$ is the slab $\{(a_1,b_1,\dots,a_d,b_d)\in\mathcal{R}: y< a_i\leq x\}$, whose measure is at most $x-y$ (one coordinate confined to an interval of length $x-y$, the rest to $[0,1]$).
Hence
    $
        F^i_a(x)-F^i_a(y)=\Pr[y< a^i_v\leq x]\leq K(x-y),
    $
    and as $F^i_a$ is non-decreasing, $|F^i_a(x)-F^i_a(y)|\leq K|x-y|$.
\end{proof}

\begingroup
\def\thelemma{\ref{lem:lip-stability}}
\def\theHlemma{restate-lem-lip-stability}
\begin{lemma}[Lipschitz stability of cohesive mass]
\lipstabilitystatement
\end{lemma}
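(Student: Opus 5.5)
The plan is to move from $c$ to $c'$ one coordinate at a time and to bound each single-coordinate move by $K|c_i-c'_i|$. The constant $K$, rather than $2K$, comes from not bounding the two halves of a symmetric difference separately.

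Fix $W\subseteq C$ and $\ell\in[k]$, and let $J=\{|A(v)\cap W|<\ell\}$. This event depends only on the voter's rectangle and on $W$, not on the candidate position being tested. For a point $x\in[0,1]^d$, let $E_x=\{x\in A^*(v)\}=\bigcap_{i}\{a^i_v\le x_i\le b^i_v\}$, so that $p_{x,\ell}=\Pr[E_x\cap J]$.

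Define hybrid points $c=x^{(0)},x^{(1)},\dots,x^{(d)}=c'$, where $x^{(j)}$ agrees with $c'$ on the first $j$ coordinates and with $c$ on the rest. By the triangle inequality,
\[
|p_{c,\ell}-p_{c',\ell}|\le\sum_{j=1}^d |p_{x^{(j-1)},\ell}-p_{x^{(j)},\ell}|,
\]
so it suffices to show $|p_{x,\ell}-p_{y,\ell}|\le K|x_i-y_i|$ whenever $x$ and $y$ differ only in coordinate $i$.

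For such a pair, write
\[
p_{x,\ell}-p_{y,\ell}=\Pr[(E_x\setminus E_y)\cap J]-\Pr[(E_y\setminus E_x)\cap J].
\]
This is a difference of two nonnegative numbers, so its absolute value is at most the larger of the two, and hence at most $\max\{\Pr[E_x\setminus E_y],\Pr[E_y\setminus E_x]\}$. This is the key point: a naive symmetric-difference bound would add the two terms and lose a factor of $2$.

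Assume without loss of generality that $x_i<y_i$. Since all other coordinates agree, $E_x\setminus E_y$ forces $x_i\le b^i_v<y_i$, so $b^i_v\in[x_i,y_i)$. Similarly, $E_y\setminus E_x$ forces $x_i<a^i_v\le y_i$.

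Each of these events is contained in a slab of the endpoint space $\mathcal R$ in which one coordinate is confined to an interval of length $y_i-x_i$. This is exactly the estimate in the proof of \Cref{lem:lipschitz-cdf}: the density is at most $K$ and the slab has Lebesgue measure at most $y_i-x_i$. The half-open versus closed endpoints do not matter, since $\mathcal D$ is absolutely continuous and single points have zero mass. So each probability is at most $K(y_i-x_i)$. Combining this with the max bound gives $|p_{x,\ell}-p_{y,\ell}|\le K|x_i-y_i|$, and summing over the $d$ hybrid steps yields $K\|c-c'\|_1$.

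The main obstacle is the max-versus-sum step. The argument must keep the two one-sided differences apart and compare them against each other, rather than bounding $|\mathds 1[E_x]-\mathds 1[E_y]|$ directly. The argument must also treat each step as a single-coordinate move, so that only one endpoint variable is involved in each one-sided event. Nothing uses $c,c'\notin W$, so the bound holds for all positions in $[0,1]^d$, as the statement requires.
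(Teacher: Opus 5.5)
Your proof is correct and is essentially the paper's argument. The paper likewise isolates the position-independent event $J$. It bounds the one-sided difference $p_{c,\ell}-p_{c',\ell}$ by $\Pr[\{c\in A(v)\}\setminus\{c'\in A(v)\}]$ and then invokes symmetry, which is exactly your max step. It localises the failure on each axis to $a^i_v\in(c'_i,c_i]$ or $b^i_v\in[c_i,c'_i)$, each of mass at most $K|c_i-c'_i|$ by \Cref{lem:lipschitz-cdf}.

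The only difference is that the paper skips your hybrid points and union-bounds the multi-coordinate set difference directly over the axes. Each per-axis event there already involves a single endpoint variable, so the single-coordinate decomposition you describe as necessary is a convenience, not a requirement.
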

\addtocounter{theorem}{-1}
\endgroup
\begin{proof}
Fix $W,\ell$ and let $J=\{|A(v)\cap W|<\ell\}$, an event that does not depend on the point being tested.
Then $p_{c,\ell}=\Pr[\{c\in A(v)\}\cap J]$, so
\[
    p_{c,\ell}-p_{c',\ell}
    =\Pr[\{c\in A(v)\}\cap J]-\Pr[\{c'\in A(v)\}\cap J]
    \leq\Pr\left[\{c\in A(v)\}\setminus\{c'\in A(v)\}\right].
\]
Recall $c\in A(v)\iff a^i_v\leq c_i\leq b^i_v$ for all $i\in[d]$.
If $c\in A(v)$ but $c'\notin A(v)$, some coordinate $i$ witnesses the failure, so $\{c\in A(v)\}\setminus\{c'\in A(v)\}\subseteq\bigcup_{i\in[d]}\mathcal{E}_i$ with $\mathcal{E}_i=\{a^i_v\leq c_i\leq b^i_v\}\cap\{c'_i\notin[a^i_v,b^i_v]\}$.
We bound $\Pr[\mathcal{E}_i]$ by cases on coordinate $i$.
If $c'_i<c_i$: from $c'_i<c_i\leq b^i_v$ the only way to have $c'_i\notin[a^i_v,b^i_v]$ is $c'_i<a^i_v$, i.e.\ $a^i_v\in(c'_i,c_i]$, so $\Pr[\mathcal{E}_i]\leq F^i_a(c_i)-F^i_a(c'_i)\leq K|c_i-c'_i|$ by \Cref{lem:lipschitz-cdf}.
If $c'_i>c_i$: symmetrically $b^i_v\in[c_i,c'_i)$ and $\Pr[\mathcal{E}_i]\leq K|c_i-c'_i|$.
If $c'_i=c_i$ then $\mathcal{E}_i=\varnothing$.
A union bound gives $p_{c,\ell}-p_{c',\ell}\leq K\sum_{i\in[d]}|c_i-c'_i|=K\|c-c'\|_1$, and the statement follows by symmetry in $c,c'$.
\end{proof}
\begingroup
\def\thetheorem{\ref{thm:netngjcr}}
\def\theHtheorem{restate-thm-netngjcr}
\begin{theorem}[Net-NGJCR]
\netngjcrstatement
\end{theorem}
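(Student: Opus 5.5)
The plan is to follow the NGJCR analysis (\Cref{thm:ngjcr}) and add one deterministic error term from the net, using \Cref{cor:net-transfer}. Fix a level $\ell$ and the committee $W$ current at any iteration. Call $c\in C\setminus W$ \emph{tiny} if $p^W_{c,\ell}\leq\ell/(k+1)$ and \emph{large} if $p^W_{c,\ell}\geq\ell/k-\delta_1$. Set $\tau=\alpha/(4k^2)$ and $\Gamma\coloneqq\ell/(2k(k+1))-\delta_1/2$, the distance from $q^*$ to each of the two bounds. The hypothesis $\delta_1\leq 1/(k(k+1))-\alpha/k^2$ gives $\Gamma\geq\alpha/(2k^2)=2\tau$.

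\textbf{Step 1: the good event.} For every net point $c'\in\mathcal N$ and every (iteration, level) at which it is sampled, we want $|\zeta_{c'}/h_1-p^W_{c',\ell}|<\tau$. The $h_1$ voters in the pool of $c'$'s batch are fresh, and they are independent of the current $W$, which depends only on earlier pools. Hence \Cref{fact:hoeffding} applies conditionally on $W$ and gives failure probability $2\exp(-2h_1\tau^2)=2\exp(-h_1\alpha^2/(8k^4))$. With $h_1=\lceil(8k^4/\alpha^2)\log(4k|\mathcal N|/p_W)\rceil$ this is at most $p_W/(2k|\mathcal N|)$.

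\textbf{Step 2: the union bound.} I expect this step to be the main obstacle. Within one level the committee changes up to $k$ times, so there may be several samplings per level. I would argue, as in the proof of \Cref{thm:ngjcr}, that it suffices to control the relevant (net point, level) checks, at most $k|\mathcal N|$ of them. If more care is needed, I would note that the total number of sampling rounds is at most $2k$ (at most $k$ additions plus $k$ level decrements), and absorb the resulting factor into the logarithm. That changes $h_1$ only by a constant factor inside $\Theta(\cdot)$, though if it is not absorbed the constant in $h_1$ would need adjusting.

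\textbf{Step 3: on the good event.} For any $c\in C\setminus W$, \Cref{cor:net-transfer} with $\rho=\alpha/(4Kdk^2)$ gives $|p^W_{c,\ell}-p^W_{\mathrm{rep}(c),\ell}|\leq Kd\rho=\tau$. Combining with Step 1, $|\zeta_{\mathrm{rep}(c)}/h_1-p^W_{c,\ell}|<2\tau\leq\Gamma$. Therefore a tiny $c$ never passes the test, and a large $c$ always passes it.

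\textbf{Step 4: (PW) and the size bound.} The algorithm lowers $\ell$ only when no candidate passes. At that moment no candidate outside $W$ is large at level $\ell$. Because $p^W_{c,\ell}$ is non-increasing as $W$ grows, this persists to the final committee, giving (PW) with margin $\delta_1$. Every added candidate has true mass $>\ell/(k+1)$ at its selection level, since it is not tiny. So the budget argument of \Cref{thm:pgjcr} applies verbatim and gives $|W|\leq k$.

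\textbf{Step 5: costs.} Each pooled voter is resolved on $W\cup\mathcal N_t$, of size at most $k+\beta\leq 2\beta$. By \Cref{prop:queryingqueries} this costs $\Oh(d\log\beta)$ queries. There are at most $2k$ rounds, each using $\lceil|\mathcal N|/\beta\rceil$ batches of $h_1$ voters, so $n_W=\Theta(kh_1\lceil|\mathcal N|/\beta\rceil)$. For a single batch, $\log|\mathcal N|\leq d\log\lceil4Kdk^2/\alpha\rceil=\Oh(d\log(Kdk))$. This gives $q_W=\Oh(d^2\log(Kdk))$ and $n_W=\Oh(k\cdot k^4(d\log(Kdk)+\log(1/p_W)))$, using $\log k$ absorbed into $d\log(Kdk)$.
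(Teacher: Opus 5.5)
Your route is the paper's. You use the same $\tau=\alpha/(4k^2)$ with $\Gamma\geq 2\tau$, and Hoeffding on fresh per-batch pools conditional on the current $W$. The net-transfer error $Kd\rho=\tau$ keeps the total $2\tau$ error from admitting a tiny candidate or missing a large one. Monotonicity of $p^W_{c,\ell}$ gives (PW), the budget argument of \Cref{thm:pgjcr} gives $|W|\leq k$, and the cost accounting matches.

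There is one genuine gap, in Step~2, which you rightly flag. Your first option, a union over $k|\mathcal N|$ (net point, level) pairs, does not cover the process. A single level is resampled after every addition, with a different committee and a fresh pool. Your second option bounds the number of sampling rounds by $2k$ via ``at most $k$ additions''. But that bound is exactly what Step~4 proves \emph{on} the good event, whose probability you are still bounding, so the argument is circular. Deterministically the loop can add up to $|C|$ candidates, so there can be up to $m+k$ rounds. A union bound over that many rounds puts $\log m$ into $h_1$, which destroys the $m$-free pool that the theorem and \Cref{cor:lipschitz} rely on.

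The paper removes the circularity by truncation:
\begin{itemize}
\item Define $\mathcal E$ as the event that every net estimate in each of the \emph{first $2k$ rounds that occur} is within $\tau$.
\item This is a deterministic number of checks, so freshness and \Cref{fact:hoeffding} give $\Pr[\mathcal E^c]\leq 2k|\mathcal N|\cdot 2e^{-2h_1\tau^2}\leq p_W$. Nothing here presupposes a bound on rounds or on $|W|$.
\item On $\mathcal E$, every addition among those first $2k$ rounds is non-tiny, so by the budget argument at most $k$ of them are additions.
\item At most $k$ rounds are decrements, so the loop halts by round $2k$ and only ever reads pools covered by $\mathcal E$.
\end{itemize}
Your own per-check failure bound $p_W/(2k|\mathcal N|)$, multiplied by the $2k|\mathcal N|$ checks, gives exactly $p_W$. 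So the stated $h_1$ needs no constant adjustment.
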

\addtocounter{theorem}{-1}
\endgroup
\begin{proof}
Write $\tau=\alpha/(4k^2)$.
By the choice $\rho=\alpha/(4Kdk^2)$ we have $Kd\rho=\alpha/(4k^2)=\tau$, so \Cref{cor:net-transfer} reads $|p_{c,\ell}-p_{\mathrm{rep}(c),\ell}|\leq\tau$ for all $W,\ell,c$.

\emph{Geometry of the threshold.} For any $\ell\in[k]$, $q^*$ is the midpoint of $\ell/(k+1)$ and $\ell/k-\delta_1$, so
\[
    \Gamma\coloneqq q^*-\frac{\ell}{k+1}=\left(\frac{\ell}{k}-\delta_1\right)-q^*=\frac{1}{2}\left(\frac{\ell}{k(k+1)}-\delta_1\right)\geq\frac{1}{2}\left(\frac{1}{k(k+1)}-\delta_1\right)\geq\frac{\alpha}{2k^2},
\]
using $\ell\geq1$ and $\delta_1\leq 1/(k(k+1))-\alpha/k^2$.
Hence $\Gamma\geq 2\tau$.

\emph{Termination and the good event.} Each round either adds a candidate from $C\setminus W$ (a distinct candidate each time, from the finite set $C$) or decrements $\ell$, so the loop halts after finitely many rounds and returns a committee whatever the voters answer; we have not yet bounded the round count.
Fix a round with committee $W$ and level $\ell$ and a net point $c'\in\mathcal{N}$; its count $\zeta_{c'}$ is read from the \emph{fresh} pool of the batch $\mathcal{N}_{t(c')}$ containing it, a sum of $h_1$ i.i.d.\ Bernoulli$(p_{c',\ell})$ variables (each pooled voter is resolved on $W\cup \mathcal{N}_{t(c')}\ni c'$, so $\mathds{1}[c'\in A(v)]$ and $|A(v)\cap W|$ are determined).
Because this pool is fresh (independent of all earlier rounds), the bound holds whatever committee $W$ and level $\ell$ the round carries, so \Cref{fact:hoeffding} gives $\Pr[\,|\zeta_{c'}/h_1-p_{c',\ell}|\geq\tau\,]\leq 2e^{-2h_1\tau^2}$.
Let $\mathcal{E}$ be the event that every net estimate in each of the first $2k$ rounds that occur is within $\tau$.
Summing this bound over the at most $2k$ such rounds and $|\mathcal{N}|$ net points, and using $h_1\geq\log(4k|\mathcal{N}|/p_W)/(2\tau^2)$, $\Pr[\mathcal{E}^{c}]\leq 4k|\mathcal{N}|e^{-2h_1\tau^2}\leq p_W$.
This bound uses only freshness and \Cref{fact:hoeffding}; it presupposes no bound on the number of rounds or on $|W|$.
Condition on $\mathcal{E}$ for the remainder; we show next that on $\mathcal{E}$ the loop stops within $2k$ rounds, so it only ever reads the accurate pools just counted.

\emph{Size $\leq k$ and the round count.} A candidate $c$ is added only in a round where $\zeta_{\mathrm{rep}(c)}/h_1\geq q^*$; write $W_c$ for the committee at that moment.
On $\mathcal{E}$ such a round, being among the first $2k$, is accurate, so $p^{W_c}_{\mathrm{rep}(c),\ell}>q^*-\tau$, and by \Cref{cor:net-transfer}, $p^{W_c}_{c,\ell}>q^*-\tau-Kd\rho=q^*-2\tau\geq q^*-\Gamma=\ell/(k+1)$.
Thus every candidate added in the first $2k$ rounds has $p^{W_c}_{c,\ell}>\ell/(k+1)$.
The budget argument of \Cref{thm:pgjcr} bounds the number of selections with this own-mass-at-selection property by $k$..
Hence at most $k$ of the first $2k$ rounds are adds; as at most $k$ rounds decrement $\ell$ (it falls from $k$ to $0$), the loop must exit at or before round $2k$.
So the whole execution reads only the accurate pools of these $\leq 2k$ rounds, and the returned committee has size $\leq k$.

\emph{(PW) with margin $\delta_1$.} Let $W^\ast$ be the returned committee.
Fix $\ell\in[k]$ and $c\in C\setminus W^\ast$.
The loop left level $\ell$ at the round that decremented $\ell$; let $W_\ell$ be the committee then, and $p^{W_\ell}_{c,\ell}=\Pr[c\in A(v),|A(v)\cap W_\ell|<\ell]$.
That decrement happened precisely because no candidate in $C\setminus W_\ell$ was selectable, so in particular $\zeta_{\mathrm{rep}(c)}/h_1<q^*$ (note $c\in C\setminus W^\ast\subseteq C\setminus W_\ell$ since $W_\ell\subseteq W^\ast$).
On $\mathcal{E}$, $p^{W_\ell}_{\mathrm{rep}(c),\ell}<q^*+\tau$, and by \Cref{cor:net-transfer},
\[
    p^{W_\ell}_{c,\ell}<q^*+\tau+Kd\rho=q^*+2\tau\leq q^*+\Gamma=\frac{\ell}{k}-\delta_1 .
\]
Finally $p_{c,\ell}$ is non-increasing under enlarging $W$ (adding a candidate can only enlarge $|A(v)\cap W|$, shrinking the event $\{|A(v)\cap W|<\ell\}$), and $W_\ell\subseteq W^\ast$, so $p^{W^\ast}_{c,\ell}\leq p^{W_\ell}_{c,\ell}<\ell/k-\delta_1$.
As $c,\ell$ were arbitrary, $W^\ast$ satisfies (PW) with margin $\delta_1$.

\emph{Queries.} Each pooled voter is resolved once, on $W\cup \mathcal{N}_{t}$, a set of size at most $k+\beta$; by \Cref{prop:queryingqueries} this is $\Oh(d\log(k+\beta))=\Oh(d\log\beta)$ \planar\ queries per voter (using $\beta\geq k$).
Each round serves $\lceil|\mathcal{N}|/\beta\rceil$ batches of $h_1$ fresh voters, and there are at most $2k$ rounds, so the pool used is at most $2k\,h_1\lceil|\mathcal{N}|/\beta\rceil$ and the total is $\Oh\left(k\,h_1\,(|\mathcal{N}|/\beta)\,d\log\beta\right)$ queries.
With a single batch $\beta=|\mathcal{N}|$ and $\alpha$ constant, $q_W=\Oh(d\log|\mathcal{N}|)=\Oh(d^2\log(Kdk))$; and substituting $|\mathcal{N}|\leq\lceil 4Kdk^2/\alpha\rceil^d$ into $h_1$, so that $\log(4k|\mathcal{N}|/p_W)=\Oh(d\log(Kdk)+\log(1/p_W))$, gives $n_W=\Theta(k\,h_1)=\Oh\left(k^5\left(d\log(Kdk)+\log(1/p_W)\right)\right)$.
\end{proof}

\begingroup
\def\thetheorem{\ref{thm:lipschitz-estimator}}
\def\theHtheorem{restate-thm-lipschitz-estimator}
\begin{theorem}
\lipestimatorstatement
\end{theorem}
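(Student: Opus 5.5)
The plan is to split the error at each candidate into an \emph{estimation} part on the grid $T$ and an \emph{interpolation} part between grid points, each worth $\varepsilon/2$. The grid will be stitched together by the running maximum, exactly as in \Cref{lem:logk-block}.

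First I fix the construction on axis $i$. Partition $T$ into $\lceil|T|/\beta\rceil$ batches, each served by a fresh pool $V_B$ of $h_2$ voters resolved on $B$. As in the proof of \Cref{lem:logk-block}, \Cref{lem:boundsonapp} shows that for every $t\in B$ we learn exactly whether $a^i_v\leq t_i$. Hence $\widehat F^i_{a,B}(t)$ coincides on $B$ with a genuine empirical CDF of $h_2$ i.i.d.\ samples. For any $c\in C$ I define
\[
\widehat F^i_a(c)\coloneqq\max\{\widehat F^i_{a,B(t)}(t):t\in T,\ t_i\leq c_i\},
\]
with the value $0$ if no such $t$ exists, and $\widehat F^i_b$ is defined symmetrically. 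Applying \Cref{fact:dkw} with accuracy $\varepsilon/2$ to each batch gives failure probability $2e^{-h_2\varepsilon^2/2}$ per batch. A union bound over the $\lceil|T|/\beta\rceil\leq 2K/(\varepsilon\beta)$ batches, the $d$ axes and both endpoints gives total failure at most $\frac{8dK}{\varepsilon\beta}e^{-h_2\varepsilon^2/2}$, which is at most $p_F$ for the stated $h_2$. Let $\mathcal E$ be the resulting good event, on which every grid value is within $\varepsilon/2$ of $F^i_a$.

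Next, on $\mathcal E$ I bound the error at an arbitrary $c\in C$. For the upper bound, let the maximiser be $t^*$ with $t^*_i\leq c_i$. Then $\widehat F^i_a(c)\leq F^i_a(t^*)+\varepsilon/2\leq F^i_a(c)+\varepsilon/2$ by monotonicity. For the lower bound I need a grid point $t$ with $t_i\leq c_i$ and $F^i_a(c)-F^i_a(t)\leq\varepsilon/2$. Then $\widehat F^i_a(c)\geq\widehat F^i_{a,B(t)}(t)\geq F^i_a(t)-\varepsilon/2\geq F^i_a(c)-\varepsilon$.

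The main obstacle is this last interpolation step. The issue is that $T$ is built by ${\qsl}$ with a uniform $G$ over candidates in $[0,1]^d$, and it is unclear on which coordinate that $G$ acts. I would read $\text{Uniform}$ as $G(c)=c_i$, the uniform CDF on axis $i$, and take $T$ to be the union of the $d$ per-axis calls. The size bound would then need to absorb a factor $d$, or else the statement's $|T|\leq 2K/\varepsilon$ is meant per axis. With this reading, \Cref{lem:lefty} applied to $G(c)=c_i$ gives one of two cases. In the first, some $u\in T$ has $u_i\leq c_i<u_i+\varepsilon/(2K)$, and \Cref{lem:lipschitz-cdf} then gives $F^i_a(c)-F^i_a(u)\leq K\cdot\varepsilon/(2K)=\varepsilon/2$. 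In the second, $c_i<\varepsilon/(2K)$; taking the convention value $0$, we have $F^i_a(c)\leq Kc_i<\varepsilon/2$ because $F^i_a(0^-)=0$ under absolute continuity. This handles the bottom band. The same arguments with $b$ in place of $a$ give the bound for $\widehat F^i_b$.

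Finally, the query and pool accounting is immediate. Each pooled voter is resolved on one batch of at most $\beta$ grid points, so \Cref{prop:queryingqueries} gives $q_F=\Oh(d\log\beta)$. The pool size is $n_F=\lceil|T|/\beta\rceil h_2$, and (PF) with error $\varepsilon$ holds on $\mathcal E$, i.e.\ with probability at least $1-p_F$.
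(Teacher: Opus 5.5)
Your argument is essentially the paper's: DKW at accuracy $\varepsilon/2$ on each batch, a union bound, the running maximum stitched as in \Cref{lem:logk-block}, and an $\varepsilon/2$ Lipschitz interpolation across a grid gap. The paper packages the last step as one triangle inequality through the left-extension $\lefty{c}{T}$. Your split into two sides is slightly sharper: the upper side costs only $\varepsilon/2$, since the maximiser lies to the left and $F^i_a$ is monotone. It also handles the bottom band $c_i<\varepsilon/(2K)$, where $\lefty{c}{T}$ need not exist and the paper says nothing.

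Your worry about $\text{Uniform}$ is well founded. The paper asserts that a single \texttt{QuantSel} call gives every $c$ a grid point within $\varepsilon/(2K)$ below $c_i$ on every axis, while keeping $|T|\le 2K/\varepsilon$. \Cref{lem:lefty} does not deliver this for $d\ge 2$: candidates spread along different coordinate lines need different grid points. So per-axis grids $T_i$ are the right reading.

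What remains is to make your union bound match that reading. You used $\lceil|T|/\beta\rceil\le 2K/(\varepsilon\beta)$, but $|T|=|\bigcup_i T_i|$ can be about $2dK/\varepsilon$. Because your running maximum for axis $i$ ranges over all of $T$, every batch must be good on every axis. With the stated $h_2$ you therefore only get failure probability of order $d\,p_F$.

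The cleanest fix is to batch each $T_i$ separately and let $\widehat F^i_a,\widehat F^i_b$ take their running maxima over $T_i$ only. Then only the $2\sum_i\lceil|T_i|/\beta\rceil\lesssim 4dK/(\varepsilon\beta)$ axis--batch--endpoint events must hold, each failing with probability at most $2e^{-h_2\varepsilon^2/2}$. This recovers the stated $h_2$, up to the same ceiling rounding the paper also ignores.

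Alternatively, add $\log d$ inside $h_2$. That is harmless for \Cref{cor:lipschitz}, which only uses $h_2=\Oh(\log(d/p_F)/\varepsilon^2)$.
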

\addtocounter{theorem}{-1}
\endgroup
\begin{proof}
    Since ${\qsl}$ places grid points at the quantile levels $\varepsilon/2K,\,2\varepsilon/2K,\dots$ of the uniform distribution, we have $|T|\leq 2K/\varepsilon$ and, by coverage (\Cref{lem:lefty}), every $c\in C$ satisfies $|c_i-(\lefty{c}{T})_i|\leq\varepsilon/2K$ on each axis $i$.

    Partition $T$ into batches of at most $\beta$ grid points; serve batch $B$ with its own fresh pool $V_B$ of $h_2$ voters, resolved only on $B$, so each pooled voter answers $\Oh(d\log\beta)$ queries (\Cref{prop:queryingqueries}).
Exactly as in the $\Oh(d\log k)$ estimator (\Cref{thm:logk-estimator} and \Cref{app:logk}), resolving $v\in V_B$ on $B$ determines $\mathds{1}[a^i_v\leq c_i]$ for every grid point $c\in B$ (including voters whose endpoint lies outside $B$ on axis $i$), so the per-batch value $\widehat{F}^i_{a,B}(c)=\frac{1}{h_2}|\{v\in V_B:a^i_v\leq c_i\}|$ is a genuine empirical CDF of $h_2$ i.i.d.\ samples.
We report the running maximum over the grid,
    \[
        \widehat{F}^i_a(c):=\max\left\{\widehat{F}^i_{a,B(c')}(c'):c'\in T,\ c'_i\leq c_i\right\},
    \]
    which is non-decreasing and, for $c\notin T$, coincides with the left-extension $\widehat{F}^i_a(\lefty{c}{T})$; likewise for $\widehat{F}^i_b$.

    By the Dvoretzky--Kiefer--Wolfowitz inequality (\Cref{fact:dkw}) each per-batch empirical CDF deviates from $F^i_a$ by $\varepsilon/2$ with probability at most $2\exp(-2h_2(\varepsilon/2)^2)=2\exp(-h_2\varepsilon^2/2)$.
Union bounding over the $\lceil|T|/\beta\rceil$ batches, the $d$ dimensions and both endpoints, the probability that any batch deviates by $\varepsilon/2$ is at most $4d\lceil|T|/\beta\rceil\exp(-h_2\varepsilon^2/2)\leq p_F$, using $h_2\geq 2\log(4d\lceil|T|/\beta\rceil/p_F)/\varepsilon^2$.
Condition on the complementary good event $\mathcal{E}$.
As in \Cref{lem:logk-block}, on $\mathcal{E}$ the running maximum is within $\varepsilon/2$ of $F^i_a$ at every grid point of $T$: for the grid point $c^*$ attaining the maximum (with $c^*_i\leq c_i$, possibly in another batch, so $\mathcal{E}$ must range over \emph{all} batches), monotonicity gives $\widehat{F}^i_{a,B(c^*)}(c^*)\leq F^i_a(c^*)+\varepsilon/2\leq F^i_a(c)+\varepsilon/2$ for $c\in T$, while $\widehat{F}^i_a(c)\geq\widehat{F}^i_{a,B(c)}(c)\geq F^i_a(c)-\varepsilon/2$.

    Finally, on $\mathcal{E}$, for any $c\in C$ the left-extension gives
    \begin{multline*}
        |\widehat{F}^i_a(c)-F^i_a(c)|=|\widehat{F}^i_a(\lefty{c}{T})-F^i_a(c)|\leq|\widehat{F}^i_a(\lefty{c}{T})-F^i_a(\lefty{c}{T})|+|F^i_a(\lefty{c}{T})-F^i_a(c)|\\
        \leq\varepsilon/2+K\,|(\lefty{c}{T})_i-c_i|\leq \varepsilon/2+K\cdot\varepsilon/2K=\varepsilon,
    \end{multline*}
    using $\lefty{c}{T}\in T$, the grid accuracy above, and \Cref{lem:lipschitz-cdf}; the same holds for $\widehat{F}^i_b$.
Hence (PF) holds with error $\varepsilon$.
The pool is $\lceil|T|/\beta\rceil$ disjoint sets of $h_2$ voters; $|T|\leq 2K/\varepsilon$ makes the per-voter $\Oh(d\log\beta)$ load independent of $m$ and the pool free of any polynomial-in-$m$ factor, the latter depending on $m$ only through the $\log(1/p_F)$ in $h_2$ (a $\log\log m$ once $p_F=\Theta(1/\log m)$ is fixed in \Cref{sec:synthesis}).
\end{proof}

%%%%%%%%%%%%%%%%%%%%%%%%%%%%%%%%%%%%%%%%%%%%%%%%%%%%%
\section{Proofs from \Cref{sec:synthesis}}
\label{app:synthesis}
We restate and prove the four end-to-end corollaries of \Cref{sec:synthesis}.
Each instantiates \Cref{fact:amortized} with one W-selection and one $\widehat{F}$-estimation module, whose pool sizes and per-voter query loads are read off from \Cref{sec:wselection,sec:festimation,sec:lipschitz}.

\begingroup
\def\thecorollary{\ref{cor:knowndist}}
\def\theHcorollary{restate-cor-knowndist}
\begin{corollary}[Known distribution]
\knowndiststatement
\end{corollary}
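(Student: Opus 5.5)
We instantiate \Cref{fact:amortized} with PGJCR as the W-selection module and the exact per-dimension CDFs $\widehat{F}^i_a=F^i_a$, $\widehat{F}^i_b=F^i_b$ as the $\widehat{F}$-estimation module. By \Cref{thm:pgjcr}, PGJCR uses no pool and no queries ($n_W=q_W=0$, $p_W=0$) and returns $\W$ with $|\W|\leq k$ satisfying (PW) with margin $\delta=1/(k(k+1))$. By \Cref{def:PF}, the exact CDFs satisfy (PF) with $\varepsilon=0$, with $n_F=q_F=0$ and $p_F=0$. Correctness (the output is EJR+) then follows unconditionally from \Cref{thm:correctness}.

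Next we check the spacing hypothesis of \Cref{thm:multidim}. With $\varepsilon=0$ we get $\Delta=\delta/(4d)=1/(4dk(k+1))$. Since $k+1\leq 2k$, this is at least $1/(8dk^2)=\alpha/(2dk^2)$ with the constant $\alpha=1/4$. Equivalently, $\gamma=\tfrac12(\delta-4d\varepsilon)=1/(2k(k+1))\geq 1/(4k^2)$. So \Cref{thm:multidim} applies whenever $n=\Omega(k^4\log(\Lambda\log m))$. It gives a verification failure probability $\eta\leq\Lambda k\exp(-2n\alpha^2/k^4)=\Oh(1/\log m)$ and an expected per-voter cost of $\Oh(d\log dk)$. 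Plugging into \Cref{fact:amortized} with $n_Wq_W=n_Fq_F=0$ and $p_W=p_F=0$, the condition $n=\Omega((n_Wq_W+n_Fq_F)/(d\log dk))$ is vacuous. The amortized bound is therefore $\Oh(d\log dk)+\eta\cdot\Oh(d\log m)=\Oh(d\log dk)$.

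For the every-voter (w.h.p.) claim, note that no voter is pooled, so each voter answers only the single \texttt{resolve} pass over $P$ unless verification fails. By \Cref{prop:queryingqueries}, that pass costs at most $2d\lceil\log_2(|P|+1)\rceil$ queries. Here $|P|\leq k+2d/\Delta=\Oh(d^2k^2)$, so the cost is $\Oh(d\log dk)$. Verification fails only with probability $\eta=\Oh(1/\log m)=o(1)$. Hence with probability $1-\Oh(1/\log m)$, every voter is asked $\Oh(d\log dk)$ queries, which is the stronger sense of \Cref{rem:twosenses}.

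The only real point of care is the constant bookkeeping in the second paragraph. We must confirm that the PGJCR margin $1/(k(k+1))$ yields $\Delta\geq\alpha/(2dk^2)$ with a fixed constant $\alpha$, so that $\log\Lambda=\Oh(d\log dk)$ and the threshold $n=\Omega(k^4\log(\Lambda\log m))$ is exactly the one stated. Everything else is direct substitution.
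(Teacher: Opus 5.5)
Your proposal is correct and follows essentially the same route as the paper. You plug PGJCR's margin $\delta=1/(k(k+1))$ and the exact CDFs ($\varepsilon=0$) into \Cref{thm:multidim}, verify the spacing hypothesis with $\alpha=1/4$ (your check $\Delta=1/(4dk(k+1))\geq 1/(8dk^2)$ is exactly the paper's condition $\delta-4d\varepsilon\geq 2\alpha/k^2$), and observe that with no pooled voters \Cref{fact:amortized} reduces to the verification floor, so every voter pays only the $\Oh(d\log dk)$ verification pass outside the $\Oh(1/\log m)$ failure event.
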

\addtocounter{theorem}{-1}
\endgroup
\begin{proof}
PGJCR supplies (PW) with margin $\delta=1/(k(k+1))$ (\Cref{thm:pgjcr}), and the exact per-dimension CDFs supply (PF) with error $\varepsilon=0$; together they meet the framework's hypothesis $\delta-4d\varepsilon\geq 2\alpha/k^2$ (\Cref{thm:multidim}) for any $\alpha\leq 1/4$.
Neither module queries voters ($n_W=n_F=0$), so \Cref{fact:amortized} reduces to the framework floor alone, $n=\Omega(k^4\log(\Lambda\log m))$, and every voter pays only the $\Oh(d\log dk)$ verification pass --- hence the per-voter bound holds for every voter w.h.p., not merely in expectation.
By \Cref{lem:cells} the cell technique gives $\log\Lambda=\Oh(d\log dk)$ (improving on the naive count of $m$ candidates once $d\log dk\Oh(\log m)$, i.e.\ $\Lambda<m$), so $\log(\Lambda\log m)=\Oh(d\log dk+\log\log m)$ and $n=\Omega\left(k^4(d\log dk+\log\log m)\right)$, with $m$-dependence only $\log\log m$.
\end{proof}

\begingroup
\def\thecorollary{\ref{cor:logm}}
\def\theHcorollary{restate-cor-logm}
\begin{corollary}[Unknown distribution, $\Oh(\log m)$ budget]
\logmstatement
\end{corollary}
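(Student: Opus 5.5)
The plan is to instantiate \Cref{fact:amortized} with NGJCR (\Cref{thm:ngjcr}) at batch size $\beta=m$ as the W-selection module and the batch estimator (\Cref{thm:logk-estimator}) at $\beta=m$ as the $\widehat{F}$-estimation module. We fix $p_W=p_F=\Theta(1/\log m)$ and take accuracy $\varepsilon=\Theta(1/(dk^2))$.

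The first step is to choose the parameters so that \Cref{thm:multidim} applies. Take $\delta_1=1/(k(k+1))-\alpha'/k^2$ for a small constant $\alpha'$, say $\alpha'=1/8$. This keeps $\delta_1=\Theta(1/k^2)$ and satisfies the NGJCR hypothesis. Next take $\varepsilon=c_0/(dk^2)$ with $c_0$ small enough that $\delta_1-4d\varepsilon\geq 2\alpha/k^2$ for some constant $\alpha>0$. Then $\Delta=\delta_1/(4d)-\varepsilon\geq\alpha/(2dk^2)$, which is exactly the spacing hypothesis of \Cref{thm:multidim}. Under this hypothesis $|P|=\Oh(d^2k^2)$ and $\log\Lambda=\Oh(d\log dk)$ by \Cref{lem:cells}. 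The verification failure probability $\eta$ is $\Oh(1/\log m)$ once $n=\Omega(k^4\log(\Lambda\log m))=\Omega(k^4(d\log dk+\log\log m))$. I expect this term to be dominated by the pool terms below.

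The second step reads off the pool budgets. For NGJCR with $\beta=\Theta(m)$ and $\alpha$ constant, \Cref{thm:ngjcr} gives $n_W=\Theta(h_1k)$ with $h_1=\Theta(k^4\log(mk/p_W))=\Theta(k^4\log m)$, assuming $\log(mk\log m)=\Oh(\log m)$, i.e.\ $k\le\mathrm{poly}(m)$. Hence $n_W=\Theta(k^5\log m)$ and $q_W=\Oh(d\log m)$. For the batch estimator, $n_F=h_2=\log(4d\log m/p_F)/(2\varepsilon^2)=\Theta(d^2k^4\log(d\log m))$ and $q_F=\Oh(d\log m)$. \Cref{fact:amortized} needs $n=\Omega((n_Wq_W+n_Fq_F)/(d\log dk))$. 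This requires
\[
n=\Omega\left(k^5\log^2 m+d^2k^4\log m\log(d\log m)\right),
\]
where the $1/\log dk$ savings are simply dropped. The second term is $\Oh(d^2k^4\log m\log\log m)$ after absorbing $\log d$ (using $\log d\lesssim\log\log m$ or folding it into the stated form), which matches the claimed threshold.

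The final step combines the failure events. By \Cref{rem:ngjcr-framework} the fallback probability is at most $\eta+p_W+p_F=\Oh(1/\log m)$. The fallback term in \Cref{fact:amortized} is therefore $\Oh(1/\log m)\cdot\Oh(d\log m)=\Oh(d)$ per voter. The amortized expectation is then $\Oh(d\log dk)$. Correctness (EJR+) is unconditional by \Cref{thm:correctness}.

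The main obstacle is bookkeeping rather than mathematics. First, the margins must be chained so that the NGJCR margin $\delta_1$, the estimator error $\varepsilon$, and the framework's $\alpha$ are simultaneously compatible with $\Delta\ge\alpha/(2dk^2)$. Second, the $\log(1/p_W)$, $\log(1/p_F)$ and $\log d$ factors must be confirmed to collapse to the stated $\log m$ and $\log\log m$ forms; I would state the mild $k,d\le\mathrm{poly}(m)$ assumption explicitly if needed. Third, the pools must be disjoint from, or simply not count toward, the verification voters. This changes $n$ only by a constant factor under the threshold.
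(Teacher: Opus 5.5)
This is essentially the paper's proof: you instantiate \Cref{fact:amortized} with single-batch NGJCR ($n_W=\Theta(k^5\log m)$) and the single-batch estimator ($n_F=\Theta(d^2k^4\log(d\log m))$), both at $\Oh(d\log m)$ load, then divide by $d\log dk$ and add the framework floor, and your explicit chaining of $\delta_1,\varepsilon,\alpha$ fills in a step the paper leaves implicit. Two loose ends close without any extra assumption: the stray $\log d$ disappears if you keep the $1/\log dk$ factor you dropped, since $\log(d\log m)/\log dk\leq 1+\log\log m$ for $\log dk\geq 1$; and the floor is absorbed unconditionally via $\Lambda\leq m$, as the paper does, because $k^4\log(\Lambda\log m)\leq 2k^4\log m\leq k^5\log^2 m$.
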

\addtocounter{theorem}{-1}
\endgroup
\begin{proof}
With $\beta=\Theta(m)$, a single batch covering all of $C$, NGJCR draws a fresh pool of $n_W=\Theta(h_1 k)=\Theta(k^5\log m)$ voters, each asked $q_W=\Oh(d\log m)$ \planar\ queries, and supplies (PW) at a margin meeting the framework's requirement (\Cref{thm:ngjcr}); the batch estimator, also on a single batch, draws $n_F=h_2=\Theta(d^2k^4\log(d\log m))$ voters at $q_F=\Oh(d\log m)$ each (\Cref{thm:logk-estimator}).
Hence $$n_W q_W+n_F q_F=\Oh\left(d\,k^4\log m\,(k\log m+d^2\log(d\log m))\right);$$ dividing by $d\log dk$ and adding the framework floor $\Omega(k^4\log(\Lambda\log m))$ (\Cref{thm:multidim}) gives the exact threshold
\[
    n=\Omega\left(k^4\log(\Lambda\log m)+k^5\log^2 m+d^2k^4\log m\log\log m\right),
\]
and \Cref{fact:amortized} yields the amortized bound.
Absorbing the floor via $k^4\log(\Lambda\log m)\leq 2k^4\log m\leq k^5\log^2 m$ (as $\Lambda\leq m$) gives the simplified sufficient form $n=\Omega\left(k^5\log^2 m+d^2k^4\log m\log\log m\right)$.
\end{proof}

\begingroup
\def\thecorollary{\ref{cor:logk}}
\def\theHcorollary{restate-cor-logk}
\begin{corollary}[Unknown distribution, $\Oh(\log k)$ budget]
\logkstatement
\end{corollary}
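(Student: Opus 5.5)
The plan is to mirror the proof of \Cref{cor:logm}, instantiating \Cref{fact:amortized} with NGJCR (\Cref{thm:ngjcr}) and the batch estimator (\Cref{thm:logk-estimator}), both at $\beta=\Theta(k^5)$. The assumption $m=\Omega(k^5)$ guarantees $\beta\leq m$, so both theorems apply. Following \Cref{sec:synthesis}, I fix the standing choices $p_W,p_F=\Theta(1/\log m)$ and $\varepsilon=\Theta(1/(dk^2))$. For NGJCR I take $\delta_1=1/(k(k+1))-\alpha/k^2$ with a small constant $\alpha$, so that $\delta_1-4d\varepsilon=\Theta(1/k^2)$ and $\Delta=\delta_1/(4d)-\varepsilon\geq\alpha'/(2dk^2)$ for some constant $\alpha'$. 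This meets the hypothesis of \Cref{thm:multidim}. On the success event, verification fails with probability $\eta=\Oh(1/\log m)$ once $n=\Omega(k^4\log(\Lambda\log m))$, and the fallback contributes only $\Oh(d)$ per voter by \Cref{rem:ngjcr-framework}.

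Next I read off the pool sizes. For NGJCR, \Cref{thm:ngjcr} gives $n_W=\Theta(h_1km/\beta)$ with $h_1=\Theta(k^4\log(mk\log m))=\Theta(k^4\log m)$, using $k\leq m$. With $\beta=\Theta(k^5)$ this is $n_W=\Theta(m\log m)$, and $q_W=\Oh(d\log\beta)=\Oh(d\log k)$. For the estimator, \Cref{thm:logk-estimator} gives $h_2=\Theta(d^2k^4\log(dm\log m/\beta))=\Oh(d^2k^4\log(dm))$. The pool is then $n_F=(m/\beta)h_2=\Oh((d^2m/k)\log(dm))$ with $q_F=\Oh(d\log k)$.

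The key point is that the amortization condition is now about the pools themselves and not about query totals. Because $q_W,q_F=\Oh(d\log k)\subseteq\Oh(d\log dk)$, the condition $n=\Omega((n_Wq_W+n_Fq_F)/(d\log dk))$ reduces to $n=\Omega(n_W+n_F)$. That requirement is genuinely needed: the pools must be disjoint subsets of the electorate. Adding the framework floor and bounding $\log\Lambda\leq\log m$ gives $k^4\log(\Lambda\log m)=\Oh(k^4\log m)$. Together these yield $n=\Omega((k^4+m)\log m+(d^2m/k)\log(dm))$, as stated.

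For the every-voter w.h.p.\ claim, I argue as in \Cref{rem:twosenses}. A pooled voter answers $\Oh(d\log k)$ queries in its module, and every voter answers $\Oh(d\log dk)$ in the verification pass. A fallback happens only with probability $p_W+p_F+\eta=\Oh(1/\log m)$, so w.h.p.\ no voter exceeds $\Oh(d\log dk)$. The step I expect to need the most care is the NGJCR pool count. The batching is redone at every level and selection round, so I must confirm that $\Theta(h_1km/\beta)$ is the correct number of rounds times batches. To check this I would reuse the round bound from the proof of \Cref{thm:netngjcr}: at most $2k$ rounds occur on the good event, and each round uses $\lceil m/\beta\rceil$ batches of $h_1$ fresh voters. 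I would also check that the union bound inside $h_1$ still covers the $\Oh(km)$ (candidate, round) pairs.
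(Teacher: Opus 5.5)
Your proposal is correct and follows essentially the same route as the paper: instantiate \Cref{fact:amortized} with NGJCR and the batch estimator at $\beta=\Theta(k^5)$, read off $n_W=\Theta(m\log m)$ and $n_F=\Oh((d^2m/k)\log(dm))$ at load $\Oh(d\log k)$ per pooled voter, and absorb the framework floor using $\Lambda\leq m$. The paper takes $n_W=\Theta(h_1km/\beta)$ directly from \Cref{thm:ngjcr}, so your explicit choice of $\delta_1$ and $\varepsilon$, your every-voter argument, and your proposed $2k$-round check of the NGJCR pool count spell out steps that the paper's proof leaves implicit.
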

\addtocounter{theorem}{-1}
\endgroup
\begin{proof}
With the candidates split into $\Theta(m/\beta)$ batches of size $\beta=\Theta(k^5)$, each served by a fresh pool, every pooled voter is asked only $q_W=q_F=\Oh(d\log\beta)=\Oh(d\log k)$ \planar\ queries --- the promised $\Oh(\log k)$ budget.
NGJCR's pool is then $n_W=\Theta(h_1 km/\beta)=\Theta(m\log m)$ (\Cref{thm:ngjcr}) and the estimator's is $n_F=\Theta((m/\beta)h_2)=\Oh\big((d^2m/k)\log(dm)\big)$ (\Cref{thm:logk-estimator}), where $h_2=\Theta(d^2k^4\log(4dm/(\beta p_F)))$ and $\log(4dm/(\beta p_F))=\Oh(\log(dm))$ at $p_F=\Theta(1/\log m)$.
Hence $n_W q_W=\Oh(dm\log m\log k)$ and $n_F q_F=\Oh\big((d^3m/k)\log(dm)\log k\big)$; dividing by $d\log dk$ and adding the framework floor $\Omega(k^4\log(\Lambda\log m))$ (\Cref{thm:multidim}) gives the exact threshold
\[
    n=\Omega\left(k^4\log(\Lambda\log m)+\Big(m\log m+\frac{d^2m}{k}\log(dm)\Big)\right),
\]
and \Cref{fact:amortized} yields the amortized bound.
Since $\log(\Lambda\log m)\leq 2\log m$ (as $\Lambda\leq m$), this is implied by the simplified sufficient form $n=\Omega\left((k^4+m)\log m+(d^2m/k)\log(dm)\right)$ reported in the main body.
\end{proof}

\begingroup
\def\thecorollary{\ref{cor:lipschitz}}
\def\theHcorollary{restate-cor-lipschitz}
\begin{corollary}[Lipschitz distribution, single pool]
\lipschitzcorstatement
\end{corollary}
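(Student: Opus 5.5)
The plan is to instantiate \Cref{fact:amortized} with net-NGJCR (\Cref{thm:netngjcr}) as the W-selection module and the Lipschitz grid estimator (\Cref{thm:lipschitz-estimator}) as the $\widehat{F}$-estimation module, each on a single pool. This mirrors the proofs of \Cref{cor:logm,cor:logk}.

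\textbf{Module parameters.} Fix the standing choices $\varepsilon=\Theta(1/(dk^2))$ and $p_W=p_F=\Theta(1/\log m)$. Take $\delta_1=1/(k(k+1))-\alpha/k^2$ for a small constant $\alpha$. Check that $\delta_1-4d\varepsilon\geq 2\alpha'/k^2$ for some constant $\alpha'$ when the constant in $\varepsilon$ is small enough. This makes $\Delta=\delta_1/(4d)-\varepsilon\geq\alpha'/(2dk^2)$, so \Cref{thm:multidim} applies. It gives $\eta=\Oh(1/\log m)$ once $n=\Omega(k^4\log(\Lambda\log m))$, and by \Cref{lem:cells} $\log\Lambda=\Oh(d\log dk)$.

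\textbf{Net-NGJCR pool.} Take $\beta=|\mathcal{N}|$. \Cref{thm:netngjcr} gives $q_W=\Oh(d^2\log(Kdk))$ and
\[
n_W=\Oh\left(k^5(d\log(Kdk)+\log\log m)\right).
\]

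\textbf{Grid-estimator pool.} Take $\beta=|T|\leq 2K/\varepsilon=\Oh(Kdk^2)$. \Cref{thm:lipschitz-estimator} gives $q_F=\Oh(d\log(Kdk))$ and $n_F=h_2=\Oh\left(d^2k^4\log(dK\log m)\right)$, since the $\beta$ in the log cancels against $|T|$ up to constants.

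\textbf{Threshold.} Form $(n_Wq_W+n_Fq_F)/(d\log dk)$ and bound it. The product $n_Wq_W=\Oh\left(d^2k^5\log(Kdk)\,(d\log(Kdk)+\log\log m)\right)$. After dividing by $d\log dk$, this is at most $d^2k^5\log K\,(\log(Kdk)+\log(d\log m))$ up to constants. This uses $\log(Kdk)/\log(dk)=\Oh(\log K)$ for $K\geq 2$, together with $d\log(Kdk)+\log\log m\leq d(\log(Kdk)+\log(d\log m))$. The $n_F q_F$ term is analogous and dominated, since $d^3k^4\log(Kdk)\log(dK\log m)/(d\log dk)$ is smaller. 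The framework floor $k^4(d\log dk+\log\log m)$ is also dominated.

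\textbf{Conclusion.} With $p_W,p_F,\eta=\Oh(1/\log m)$, the fallback term in \Cref{fact:amortized} contributes $\Oh(d)$. The amortized expected cost is therefore $\Oh(d\log dk)$ per voter, and correctness (EJR+) holds by \Cref{thm:correctness}.

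The main obstacle is the bookkeeping in the threshold step. Absorbing the ratio $\log(Kdk)/\log(dk)$ into a $\log K$ factor needs care for small $K$; I would simply assume $K\geq 2$, or replace $\log K$ by $\log(K+2)$. I must also confirm that the $\log(1/p_W)=\log\log m$ term enters only additively inside the bracket, not multiplied by $d^2$ beyond what the stated bound allows.
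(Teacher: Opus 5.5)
Your proposal is correct and follows the paper's proof essentially step for step. Both instantiate \Cref{fact:amortized} with single-pool net-NGJCR and the grid estimator, bound $n_Wq_W$ and $n_Fq_F$, divide by $d\log dk$ while absorbing $\log(Kdk)/\log dk$ into a $\log K$ factor (the paper writes $1+\log K$), and note that the framework floor is dominated because $\log\Lambda=\Oh(d\log dk)$ by \Cref{lem:cells}. Your worry about small $K$ is moot: a density bounded by $K$ on $\mathcal{R}$, which has volume $2^{-d}$, forces $K\geq 2^d\geq 2$, so $\log K$ is automatically bounded below by a positive constant.
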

\addtocounter{theorem}{-1}
\endgroup
\begin{proof}
Net-NGJCR on a single pool ($\beta=|\mathcal{N}|$) uses $n_W=\Theta(k\,h_1)$ voters at $q_W=\Oh(d\log|\mathcal{N}|)$ each (\Cref{thm:netngjcr}), and the grid estimator on a single pool ($\beta=|T|$) uses $n_F=h_2$ voters at $q_F=\Oh(d\log|T|)$ each (\Cref{thm:lipschitz-estimator}).
The net $\mathcal{N}$ has $\Oh((Kdk^2)^d)$ cells and the grid $T$ has $\Oh(K/\varepsilon)=\Oh(Kdk^2)$ points, both independent of $m$, with $\log|\mathcal{N}|=\Oh(d\log(Kdk))$ and $\log|T|=\Oh(\log(Kdk))$.
With the standing choices $p_W=p_F=\Theta(1/\log m)$ (\Cref{fact:amortized}) and $\varepsilon=\Theta(1/dk^2)$, the pool sizes depend on $m$ only through a $\log\log m$ (from $\log(1/p_W),\log(1/p_F)$):
\[
    h_1=\Oh\big(k^4(\log(k|\mathcal{N}|)+\log\frac{1}{p_W})\big)=\Oh\big(k^4(d\log(Kdk)+\log\log m)\big),
    h_2=\Oh\big(\frac{\log(d/p_F)}{\varepsilon^2}\big)=\Oh\big(d^2k^4\log(d\log m)\big).
\]
Hence, with $q_W=\Oh(d\log|\mathcal{N}|)=\Oh(d^2\log(Kdk))$ and $q_F=\Oh(d\log|T|)=\Oh(d\log(Kdk))$,
\[
    n_W q_W=\Oh\big(d^2k^5\log(Kdk)\,(d\log(Kdk)+\log\log m)\big),\qquad
    n_F q_F=\Oh\big(d^3k^4\log(Kdk)\,\log(d\log m)\big).
\]
Dividing by $d\log dk$ and adding the framework floor $n=\Omega(k^4\log(\Lambda\log m))$ (\Cref{thm:multidim}, so that $\eta=\Oh(1/\log m)$) gives the exact threshold
\[
    n=\Omega\left(k^4\log(\Lambda\log m)+\frac{d^2k^5\log^2(Kdk)+d\,k^5\log(Kdk)\log\log m+d^2k^4\log(Kdk)\log(d\log m)}{\log dk}\right),
\]
and \Cref{fact:amortized} yields the amortized bound.
Each pool term carries a factor $\log(Kdk)/\log dk=1+\log K/\log dk\leq 1+\log K$ (using $\log dk\geq 1$); factoring it out and bounding $\log\log m\leq\log(d\log m)$, $dk^5\leq d^2k^5$, and $k^4\leq k^5$ collapses the three terms into $d^2k^5(1+\log K)(\log(Kdk)+\log(d\log m))$, which (as shown next) also dominates the framework floor; this gives the simplified sufficient form $n=\Omega\big(d^2k^5(1+\log K)(\log(Kdk)+\log(d\log m))\big)$.
By \Cref{lem:cells} the cell technique bounds $\log\Lambda=\Oh(d\log dk)$, so the verification floor $k^4\log(\Lambda\log m)=\Oh(k^4(d\log dk+\log\log m))$ is dominated by the net pool and never binding. This is what keeps the electorate free of any polynomial-in-$m$ factor: without the cell bound the floor would be $\Theta(k^4\log m)$, whereas here the whole threshold's only $m$-dependence is the $\log\log m$ inside $\log(d\log m)$.
The verification pass costs $\Oh(d\log dk)$ regardless of $K$, so the amortized per-voter bound is $K$-free; the heavier net load $q_W=\Oh(d^2\log(Kdk))$ is paid only by the fixed pool.
\end{proof}

\end{document}